\definecolor{darkblue}{RGB}{0,0,196}
\definecolor{darkgreen}{RGB}{0,120,0}
\newcommand{\bea}{\begin{eqnarray}}
\newcommand{\eea}{\end{eqnarray}}
\newcommand{\bel}[1]{\begin{eqnarray}\label{#1}}
\newcommand{\eel}{\end{eqnarray}}
\newcommand*{\myref}[2]{\hyperref[{#1}]{#2}} 
\begin{document}

 \author{Rafał Bistro{\'n}$^{1,2}$,}
 \author{Mykhailo Hontarenko$^{1}$ and}
 \author{Karol {\.Z}yczkowski$^{1,3}$}
 \affiliation{$^1$Institute of Theoretical Physics, Faculty of Physics, Astronomy and Applied Computer Science, Jagiellonian University, ul. Łojasiewicza 11,
30-348 Kraków, Poland}
 \affiliation{$^2$Doctoral School of Exact and Natural Sciences, Jagiellonian University, ul. Łojasiewicza 11, 30-348 Kraków, Poland}
 \affiliation{$^3$Center for Theoretical Physics (CFT), Polish Academy of Sciences,  Al. Lotnik{\'o}w 32/46, 02-668 Warszawa, Poland}

 \title{Bulk-boundary correspondence from Hyperinvariant tensor networks}

\date{Nov. 7, 2024}

\begin{abstract}
We introduce a tensor network designed to faithfully simulate the AdS/CFT correspondence, akin to the multi-scale entanglement renormalization ansatz (MERA), following hyper-invariant tensor network. 
The proposed construction integrates bulk indices within the network architecture to uphold the key features of the HaPPY code, including complementary recovery.
This framework accurately reproduces the boundary conformal field theory's (CFT) two- and three-point correlation functions, while considering the image of any bulk operator.  Furthermore, we provide an explicit methodology for calculating the correlation functions in an efficient manner. 
Our findings highlight the physical aspects of the relation between bulk and boundary within the
tensor network models, contributing to the understanding and simulation of holographic principles in quantum information.
\end{abstract}

\keywords{tensor networks,  holography, 
   bulk-boundary correspondence, correlations}
                              
\maketitle
\def\bra#1{\left\langle\,#1\,\right|}
\def\ket#1{\left|\,#1\,\right\rangle}
\def\id{\mathds{I}} 
\def\ra{\rangle}
\def\la{\langle}
\def\Tr{\text{Tr}}

\newtheorem{lemma}{Lemma}
\newtheorem{thm}{Theorem}
\newtheorem{cor}{Corollary}
\newtheorem{defn}{Definition}[section]

\newcommand{\raf}[1]{\textcolor{blue}{RB: #1}}
\newcommand{\mh}[2]{\textcolor{purple}{MH: #2}}

\section{Introduction}
Holographic quantum error-correction codes are a modern approach to study the deep connection between the notion of AdS/CFT correspondence, introduced in \cite{Maldacena1998, Witten1998}, and quantum information. 
In recent years, numerous significant ideas have emerged, for a recent review see  \cite{Jahn2021}. Among the most important of these ideas are  HaPPY codes \cite{HaPPy_code}, which rely on the mapping the bulk code space to the boundary by the isometry $\mathcal{V}:\mathcal{H}_{bulk} \xrightarrow{} \mathcal{H}_{boundary}$. The map $\mathcal{V}$ is realized through a network of \textit{perfect tensors} \cite{perfect_tensors}, which can be equivalently identified with absolutely maximally entangled (AME) states \cite{Goyeneche:2015fda,quantum_perfect_tensors}.

Because HaPPY codes are placed on the hyperbolic tiling of the Poincare disk and map bulk degrees of freedom to the boundary ones, they were seen as the first connection between holography and quantum error correction. Moreover, it was proven that HaPPY codes
preserve the area law \cite{ECP10}, 
as the Ryu--Takayanagi formula for the
entropy dependence between bulk and boundary is satisfied \cite{Ryu2006, HaPPy_code}.
Unfortunately, this model is \textit{too} ideal to simulate any physically interesting scenario, because due to the perfect tensor symmetries, any two-point correlations on the boundary become trivial \cite{ChunJun2022},
thus do not depend on the distance between the points analyzed.

From a physical standpoint, the main issue is that perfect tensor networks are too symmetric to encompass spatial directions and to reproduce realistic correlation functions between spatially separated points at the boundary. 
Some approaches to this problem were based on small "distortions" of perfect tensors \cite{bhattacharyya2016exploring, bhattacharyya2018tensor}, in which the first order of two-point correlation's expansion exhibits the desired behaviour.
In contrast entirely different tensor networks, without bulk indices, called MERA, can indeed simulate the conformal field theory by producing correct two- and three-point correlation functions \cite{Pfiefer2009}. Moreover, within this approach, one can obtain the central charge of the Virasoro algebra
\cite{Holzhey1994,Pfiefer2009},
using the von Neumann entropy in the continuum limit.

Motivated by MERA networks,  Evenbly introduced  the concept of hyper-invariant tensor networks (HTN) \cite{Evenbly2017}. This pioneering work from 2017 studied the entanglement properties of HTNs and introduced a suitable discrete counterpart of scaling, thereby showing a connection with CFT. Using this technique various quantities such as the central charge 
$c$ and scaling dimensions of primary fields were computed \cite{ChunJun2022, Jahn2021}.
An alternative approach \cite{Jahn2022}, aims to create CFT directly on the tensor network. This method extends the continuum notion of CFT to a discrete version, called Quasi-CFT (qCFT). 
Following this direction, the work \cite{Steinberg2022} proven that the construction of MERA-type tensor networks on hyperbolic tiling is at least a close approximation of continuum CFT, which is considered an example of qCFT \cite{Jahn2022}.
However, to the best of our knowledge, no previous construction captured bulk-boundary mapping and presented a rigorous discussion of correlations on the boundary.

The aim of this work is to introduce a 
tunable modification of the 
HaPPY code, which decreases the degree 
of its symmetry,
but leads to realistic correlation functions.
The proposed model connects them with nodes of
hyper-invariant tenor networks \cite{Evenbly2017, Steinberg2022},
constructed from dual unitary matrices
\cite{AWGG16,BKP19,BRRL24}.
Such a technique allows us to combine the benefits of a HaPPY code -- bulk to boundary mapping, with the hyper-invariant ones -- desired forms of correlation functions. 

The proposed construction satisfies the definition of block-perfect tensors \cite{harris2018calderbanksteaneshor},
which form a particular case of planar $k$-uniform tensors \cite{planar_k_uniform_states}.
However, as the present work was motivated by hyper-invariant tenor networks
and hyper-invariant codes \cite{Steinberg2023},
we prefer to stick to this nomenclature.

Within the presented approach, we managed to prove the desired behavior of two- and three-point correlation functions in large network limits, while considering the image of any bulk operator.
To derive these results we developed a methodology to conveniently and explicitly calculate desired correlation functions in an efficient manner.
Due to the freedom in choosing the building blocks in our construction, we managed to numerically scan a wide range of scaling dimensions $\Delta$, demonstrating the potential to tailor our model to a specific value, corresponding to physically interesting fields.

A  HaPPY code \cite{HaPPy_code} is determined by the geometry of the network and is applicable
for any dimension $D$, provided a perfect tensor, with $n$ indices running from $1$ to $D$
does exist, which is equivalent to existence of AME$(n,D)$ states  \cite{HESGW18,HW19b}.
It is often sufficient to take $D=2$ and work with qubits. 
For instance, the original code \cite{HaPPy_code} is based on a perfect tensor with six indices each running from 1 to 2,
which corresponds to an AME state of six qubits.
However, choosing a larger local dimension $D$ 
significantly increases the space of possibilities.
As the construction of hyper-invariant tensor network \cite{Evenbly2017} employ at least pairs of qubits $d^2 = 4$,
we present a suitable network also in the setting $D = d^2 = 4$. However, any other local dimension $d$ is suitable as well.

The paper is organized as follows.
In Section 2, we present the construction of a proposed tensor and discuss in detail the properties of each of its components.
Section 3 is devoted to physical properties streaming from the obtained tensor network. We start by mentioning the bound on a central charge of the conformal field theory mimicked in this way. Next, we focus on the main objective of this paper -- the correlation functions. While calculating correlation functions,  we thoroughly discuss a reduction of the tensor network and prove their desired forms in the limit of a large network size. 
Finally, in Section 4, we investigate an available range of scaling dimensions numerically.
In Section 5 we give final remarks and formulate some natural open problems.

In Appendix A we present a complete recipe for the construction of the proposed tensor network and discuss explicit examples, while in Appendix B we recall the precise definitions of hyper-invariant tensor networks and Evenbly codes.
The detailed discussion of double unitary matrices, serving as building blocks of the hyper-invariant part of the proposed construction, is presented in Appendix C.
In Appendix D we discuss a scaling dimension in the general case.
The final Appendix E concerns the numerical calculation of the three-point correlation function in large network limit.

\section{Tensor networks construction}\label{Sec:Node}

In this Section, we present a new construction of a network realizing the isometry between the Bulk and Boundary Hilbert spaces. Similarly, as well-established HaPPY codes \cite{HaPPy_code} and hyper-invariant tensor networks \cite{Evenbly2017}, we place our structure on a tessellation of Poincare disc, which corresponds to a constant time slice of Anti-de-Sitter (AdS) space of $2+1$ dimension. We choose standard $\{5,4\}$ tiling of regular pentagons, four of which meet at each vertex, see Fig. \ref{fig:Fig3} (b). This frame is filled, layer by layer, with tensors with one bulk index pointing "up" and five boundary indices, which contract with neighboring tensors. The free boundary indices in the last filled layer correspond to local subspaces of the boundary Hilbert space.

The motivation behind the construction of the proposed tensor is straightforward. We want to combine a perfect tensor $T$ with a hyper-invariant invariant tensor $F$, thus "reducing" the symmetries of a perfect tensor network in a minimal manner, necessary to reproduce key features of CFT.
A perfect tensor $T_{ijklmn}$ is such a tensor, that every permutation and grouping of its indices into two sets, ex. $V_{kl;\,ijmn} := T_{ijklmn}$, gives an isometry $V$.  

In order to reconstruct the physical properties of AdS space, we impose additional "rotation symmetry" conditions for a perfect tensor --  tensor should remain invariant under cyclic change of tensor indices, except the first, bulk one,  i.e. $T_{ijklmn} = T_{injklm}$ -- see Fig. \ref{fig:Fig1}(a). The symmetry of translations corresponds to the requirement for all nodes in the tensor network to be identical.
As an exemplar tensor of this form, we chose a perfect tensor of $6$ quarts ($D = d^2= 4$), corresponding to AME$(6,4)$ state, constructed with three orthogonal Latin hypercubes of order four \cite{Goyeneche:2015fda}, since its local preprocessing by permutation matrices is sufficient to obtain desired form.

The second element of our construction "frame" $F_{abcde}$ is inspired by a node of a hyper-invariant tensor network. Its key property is being planar $2$-uniform \cite{planar_k_uniform_states}: its contraction with its hermitian conjugate on $3$ neighbor pair of indices reduces it to identities on remaining pairs.
Since the exact definition of a hyper-invariant tensor network is quite complex, and not essential in further discussion we presented in the Appendix \myref{App:Hyperinv}{B}.
We propose a construction of such tensor in the spirit of \cite{Evenbly2017} with five copies of the dual-unitary matrix, see Figure \ref{fig:Fig1} (b) and (c), which is why we call it hyper-invariant.
To guarantee desired reductions of
hyper-invariant tensor frame we need only two simple identities, corresponding to two types of orthogonalities, satisfied by dual-unitary matrices, presented in Figure \ref{dual_uni}. Matrix with such properties can be realized on pair of qubits -- ququart ($d^2 = 4$), see example in following sections.
\begin{figure}[h!]
\centering

\begin{subfigure}[h]{0.16\textwidth}
    \subcaptionbox{}{
        \includegraphics[width=1.5in]{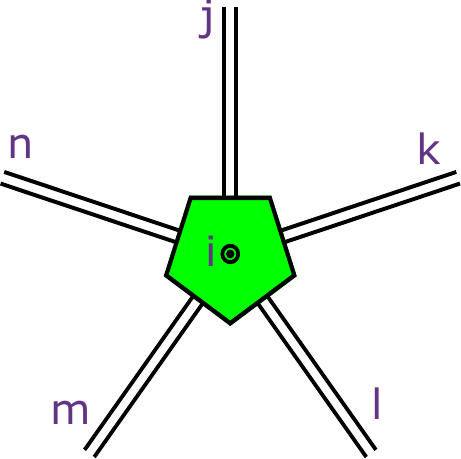}
    }
\end{subfigure}%
\hspace{0.5cm}
\begin{subfigure}[h]{0.3\textwidth}
    \subcaptionbox{}{
        \includegraphics[width=2.5in]{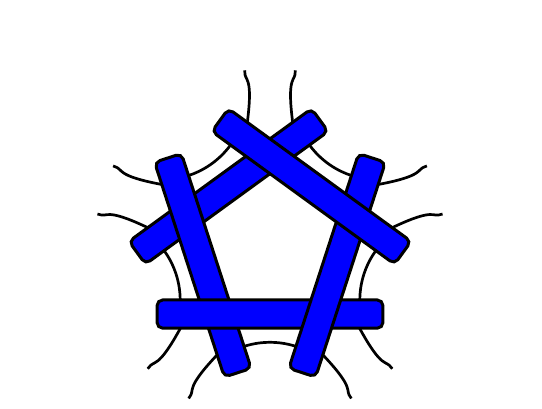}
    }
\end{subfigure}%
\hspace{1cm}
\begin{subfigure}[h]{0.3\textwidth}
    \subcaptionbox{}{
        \includegraphics[width=2in]{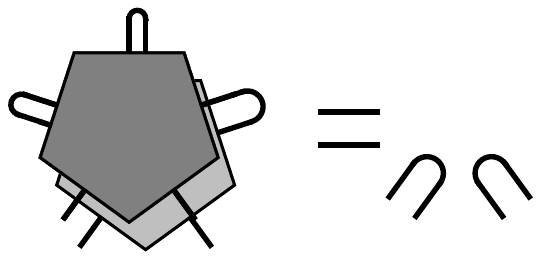}
    }
\end{subfigure} 
\caption{\textbf{(a)} Diagrammatic representation of a perfect tensor
of six subsystems of dimension $D=4$
obtained from $3$ orthogonal Latin hypercubes of order four
-- see Fig. \ref{hyper_orto_fig}.
The first index $i$, pointing in the "bulk direction",
is denoted as a dot, while all the others indices, $j,\dots,n$,
point in the "boundary" direction.
Following example from the text, each line represents a single qubit, 
so a ququart index is represented by a double line.
Appropriate permutation matrices applied on outgoing indices make this tensor 'rotationally invariant'
with respect to rotation of the pentagon by $2\pi/5$. 
\textbf{(b)} Hyper-invariant tensor "frame" constructed from five  unitaries  of order $D=4$ (blue rectangles). 
\textbf{(c)} Desired contraction rule of hyper-invariant frame \cite{planar_k_uniform_states}, pale color corresponds to conjugation.  Contraction of this tensor with its Hermitian conjugate on $3$ neighbor pair of indices reduces it to identities.}
\label{fig:Fig1}
\end{figure}

\begin{figure}[h]
\centering
\begin{subfigure}[h!]{0.4\textwidth}
\centering
 \subcaptionbox{}{
        \includegraphics[width=1.5in]{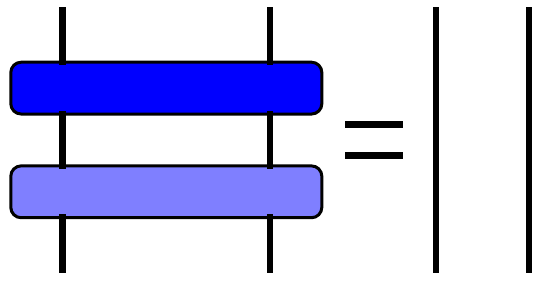}
    }
\end{subfigure}%
~
\begin{subfigure}[h!]{0.55\textwidth}
\centering
 \subcaptionbox{}{
        \includegraphics[width=3in]{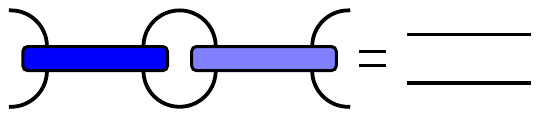}
    }
\end{subfigure} 
\\
\caption{Necessary properties for blue dual-unitaries. The pale colour corresponds to hermitian conjugation.}
\label{dual_uni}
\end{figure}

The final step in our model is to combine perfect tensor $T$ with hyper-invariant frame $F$ into one node $N$ by entangling them on outgoing indices with large unitaries, as presented in Figure \ref{fig:Fig3} (b).
This step is crucial because otherwise the entire tensor network would split up into two uninteresting networks, the first consisting of perfect tensors would be just a HaPPY code, and the second, hyper-invariant one, would have no connection to the bulk.
In the discussed example these unitaries act on pairs of ququarts, $D^2=d^4 = 16$.  

Thus we presented a simple recipe of a hyper-invariant tensor -- one network node, together with an example with local bulk dimension $d^2 = 4$ and boundary dimension $d^4 = 16$. From now on we will refer to all outgoing qubit indices in one direction as just one index.
The explicit construction of the proposed tensor network, together with examples is presented in  Appendix \myref{App:Tensor_node}{A}.

Constructed nodes have profitable reduction rules: The result of contracting the tensor with its conjugate on three neighboring boundary indices is an identity on all remaining pairs of indices.
This allows us to build a hyper-invariant tensor network, with appropriate reduction rules, but with possibly non-vanishing  correlation functions on the boundary. 

Throughout the paper, we consider a tensor network built in a vertex inflation manner. In the zeroth step, the network consists of only one node. In each consecutive step new tensor is added to the free boundary index of the previous layer and then an additional tensors are added to close each vertex with already three tensors, as presented in Figure \ref{fig:Fig3} (b).
While constructing a tensor network in such a regular way, each new tensor, when added, has at least $3$ neighboring non-contracted boundary indices, which is more than necessary to serve as isometry from bulk and $2$ contracted boundary indices.  
Thus the proposed construction is indeed an isometry from bulk into the boundary, by the same token as perfect tensors are.
Using the same arguments one can see that we chose the simplest possible tiling suitable for our construction. If the basic tile would be a square, instead of the pentagon, then the node which closes a loop around a vertex has only $2$ free boundary indices out of $4$ and the bulk one, thus it cannot be an isometric by itself, since it maps from the larger space into a smaller one. 
In our discussion we omit the annoying normalization of a hyper-invariant tensor as an isometry, to later restore it, if necessary. 

\begin{figure}[h!]
\centering
\begin{subfigure}[h!]{0.5\textwidth}
\centering
 \subcaptionbox{}{
        \includegraphics[width=3in]{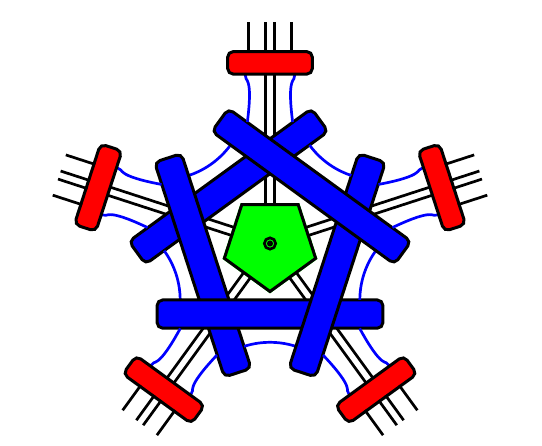}
    }
\end{subfigure}%
~
\begin{subfigure}[h!]{0.5\textwidth}
\centering
 \subcaptionbox{}{
       \includegraphics[width=3in]{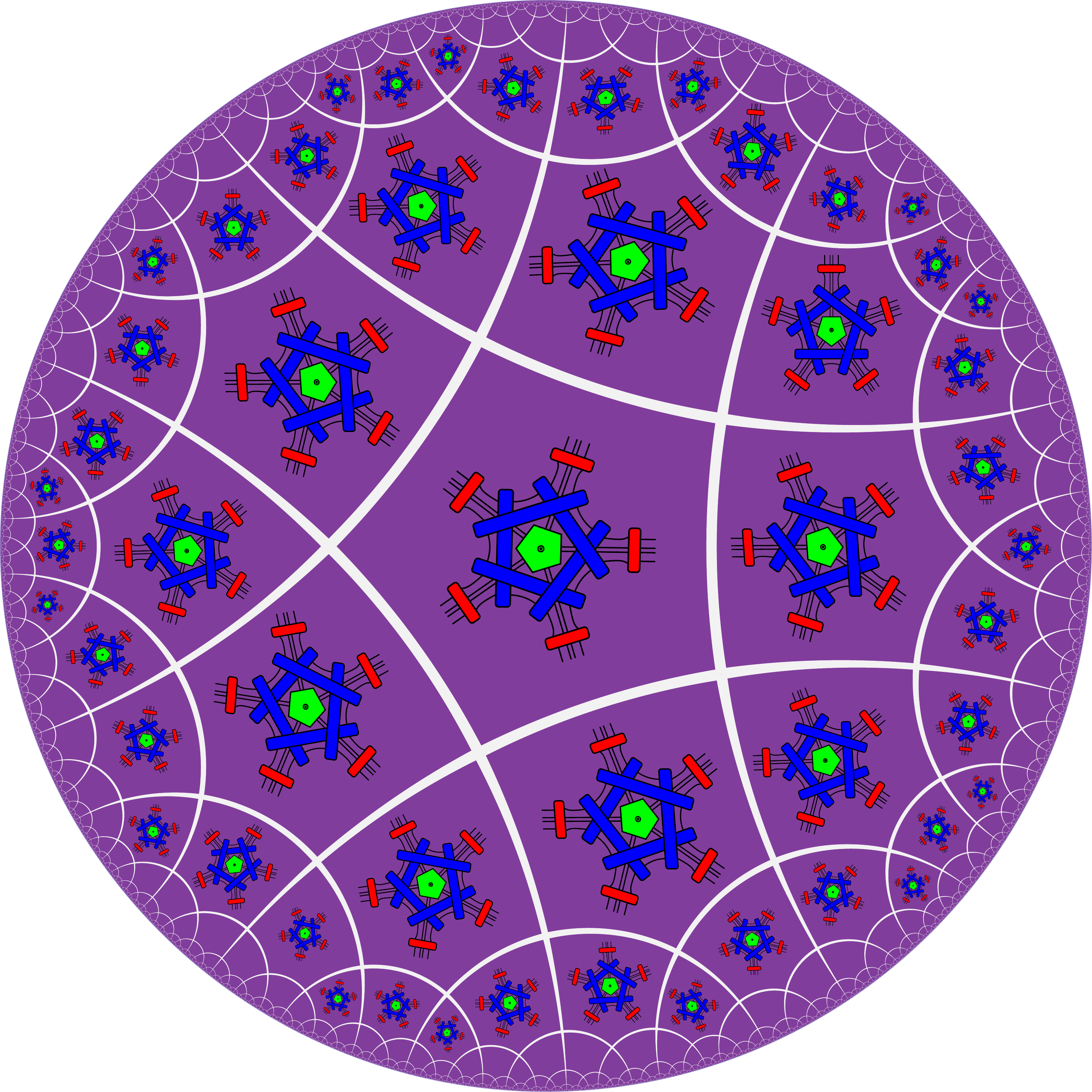}
    }
\end{subfigure} 
\\
\caption{
\textbf{(a)} Entire construction of a
hyper-invariant tensor. Green pentagon corresponds to a 
perfect tensor with six indices running from 1 to $D=4$,
 blue rectangle denote  dual-unitary matrices of order $D$, 
 forming a hyper-invariant frame,
while red unitaries of size $D^2=16$ entangle two elements in 
a single tensor node. 
Each line represents a single qubit,
 blue lines represent indices on which dual-unitaries act. 
 The bulk index, pointing "inward", is denoted 
 as a dot inside a green pentagon.
\textbf{(b)} Example of a constructed tensor network with $2$ layers and the central node. }
\label{fig:Fig3}
\end{figure}

Constructed network based on a perfect tensor $T$,
a with hyper-invariant frames $F$ as nodes and pairs of unitaries on edges presented in Fig. \ref{fig:hyper} leads to a relaxed 
version of an Evenbly code \cite{Steinberg:2024ack} --
see Appendix \myref{App:Hyperinv}{B}. 
An analogous construction without perfect tensors 
forms a direct realization of a hyper-invariant tensor network \cite{Evenbly2017}, presented in Appendix \myref{App:Hyperinv}{B}.

\begin{figure}[h!]
    \centering
    \includegraphics[width=0.6\linewidth]{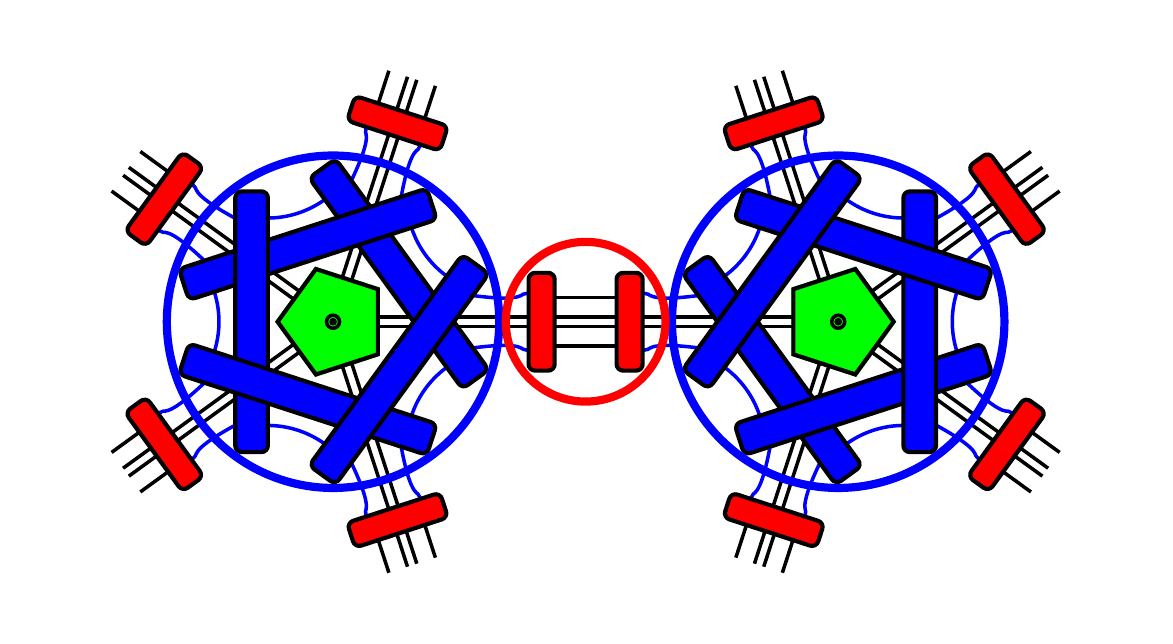}
    \caption{Connection of $2$ nodes from the tensor network. A pair of unitaries (rectangles in red) from neighboring nodes can be considered as an edge tensor (red circle) and a perfect tensor with a frame as a node tensor (blue circle) from hyper-invariant code.}
    \label{fig:hyper}
\end{figure}

The perspective of hyper-invariant codes naturally indicates gauge freedom in a tensor network constructed from presented nodes. 
Denoting entangling unitary gate (red rectangle) as $U$, 
we obtain the product $UU^\top$ at each edge of the tensor network
-- see Fig \ref{fig:hyper}. Replacing $U$ by its product with any orthogonal matrix $U \to U Q$, does not change the  behavior
of the network  as $(UQ)(UQ)^\top = UQQ^\top U^\top = UU^{\top}$. 
One needs only o take into account the uncontracted 
boundary indices on which the action of $Q$ does not disappear.

\section{Properties of construction proposed}

We start this section by mentioning the central charge of conformal field theory, which can be deduced from tensor network construction. Then we proceed to discuss the heart of conformal theories - the correlation function.
The main goal for HaPPY code and its successors is to provide a tool reproducing AdS/CFT correspondence properties. Since both of those theories are field theories, it is natural to consider the mapping of \textit{operators} instead of states. We follow this approach throughout this section.

One of the main properties of interest in conformal field theory is the central charge $c$ of its conformal algebra. It is a quantum anomaly, arising in quantizations of Witt algebra, describing conformal transformations, into Virasoro algebra. 
For any conformal theory
the commutation relations between generators $L_i$ reads,  

$$[ L_m, L_n ] = (m-n)L_{m+n}+\frac{c}{12}m(m-1)^2\delta_{n+m}.$$
Because conformal theories are naturally based in continuous geometry settings, it is not yet well understood, whether we can consistently speak about the Virasoro algebra in the discrete setting. Nevertheless, some attempts were made to construct the discrete theories, called qCFT \cite{Jahn2022_1}.
Fortunately, the work \cite{Holzhey1994} established a relation between central charge, geodesic length, and von Neumann entropy for any conformal field theory. This allows us to compute the central charge by the discrete version of conformal scaling -- adding layers to the tensor network, and computing the difference between entropy. The discussion and application of this approach for a hyperbolic tilling of Poincare disc (corresponding to AdS (2+1) space) can be found in \cite{Evenbly2014, Evenbly2011, Jahn:2019nmz, Jahn2022}.  Using Eq. (23) from  \cite{Central_charges_and_scaling}  we provide an upper bound for the central charge in our theory:
\begin{equation}
\label{central_estimation}
c_{\{5,4\}_{v}} \leq \frac{9 \ln{16}}{\ln(2 + \sqrt{3})  } \approx 18.95.
\end{equation}
The subscript $\{5,4\}$ is the Schl{\"a}fli symbol that describes the tiling of the Poincare disk, whereas $v$, corresponds to the vertex inflation method of adding a new layer to our network.

We stress the inequality sign, which results from the extrapolation of entanglement between each neighboring pair nodes to be maximal. This property was proven for the HaPPY codes \cite{HaPPy_code}, however, in the case of hyper-invariant tensors, up to our knowledge, it is still a hypothesis. Furthermore, our numerical calculations suggest that the actual value of the central charge $c_{\{5,4\}_{v}}$ is close to the upper bound \eqref{central_estimation}.
A large value of the central charge provides an argument, that our model does not encompass the so-called minimal conformal field theories, for which $0 < c \leq 1$ \cite{Ginsparg:1988ui}. 

We start the discussion of correlation functions by invoking a well-known fact, that conformal invariance of conformal field theories fixes the structure of two- and three-point correlation functions \cite{Ginsparg:1988ui}.
Let $\phi(x)$ be to quantum field acting on states in corresponding Hilbert space. Let $\ell _{ij}$ be a distance between two chosen points calculated in conformal theory -- along the boundary of AdS space. Thus, if the theory is conformal invariant, the correlation of field in points $x_1$ and $x_2$ is given by:
\begin{equation}
\label{eq:2point_corr_behaviour}
 \la \phi(x_1)\phi(x_2) \ra \sim \frac{1}{\ell_{12}^{2\Delta}},
\end{equation}
where the proportionality constant can be set to $1$ by normalization of fields. The parameter $\Delta$ is called a scaling dimension for a field and characterizes its transformation properties.
The three-point correlations on the other hand are given by
\begin{equation}
\label{3point_corr_behaviour}
 \la \phi(x_1)\phi(x_2)\phi(x_3) \ra = \frac{C_{123}}{\ell_{12}^{\Delta} \,\,\, \ell_{23}^{\Delta} \,\,\, \ell_{13}^{\Delta}},
\end{equation}
where constants $C_{123}$ are field-dependent parameters of a given conformal theory.

To demonstrate how the correlations arise in the discussed model of AdS/CFT correspondence, we outline the next procedure for mapping operators within a tensor network.
Let $T^{i_{1},\cdots,i_{n},j_1,\cdots, j_k}$ be a tensor network, constructed from multiple nodes, where the indices $i_1,\cdots,i_n$ correspond to the bulk indices of each tensor within the network and $j_1,\cdots,j_k$ to the uncontracted boundary indices of the outer layer on nodes. 
The enumeration of indices in the tensor network is only used to provide an explicit formula, and it does not affect the results, thus we let ourselves not discuss it further. 
 
Furthermore, let $\mathcal{O}_B$ be an operator, for clarity localized in some point in the bulk,
represented by matrix ${\mathcal{O}_B}_{\,\,\,\, i_{r}}^{i_{r}'}$. The mapping of this operator to boundary states corresponds to  "sandwiching" it between the tensor network and its conjugation, which can be combined with partial trace on uninteresting subsystems as presented in the example below which leaves only two boundary subsystems
\begin{equation}
    T^{i_{1},\cdots, i_r, \cdots,i_{n},j_1,\cdots,j_l,j_m,\cdots, j_k} \overline{T}_{i_{1},\cdots,i_r' ,\cdots,i_{n},j_1,\cdots,j_l',j_m',\cdots, j_k}{\mathcal{O}_B}_{\,\,\,\, i_{r}}^{i_{r}'}= \phi_{\,\,\,\,\,\,\,\,j_{l}' j_{m}'}^{j_{l}j_{m}}~,
\label{eq:mapping}
\end{equation}
with Einstein's summation convention over repeated index assumed.

The expressions like \eqref{eq:mapping} can be interpreted as tensor networks as well and represent the image of a bulk operator on a boundary. We note that the operator $\mathcal{O}_B$ in general does not need to be local.
In order to obtain correlation functions, we chose some probing observables, in this example two, denoted by ${v_1}_{\,\,\,\,j_l}^{j_l'}, {v_2}_{\,\,\,\,j_m}^{j_m'}$, with indices corresponding to points $x_{1}$ and $x_{2}$ on the boundary and calculate their joint expectation value. Thus we have

\begin{equation}
  \la \phi(x_{1}) \phi(x_{2}) \ra :=     \phi_{\,\,\,\,\,\,\,\,j_{l}' j_{m}'}^{j_{l}j_{m}} {v_1}_{\,\,\,\,j_l}^{j_l'} {v_2}_{\,\,\,\,j_m}^{j_m'},
\end{equation}
which could be geometrically viewed as in Fig.
\ref{fig:twopoint} (a).

\begin{figure}[h]
\centering
\begin{subfigure}[h!]{0.5\textwidth}
\centering
 \subcaptionbox{"Sandwiching" }{
        \includegraphics[width=3in]{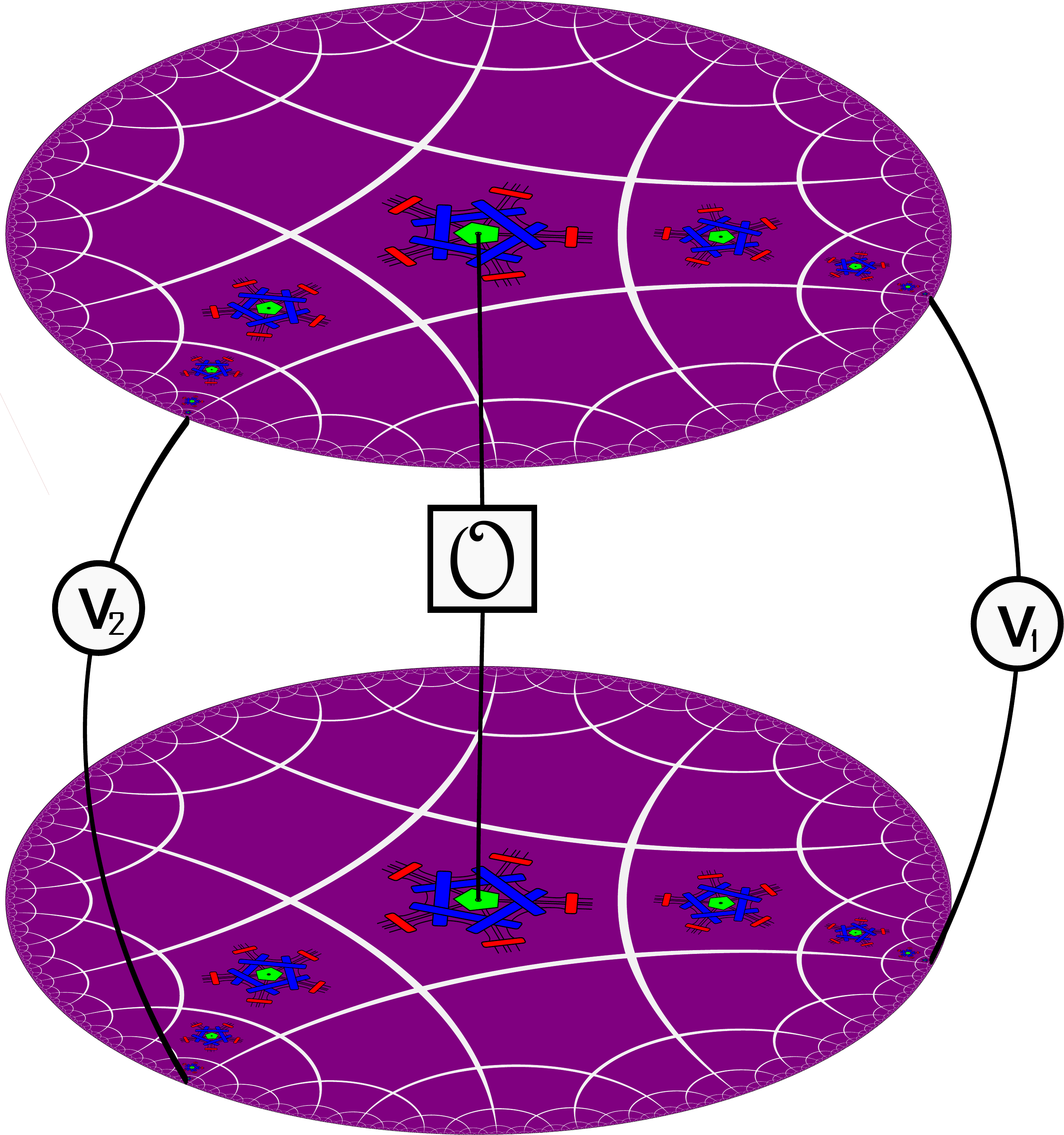}
    }
\end{subfigure}%
~
\begin{subfigure}[h!]{0.5\textwidth}
\centering
 \subcaptionbox{}{
       \includegraphics[width=3in]{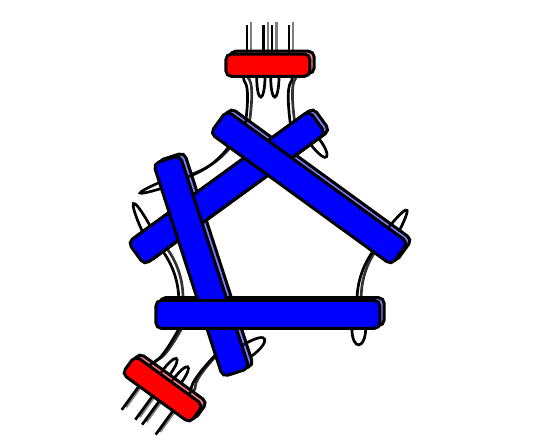}
    }
\end{subfigure} 
\\
\caption{\textbf{(a)} Geometric representation of mapping the bulk operator $\mathcal{O}_B$ to the boundary region, where contracted indices, represented as lines, specify the position of the mapped operator $\mathcal{O}_B$ and the probing observables $v_1, v_2$.\newline
\textbf{(b)} The node from the path after all possible reductions, the conjugations of each remaining matrix are denoted as their pale copies placed in the back. Note the simplification of the perfect tensor irrespective of bulk operators.}
\label{fig:twopoint}
\end{figure}

Hereafter for clarity, we switch to index-free notation, in which the "sandwich" of some bulk operator $\mathcal{O}_B$ between the tensor network and its conjugation, combined with partial traces and expectation values of probing operators can be written as:
\begin{equation*}
\la \phi_1(x_1) \phi_2(x_2) \cdots\ra = \Tr\left[\mathcal{V} \mathcal{O}_B \mathcal{V}^\dagger\left(v_1 \otimes \id \otimes v_2 \otimes \id \cdots\right) \right],
\end{equation*}
with $\mathcal{V}$ represent the bulk-boundary isometry of the tensor network. 

Moreover, we will focus on probing operators with zero traces. To justify this choice, let us consider a one-point correlation function. Note, that while taking trace one can, in the first step reduce all tensors with $3$ boundary indices, resulting in a new boundary to reduce in the consecutive step. Repeating this procedure, by the same token as for perfect tensors, one obtains a very simple form of a one-point correlation function,
\begin{equation}
\la \phi(x_1)\ra \ \propto \ \Tr[v_1],
\label{eq:cor1}
\end{equation}
thus to ensure the one-point correlation function is zero one has to set trace $v_1$ to be zero as well. An additional gain from this choice of probing operators is the natural elimination of one-point terms while calculating higher correlation functions.

We call some correlation function \textit{trivial},
if during its calculation entire tensor network reduces, leaving only one-point terms and correlation function factorizes.

\subsection{Two point correlation function}

For HaPPY codes, while calculating  
2-point correlation function, one may reduce the entire tensor network, as shown in \cite{ChunJun2022}, and the two points of interest are separated. Thus if we sent $1$-point correlations to zero \eqref{eq:cor1}, the $2$-point correlations resulting from HaPPY codes is trivial and factorize,
\begin{equation}
\label{HaPPy_correlation}
    \la \phi (x_1) \phi (x_2) \ra = \la \phi (x_1) \ra \la \phi (x_2)  \ra  = 0 ,
\end{equation}
which corresponds to the scaling dimension  $\Delta = \infty$. 
Fortunately, in the proposed construction of a hyper-invariant tensor network, there are tensors which does not reduce while calculating $2$-point and higher correlation functions. To characterize them we introduce the following notions.

Let us define a \textit{path} between two boundary subsystems as a subset of tensors (together with its conjugates) from the tensor network, such that in each tensor two and only two of its indices, which are not adjacent, connect it with other tensors in the path or are the boundary indices corresponding to distinguished boundary subsystems. An example of a path is presented in Figure  \ref{fig:network_decudtion} (d). Note that one cannot reduce the path, by contracting it with its conjugate on all indices except the distinct one. It is so because each path tensor has at most two neighboring indices which are neither distinct nor connecting it with other path tensors. 

A special type of path is a \textit{geodesic path}, where for each path's node one edge of the corresponding tile belongs to the same geodesic in AdS space. 
One can think of a geodesic path as a path which in each node takes the same turn, as illustrated in Figure \ref{fig:twopoint} (a).
Notice that the definition of the path is slightly more general than the geodesic path. The example path which is not a geodesic path is presented in \ref{fig:network_decudtion} (d). It turns out that paths are as common as one would think.

\begin{lemma}\label{Lem:unique_path}
For any two boundary indices there exists at most one path connecting them.
\end{lemma}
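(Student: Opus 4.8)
The plan is to argue by contradiction: suppose two distinct paths $P\neq Q$ both connect the same pair of boundary indices, and derive a conflict with the defining constraint that, inside every tile of a path, the two active edges are non-adjacent. First I would pass to the geometric description, representing each path as a walk that enters each pentagon through one edge and leaves through a non-adjacent one; this means a path may only make \emph{wide turns} and can never turn tightly around a single vertex, since the two edges of a pentagon meeting at a common vertex are adjacent. The two paths agree near their common starting boundary point and then first separate inside some pentagon $X$: there they enter through a common edge and leave through two \emph{distinct} non-adjacent edges. A short case check on the regular pentagon shows that for a fixed entry edge there are exactly two non-adjacent edges and that they share a single vertex $w$; since the two branches must take different exits, they are forced to use precisely this pair. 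Hence immediately after $X$ the two branches straddle the vertex $w$, at which four pentagons meet in the $\{5,4\}$ tiling.

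Because the disk is simply connected, the portion of $P$ and the portion of $Q$ between their first splitting and their first subsequent meeting bound a region $R$ that is a topological disk tiled by whole pentagons, with $w$ on its boundary. The crucial local step is to show the two branches cannot close up around $w$. The shortest conceivable reconvergence would route each branch through the fourth pentagon meeting at $w$, but reaching that pentagon from either neighbour of $X$ requires exiting through the second edge incident to $w$, which is adjacent to the entry edge -- exactly the forbidden tight turn. Both branches are therefore pushed onto pentagons no longer touching $w$, so the enclosed region must strictly grow. I would package this as a minimal-counterexample (infinite-descent) argument: taking $R$ of least area among all such bigons, the forced wide turns at the splitting vertex either violate the non-adjacency rule or produce a strictly smaller enclosing loop, contradicting minimality. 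Equivalently one may phrase it through a discrete Gauss--Bonnet count, in which each enclosed pentagon carries negative combinatorial curvature while wide turns contribute too little positive turning for the loop to close.

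The main obstacle is precisely this last step -- proving that a closed loop built only from wide turns cannot bound a region in the $\{5,4\}$ tiling. This is where hyperbolicity is indispensable: on a flat or spherical tiling such loops do exist, so the argument must genuinely use that four pentagons meet at each vertex and that the tiling expands exponentially outward, which is what prevents the two diverging branches from ever turning back toward one another. Making the innermost-vertex / minimal-area induction airtight, and in particular verifying the base case in which $R$ consists of the few pentagons around a single vertex, is the delicate part; once this is secured, the nonexistence of the enclosing loop forces $P=Q$, which is exactly the assertion of the lemma.
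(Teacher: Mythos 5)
Your strategy is the same in outline as the paper's --- contradiction via two paths that split and later rejoin, with hyperbolicity supplying the final obstruction --- but the mechanism you propose for the obstruction is genuinely different. The paper replaces each branch after the last common node $n_a$ by a geodesic of the tiling that contains an edge of $n_a$ and of the next node along that branch, invokes the claim that a path can never cross such a geodesic (only stay tangent to it), and concludes that reconvergence at $n_b$ would force the two geodesics to intersect twice, which is impossible on the Poincar\'e disc. You instead stay entirely combinatorial: you observe (correctly) that the non-adjacency rule permits only ``wide turns,'' that the two exits available from the splitting pentagon share a single vertex $w$, and that the branches cannot close up around $w$ because doing so requires an adjacent-edge exit; you then want to promote this local fact to the global statement that no wide-turn bigon bounds a region in the $\{5,4\}$ tiling, via minimal area or discrete Gauss--Bonnet. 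If completed, your argument would be more self-contained (it never leaves the combinatorics of the tiling), whereas the paper's is shorter because it outsources the hard part to the standard fact that two hyperbolic geodesics meet at most once.

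The gap is the one you flag yourself: the global step --- that a closed loop made only of wide turns cannot bound a tiled disk --- is asserted as a plan (minimal counterexample, or a Gauss--Bonnet count) but never actually carried out, and your local analysis around the single vertex $w$ only rules out the \emph{shortest} reconvergence, not reconvergence after the branches have wandered arbitrarily far apart. As written, the proposal therefore does not yet prove the lemma. To be fair, the paper's own proof rests on an analogous unproved claim (``each path could not cross a geodesic''), which is really the same local non-adjacency observation you make; so the two arguments are at a comparable level of rigor, and either would need the corresponding combinatorial lemma spelled out to be airtight. If you want to finish your version, the cleanest route is the Gauss--Bonnet count you mention: each enclosed pentagon of the $\{5,4\}$ tiling carries combinatorial curvature $-\pi/2$ and each wide turn contributes exterior angle at most $\pi/2$, so the total turning of the bigon cannot reach the $2\pi$ plus enclosed-curvature deficit required for it to close --- making that inequality precise (including the two corner pentagons where the branches split and rejoin) is exactly the missing step.
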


\begin{proof}
We prove this property by contradiction. Assume that for some boundary indices, there exist at least two paths $t_1$, $t_2$ connecting them. Let $n_a$ and $n_b$ be their last common nodes. Thus we may take two geodesics $l_1$, $l_2$  that include edges of $n_a$, and of the next node along $t_1$ and $t_2$ respectively, that goes between $t_1$ and $t_2$. Since $t_1$ and $t_2$ join at $n_b$, and each path could not cross a geodesic (it could be at most "tangent" to it), the $l_1$ and $l_2$ must intersect at the vertex of some node $n_b'$ closer to $n_a$ than $n_b$. Therefore we constructed two geodesics on a Poincare disc which intersects in two points, which is contrary to hyperbolic geometry. 
\end{proof}

With this information, we can fully characterize the two-point correlation functions.

\begin{thm}
\label{2_point_correlation}
The two-point correlation function between two boundary subsystems is not trivial, unlike for HaPPY codes \eqref{HaPPy_correlation}, only if in the network there exists a path connecting those subsystems. Moreover, in such a case, the two-point correlation function simplifies to

\begin{equation}
\la \phi(x_1) \phi(x_2) \ra=\Tr[ \mathcal{O}] \Tr[\mathbf{\mathcal{W}} (v_1 \otimes v_2)],
\end{equation}

where $\mathcal{W}$ is the operator obtained from the path with its conjugate on all indices except the distinct boundary ones, while  $\mathcal{O}$ is a mapped boundary operator and $v_1$, $v_2$ are traceless probing observables.
\end{thm}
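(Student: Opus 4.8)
The plan is to reduce the "sandwiched" tensor network diagram $\Tr[\mathcal{V}\mathcal{O}_B\mathcal{V}^\dagger(v_1\otimes\id\otimes v_2\otimes\id\cdots)]$ by repeatedly applying the node reduction rule stated earlier --- namely that contracting a node with its conjugate on three neighboring boundary indices yields identities on the remaining pairs. First I would argue that every node \emph{not} lying on a path connecting $x_1$ and $x_2$ can be eliminated. The key observation is that a node off the path never has its "distinguished" boundary index occupied by a probing observable, so when it is contracted against its conjugate it presents at least three mutually adjacent free (identity-contracted) boundary legs, triggering the reduction. One performs this from the outermost layer inward, peeling off nodes one layer at a time; each eliminated node exposes a fresh boundary on which the same argument applies. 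This is exactly the procedure already used for one-point functions around equation \eqref{eq:cor1}, and I would invoke it to collapse the whole network down to the path's tensors plus the mapped bulk operator.

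Next I would show that if \emph{no} path exists between the two subsystems, this reduction cascade never stops before fully factorizing: the diagram separates into two disconnected pieces, each carrying a single probing observable, so the correlation becomes $\la\phi(x_1)\ra\la\phi(x_2)\ra$. Because we chose $v_1,v_2$ traceless, each factor vanishes by \eqref{eq:cor1}, so the function is trivial --- establishing the "only if" direction. The uniqueness of the surviving path, should one exist, is guaranteed by Lemma \ref{Lem:unique_path}, so there is no ambiguity in what $\mathcal{W}$ denotes.

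Assuming a path does exist, I would then isolate its contribution. Here the crucial structural fact (visible in Fig.~\ref{fig:twopoint}(b)) is that the perfect tensor $T$ inside each path node reduces \emph{independently} of the bulk operator: grouping its indices appropriately and using that $T$ is perfect (every bipartition is an isometry), the $T\overline{T}$ contraction along the path simplifies and factors out, leaving the scalar $\Tr[\mathcal{O}]$ where $\mathcal{O}$ is the mapped bulk operator. What remains is the hyper-invariant frame part threaded along the path, whose open indices are exactly the two distinguished boundary legs carrying $v_1$ and $v_2$; contracting the path with its conjugate on all other indices defines the operator $\mathcal{W}$. Assembling these pieces gives the claimed factorized form $\Tr[\mathcal{O}]\,\Tr[\mathcal{W}(v_1\otimes v_2)]$.

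The main obstacle I anticipate is the bookkeeping in the first step: one must verify carefully that the "peel from the outside in" order of reductions is always well-defined and that every off-path node genuinely presents three adjacent reducible legs at the moment it is removed --- in particular that removing a node never strands a neighbor with fewer than three free indices in a way that blocks further reduction. This is where the specific $\{5,4\}$ geometry and the vertex-inflation construction matter, since they guarantee each added tensor had at least three non-contracted boundary indices. I would lean on that geometric input, together with Lemma~\ref{Lem:unique_path} to ensure the path is a clean one-dimensional chain, to make the induction on layers go through cleanly.
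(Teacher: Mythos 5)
Your proposal follows essentially the same route as the paper's proof: an outside-in, layer-by-layer reduction of all off-path nodes using the three-adjacent-legs contraction rule, an appeal to Lemma \ref{Lem:unique_path} to identify the surviving tensors with the unique path, and the observation that the perfect tensors along the path reduce to a trace over bulk indices, factoring out $\Tr[\mathcal{O}]$ and leaving $\mathcal{W}$ on the frame part. The obstacle you flag (verifying that each off-path node presents three reducible legs at the moment of its removal) is exactly the point the paper handles by enumerating the two exceptional configurations, so your argument is correct and complete in the same sense as the original.
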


In other words, while calculating two-point correlation functions the entire tensor network reduces, except the tensors belonging to the path between subsystems of interest.

\begin{proof}
If the network has only one, central node the theorem follows from explicit calculations.
Otherwise, if the network has more layers, then between two distinct tensors exists at least one tensor with only two neighbouring connections to the network. While calculating the trace such a tensor can be reduced with its configuration since it has three neighbouring boundary indices. The remainder of this tensor is a trace over its bulk index. 
Moreover, since now the neighbours of this tensor have three neighbouring indices connected with their conjugations, they can be reduced as well. This procedure can be repeated until all possible reductions of the boundary tensors are completed. 

Such reduction on the entire boundary, except the tensors with indices corresponding to distinct boundary subsystems, is always possible if there exists at least one tensor with three boundary indices between those two distinct tensors. It may not be the case only if one of the following scenarios occurs:
\begin{enumerate}
    \item The distinct indices are such that the path between them lies on the boundary of the network. Then none of the tensors on the path would simplify. However, the consecutive moves will reduce the rest of the network.
    \item At least one distinct tensor can be reduced as well, which is possible only if there exists no path connecting distinct boundary subsystems. Then in the consecutive boundary reductions entire network will be reduced and the resulting correlations trivial.
\end{enumerate}
\begin{figure}[h]
    \centering
    \begin{subfigure}[h!]{0.45\textwidth}
        \centering
 \subcaptionbox{}{
            \includegraphics[width=2.5in, height = 2.5in]{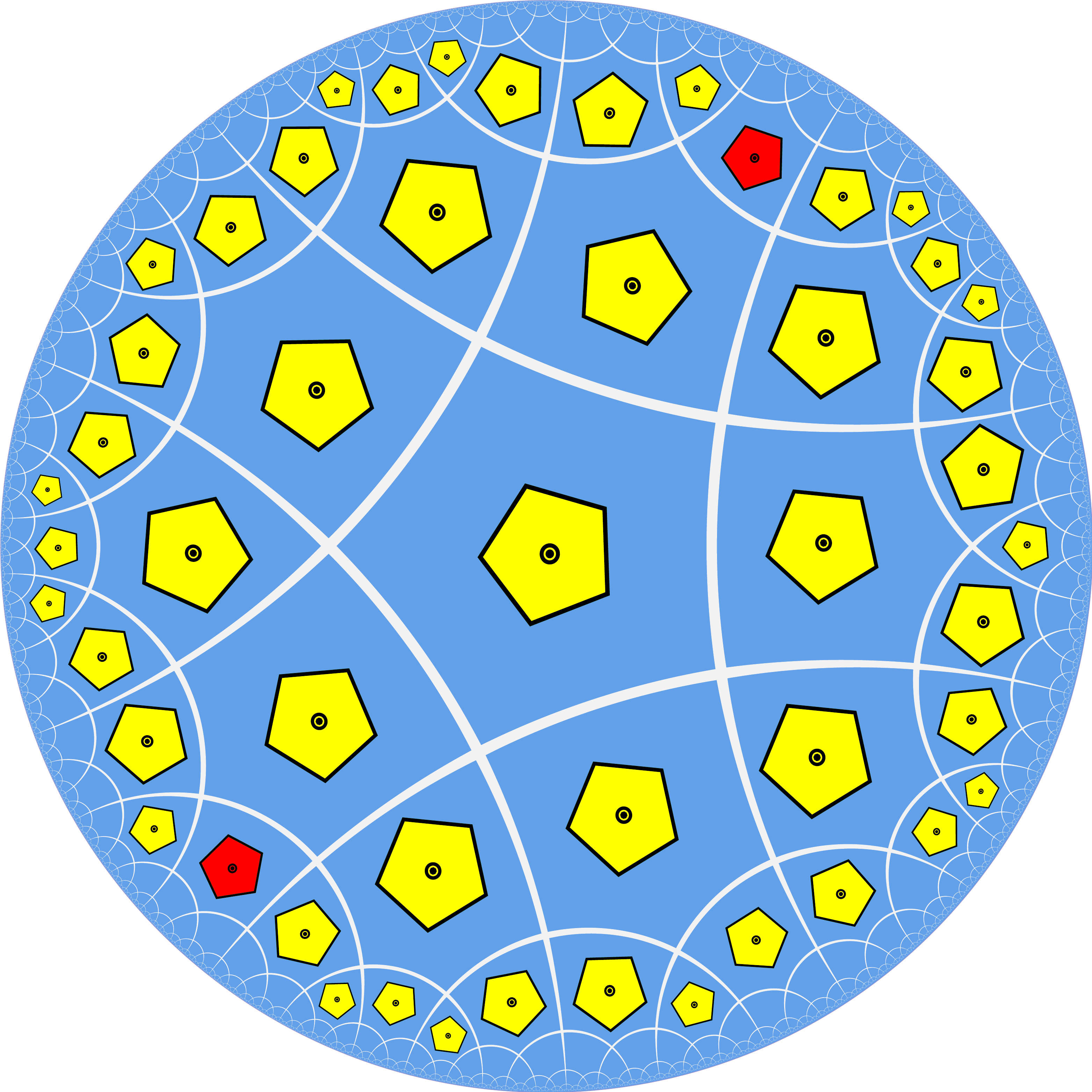}
    }
    \end{subfigure}%
    ~
    \begin{subfigure}[h!]{0.45\textwidth}
        \centering
\subcaptionbox{}{
            \includegraphics[width=2.5in, height = 2.5in]{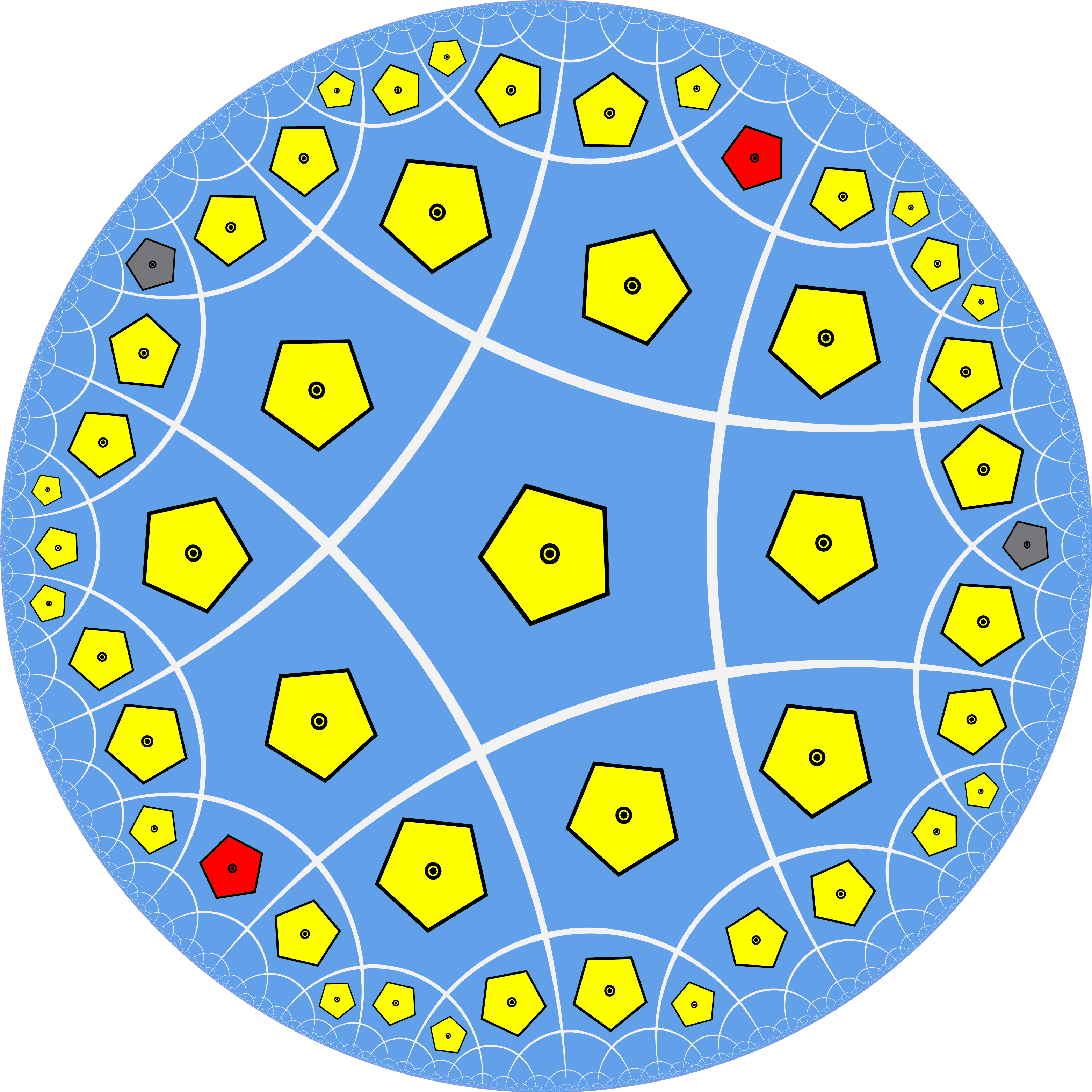}}
        
    \end{subfigure} \\
    \vspace{0.5cm}
    \begin{subfigure}[h!]{0.45\textwidth}
        \centering
        \subcaptionbox{}{
            \includegraphics[width=2.5in, height = 2.5in]{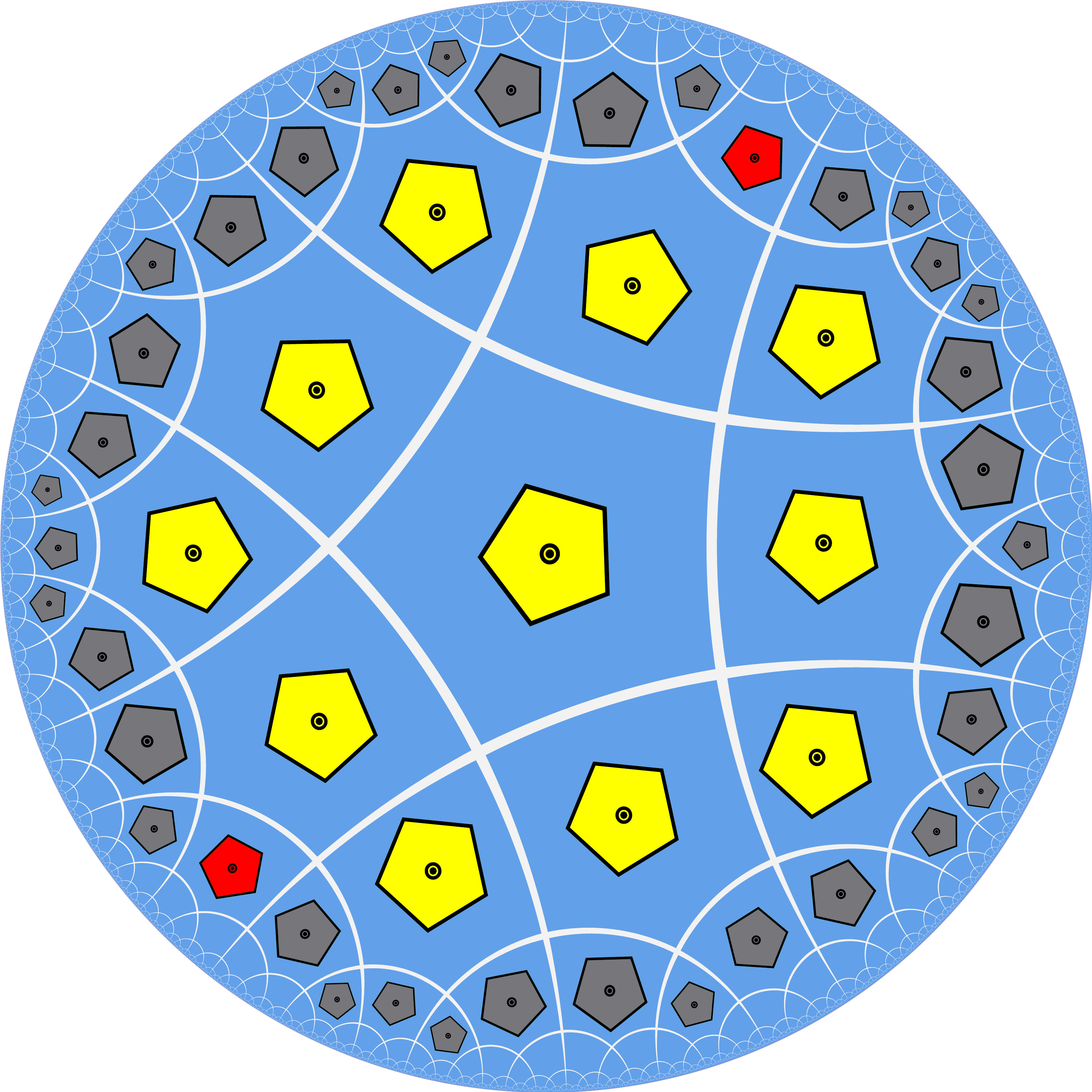}}
    \end{subfigure}%
    ~ 
    \begin{subfigure}[h!]{0.45\textwidth}
        \centering
        \subcaptionbox{}{
            \includegraphics[width=2.5in, height = 2.5in]{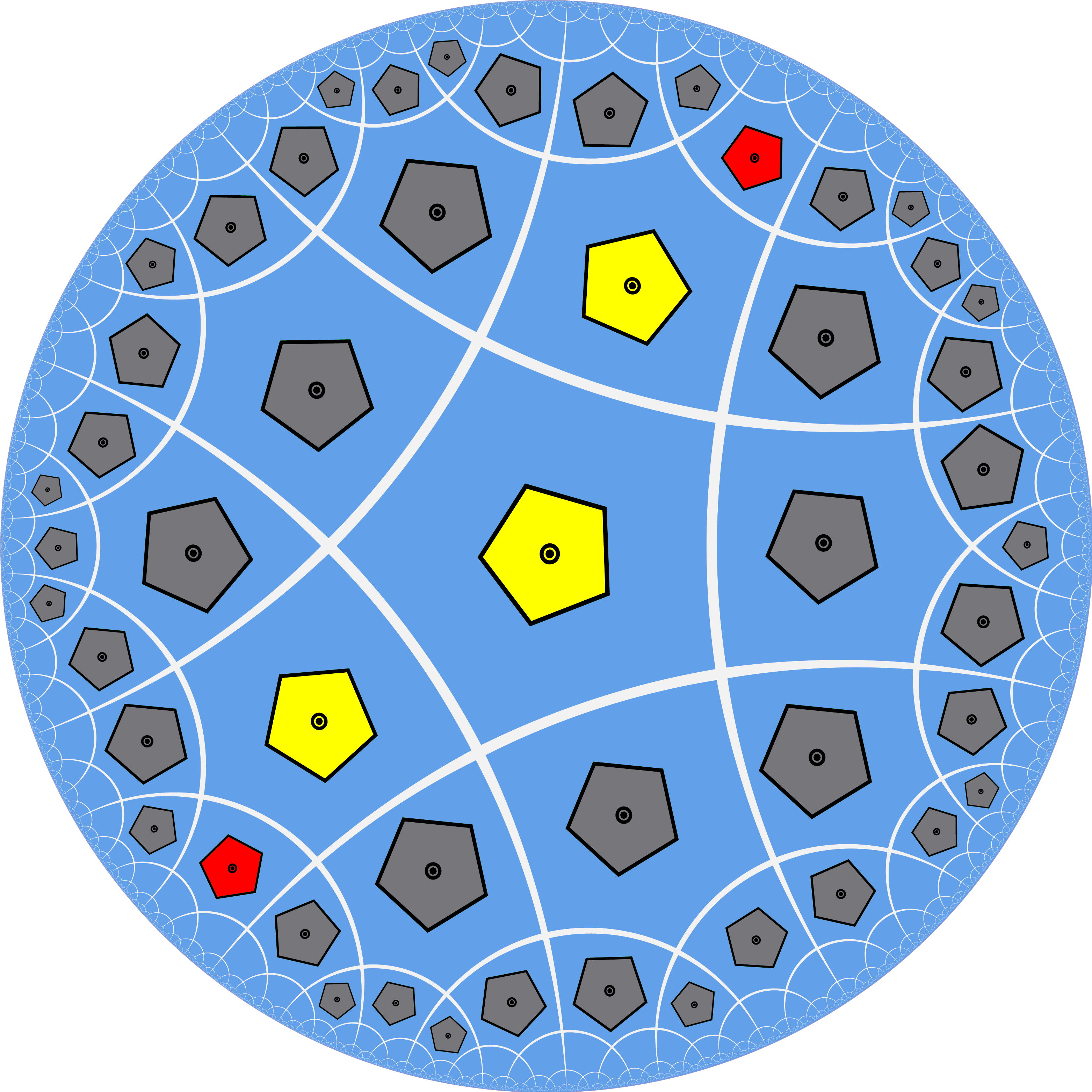}}
    \end{subfigure}%
    
    \caption{
    \textbf{(a)} Calculation of the 2-point correlation function between the 
    points at the boundary with corresponding tensors marked red.
    \textbf{(b)}  two tensors, marked in gray, are reduced, which ensures reduction of other tensors from this layer \textbf{(c)}. Similar procedure is repeated until only the path between distinct indices (red symbols) remains.
     \textbf{(d)}  2-point correlation function can be directly computed.}
    \label{fig:network_decudtion}
\end{figure}

In the consecutive steps, one repeats these steps over the new boundary. This procedure stops only if all tensors are reduced, or none of the remaining tensors can be reduced. The latter implies that the remaining tensors have at least two non-neighboring indices connected with other non-reduced tensors (or these indices are the distinct ones). Thus the boundaries of a set of unreduced tensors must be paths connecting distinct boundary indices. This means, due to the uniqueness of the path, that unreduced tensors are forming a path.
The diagram of the above-discussed actions is presented in Figure \ref{fig:network_decudtion}.

The tensors on the path can also be partially reduced. Since they have $3$ indices connected with their conjugates, the $3$ corresponding outside unitaries simplifies. The next simplification is the reduction of one of the inner unitary. Finally, the perfect tensors can be reduced as well, which results in a trace over its bulk indices. 
Thus we obtained the trace over all bulk indices and eliminated all tensors outside the path which ends the proof.
The reduced path tensor is presented in the Figure \ref{fig:twopoint}.
\end{proof}

Note that not all subsystems are connected by a path within a network, thus not all local subsystems are correlated. It is, in a sense side effect of a discrete network construction, which imposes that not all subsystems can be connected by a geodesic path within such a network. On the other side, we stress once again that the defined notion of the path is more general than the geodesic path.

We can describe the two-point correlation function much more precisely utilizing one more observation: 

\begin{lemma}
\label{Lem:largest_eig}
Each simplified node on the path can be interpreted as a matrix acting on $8$ qudits. Such a matrix have an eigenvector $\sum_{\mathbf{i}}|\mathbf {i}\mathbf {i}\ra$, with $\mathbf {i} =(i_1,i_2,i_3,i_4)$, to the eigenvalue $d^{5} = 32$. Moreover, it is also an eigenvalue with the maximal module of this matrix.
\end{lemma}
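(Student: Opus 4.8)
The plan is to analyze the structure of a single simplified path node as an explicit operator on $8$ qudits (four "ket" indices and four "bra" indices, i.e.\ the pairs of physical qubit-lines carrying the two contracted boundary indices and their conjugates), and to identify the claimed eigenvalue by direct inspection of the reduction rules already established in the proof of Theorem \ref{2_point_correlation}. First I would write down precisely what the simplified node looks like after all permissible contractions: by the reductions in Fig.\ \ref{fig:twopoint}(b), three of the outer entangling unitaries and one inner dual-unitary cancel against their conjugates, the perfect tensor collapses to a trace over its bulk index, and what survives is a matrix $M$ built from the two remaining entangling unitaries $U$ (and $\overline U$) together with the surviving dual-unitary blocks $F$. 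I would set up $M$ in a basis $|\mathbf i\ra\otimes|\mathbf j\ra$ where $\mathbf i=(i_1,i_2,i_3,i_4)$ labels the incoming path index and $\mathbf j$ the outgoing one, so that the candidate vector is $|\Omega\ra=\sum_{\mathbf i}|\mathbf i\rangle|\mathbf i\rangle$, the unnormalized maximally entangled (Bell) state across the two contracted legs.

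Next I would verify the eigenvalue equation $M|\Omega\ra = d^5|\Omega\ra$. The key mechanism is that contracting $U$ with $\overline U$ (or $U^\top$) along a line, when fed the Bell state, reproduces a Bell state by the gauge identity $UU^\top=UU^\top$ already noted after Fig.\ \ref{fig:hyper} together with the dual-unitarity relations of Fig.\ \ref{dual_uni}; each such "slide" of the maximally entangled state through a unitary leg is the standard transpose trick $(A\otimes\id)|\Omega\ra=(\id\otimes A^\top)|\Omega\ra$. I would count the factors of $d$ carefully: each of the five closed qudit loops produced when $|\Omega\ra$ is propagated through the dual-unitary frame and the entangling unitaries contributes a factor $d$, giving $d^5=32$ for $d=2$ (equivalently, $D^{5/2}$ in the ququart language). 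This bookkeeping of which contractions yield a closed loop (factor $d$) versus which are fixed by the Bell structure is exactly where I expect the main obstacle to lie, because the surviving node still contains several entangled blocks and the loop count must be matched against the stated value $d^5$ rather than read off trivially.

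Finally I would argue maximality of the modulus of this eigenvalue. The natural route is to show $M/d^5$ is (proportional to) a completely positive trace-preserving–type map, or more directly that $\|M\|$ is bounded by $d^5$ so that no eigenvalue can exceed it in absolute value. Concretely, since $M$ is assembled from unitary and dual-unitary tensors with only a single perfect-tensor trace, its operator norm is controlled by the operator norms of those unitaries (all equal to $1$) times the loop factors, yielding $\|M\|\le d^5$; attainment at $|\Omega\ra$ then forces $d^5$ to be the spectral radius. I would make the inequality rigorous by grouping $M$ into a composition of isometric or contractive pieces (each entangling/dual-unitary block is an isometry on the relevant bipartition by the planar $2$-uniformity invoked in Section \ref{Sec:Node}), so that the only amplification comes from the explicit loop traces. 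The delicate point—and the part I would treat most carefully—is ensuring that the decomposition of $M$ into contractive factors is tight enough that the bound equals $d^5$ exactly, rather than merely an overestimate, since a loose bound would not certify that $d^5$ is \emph{the} largest-modulus eigenvalue.
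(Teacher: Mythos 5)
Your proposal is correct and follows essentially the same route as the paper: the eigenvalue equation is verified by direct computation using unitarity (the paper simply calls this "a simple calculation"), and maximality is obtained from the bound $r(M)\leq \|M\|$ together with a decomposition of the node into norm-preserving unitaries, norm-non-increasing contractions, and tensor factors of the unnormalized Bell state each contributing a factor $d$, giving $\|M\|\leq d^5$. Your worry about the bound being "tight" is unnecessary — once $d^5$ is exhibited as an eigenvalue, any upper bound $\|M\|\leq d^5$ immediately certifies it as the spectral radius.
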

\begin{proof}
The first part of the Lemma can be established by a simple calculation, relying on properties of unitarity.
To prove the second part
we use the fact that the spectral radius of any matrix $r(A)$ is not larger than its operator norm,
$r(A) \leq ||A|| =  \max_{\mathbf{x}} ||A |\mathbf{x}\ra||/|||\,|\mathbf{x}\ra||$ \cite{Conway2007}.

Let us consider any vector $|\mathbf{x}\ra$ on which the reduced node acts. By using repetitively three properties
\begin{enumerate}
    \item Unitary matrix does not change the norm of the vector,
    \item Contraction does not enlarge the norm of a tensor,
    \item Tensor product with non-normalized generalized Bell state $\sum_i^{d} |ii\ra$ enlarges the norm $d = 2$ times,
\end{enumerate}
one can show that the norm $||\,|\mathbf{x}\ra||$ can increase at most $d^{5} = 32$ times.
\end{proof}

Note that the leading eigenvector of the node corresponds to the trace of the local boundary observable since its upper and lower indices repeat.
Thus the contribution of the leading term to the two-point correlation function would be a product of local expectation values $\la\phi(x_1)\phi(x_2)\ra \approx \la\phi(x_1)\ra\la\phi(x_2)\ra$. This ensures us that the normalization of the path's nodes should be $\lambda_1 =d^{5} =  32$. Otherwise, this leading therm would either explode or vanish which is not physical for nonzero-trace probing observables.

A similar lemma can be also proven for combinations of nodes on the path, which leads us to the main result of this subsection.

\begin{cor}
\label{cor_2point}
Assume the reduced path node has only one eigenvector to eigenvalue with maximal module $d^{5} = 32$. Then by normalizing the nodes by this factor, in the limit of a large network, one obtains the desired decay behavior of two-point correlation functions \eqref{eq:2point_corr_behaviour}. Moreover, if the bulk operator has normalized trace and the path is a geodesic, the scaling dimension $\Delta$  is given by
\begin{equation}
\Delta = - \log_{\mu} \lambda_2,
\end{equation}
where $\lambda_2$ is  the subleading eigenvalue of the simplified node and $\mu = 2 + \sqrt{3}$ is a scaling factor of a network.
\end{cor}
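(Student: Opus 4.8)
The plan is to read the corollary off Theorem~\ref{2_point_correlation} by treating the reduced path as a transfer channel and doing spectral analysis of its single node. By Theorem~\ref{2_point_correlation} the two-point function equals $\Tr[\mathcal{O}]\,\Tr[\mathcal{W}(v_1\otimes v_2)]$, where $\mathcal{W}$ is the ordered contraction of the reduced path nodes with $v_1,v_2$ attached at its two boundary ends. For a \emph{geodesic} path every node takes the same turn, so all reduced nodes are the same matrix $M$ on the same $8$ qudits, and $\mathcal{W}$ acts as the $L$-th power of this transfer operator, where $L$ is the number of nodes on the path. First I would invoke Lemma~\ref{Lem:largest_eig} (and its stated extension to combinations of nodes) to fix the spectrum of $M$: its unique eigenvalue of maximal module is $\lambda_1=d^5=32$, with eigenvector $|\Omega\ra=\sum_{\mathbf i}\ket{\mathbf i\mathbf i}$.

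Next I would normalise each node by $\lambda_1$, so that $M/\lambda_1$ has leading eigenvalue $1$ and all others of module strictly below $1$. Writing the spectral decomposition $\left(M/\lambda_1\right)^{L}=P_\Omega+\lambda_2^{\,L}P_2+\cdots$, where $\lambda_2$ is the normalised subleading eigenvalue and $P_\Omega$ the rank-one projector onto $|\Omega\ra$, I would use the observation following Lemma~\ref{Lem:largest_eig} that $|\Omega\ra$ represents the \emph{trace} of the local boundary observable. Hence the leading term $P_\Omega$ contributes a factor proportional to $\Tr[v_1]\Tr[v_2]$ to $\Tr[\mathcal{W}(v_1\otimes v_2)]$, which vanishes because $v_1,v_2$ are traceless. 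The first surviving contribution is therefore the subleading one, so that
\begin{equation*}
\la \phi(x_1)\phi(x_2)\ra \;\sim\; \Tr[\mathcal{O}]\; c\;\lambda_2^{\,L},
\end{equation*}
with $c$ a nonzero constant built from $v_1,v_2$ and the subleading eigenvectors; this already establishes exponential decay in $L$, hence the first claim. Setting $\Tr[\mathcal{O}]=1$ for a normalised bulk operator removes the prefactor.

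It remains to convert decay in $L$ into decay in the boundary distance $\ell_{12}$. Here I would use that vertex inflation rescales boundary length by $\mu=2+\sqrt{3}$ per layer, so a geodesic joining two boundary points at distance $\ell_{12}$ descends and re-emerges through $\log_\mu \ell_{12}$ layers on each side, i.e. $L=2\log_\mu \ell_{12}$ nodes in total. Substituting,
\begin{equation*}
\lambda_2^{\,L}=\lambda_2^{\,2\log_\mu \ell_{12}}=\ell_{12}^{\,2\log_\mu \lambda_2}=\ell_{12}^{-2\Delta},\qquad \Delta=-\log_\mu \lambda_2,
\end{equation*}
which reproduces \eqref{eq:2point_corr_behaviour} with the claimed scaling dimension, matching the continuum holographic relation $\mathcal{L}_{\text{geodesic}}\approx 2\log \ell$.

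The hardest part is controlling the power of the transfer operator and justifying that a single subleading eigenvalue governs the asymptotics. For a geodesic path the factors coincide and $\mathcal{W}$ is a genuine matrix power, so the argument reduces to a spectral-gap statement; but $M$ is generically non-normal, so beyond the assumed non-degeneracy of $\lambda_1$ I must ensure that $\lambda_2$ is separated from the remaining spectrum and that Jordan-block/polynomial prefactors do not spoil the leading $\lambda_2^{L}$ behaviour. For a non-geodesic path the per-node matrices differ, so $\mathcal{W}$ becomes an ordered product of distinct transfer operators; there one can still argue exponential suppression (the first claim) from sub-multiplicativity of norms together with the tracelessness of $v_1,v_2$, but the closed form for $\Delta$ genuinely requires the geodesic hypothesis, which is precisely why it is imposed. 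A secondary technical point is the purely geometric identity $L=2\log_\mu \ell_{12}$, which I would verify directly from the combinatorics of the $\{5,4\}$ vertex inflation.
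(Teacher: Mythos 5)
Your proof is correct and takes essentially the same approach as the paper: both rest on the spectral gap of the single reduced node repeated along a geodesic path (Lemma \ref{Lem:largest_eig}), the vanishing of the leading $\sum_{\mathbf i}|\mathbf i\mathbf i\rangle$ contribution for traceless probes, and the vertex-inflation scaling factor $\mu$. The only difference is presentational -- the paper argues incrementally (one inflation step multiplies the distance by $\mu$ and the correlator by $\lambda_2^2$, then matches exponents against the assumed power law), whereas you integrate the same relation to $L=2\log_\mu\ell_{12}$ and derive the power law directly; the non-normality/Jordan-block caveat you flag is exactly what the paper defers to Appendix \ref{App:Jordan}.
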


Consider one step of inflation -- new layer addition, for the tensor network with a large number of layers. From one hand side, the number of tensors grows according to scaling factor $\mu = 2 + \sqrt{3}$ \cite{Central_charges_and_scaling}, resulting in the same growth of distance between two points on the boundary. On the other hand new layer introduces two new nodes, each on opposite ends of the path, which lowers the correlation by a factor equal to the next to the leading eigenvalue squared. We neglect the terms with lower eigenvalues since they decay appropriately faster in a large number of layers limit. Thus we obtain the desired power-low decay of correlations (\ref{eq:2point_corr_behaviour}). If the considered path of interest is a geodesic path, it consists of exactly the same nodes, so we may compare correlations after the inflation step explicitly:

\begin{equation}
\label{Delta_dev}
\begin{aligned}
\la \phi(\mu x_1)\phi(\mu x_2) \ra &=  \frac{1}{\ell_{12}^{2\Delta} \mu^{2 \Delta}} =  \la \phi(x_1)\phi(x_2) \ra \lambda_2^2, \\ 
\frac{1}{\mu^{2\Delta}} &= \lambda_2^2, \\
\Delta &= - \log_\mu \lambda_2 .
\end{aligned}
\end{equation} 

Note that the bulk operator affects the decay only by a scalar factor -- the trace. Therefore,  if the bulk operator has a normalized trace it does not affect the decay.

In the presented model the discussed conformal field theory lies on a cylinder, with the constant time slice being a circle. Thus to be precise one should consider an exact correlation function for the theory mapped onto a cylinder \cite{Ginsparg:1988ui}
\begin{equation}
    \la \phi_{cyl}(x_1) \phi_{cyl}(x_2) \ra =  R^{-2 \Delta} \frac{1}{\left[4 \sin\left(\frac{x_1 - x_2}{R}\right)\right]^\Delta}~,
\end{equation}
where $R$ is the radius of the cylinder. Notice, that in such a consideration, in the inflation step, we must take into account not only the increase of the field's positions but the radius $R$ as well.
Thus the exact inflation step corresponds to
\begin{equation}
\begin{aligned}
\la \phi_{cyl}(x_1) \phi_{cyl}(x_2) \ra \to &\;  {(\mu R)}^{-2 \Delta} \frac{1}{\left[4 \sin\left(\frac{\mu( x_1 - x_2)}{\mu R}\right)\right]^\Delta} =  
{(\mu R)}^{-2 \Delta}  \frac{1}{\left[4 \sin\left(\frac{x_1 - x_2}{R}\right)\right]^\Delta} \\
&  = \frac{1}{\mu^{2 \Delta}} \la \phi_{cyl}(x_1) \phi_{cyl}(x_2) \ra~,
\end{aligned}
\end{equation}
which is consistent with the scaling in Eq.\eqref{Delta_dev}.

In the generic case, the reduced node doesn't need to be a normal matrix, thus the next-to-leading eigenvalue corresponds to a set of Jordan blocks of this matrix rather than just one eigenvector.
Since the general scenario reproduces the same result, we move its technical discussion to the Appendix \myref{App:Jordan}{D}.

\subsection{Three point correlation functions}

Three-point and higher correlation functions differ significantly from two-point one by the much greater impact of the bulk operator on their form. To preset this influence explicitly we start, similarly as previously, by considering all possible reductions one can perform while calculating the correlations.

\begin{thm}
\label{3_point_correlation}
The three-point correlation function between three distinct boundary indices is nontrivial (nonzero) only if there exists a path between two of them and a path leading from the third index to some tensor on the first path. 
While calculating the correlation all tensors will reduce, except the ones on those paths.

Furthermore, there exists no configuration of three boundary indices such that each pair of them is connected by a path.
\end{thm}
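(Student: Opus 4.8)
The plan is to handle the three assertions of Theorem~\ref{3_point_correlation} separately, recycling the reduction machinery of Theorem~\ref{2_point_correlation} for the first two and reserving the hyperbolic-geometry argument of Lemma~\ref{Lem:unique_path} for the last. I would begin, as in the two-point case, by writing the correlator as $\Tr[\mathcal{V}\,\mathcal{O}_B\,\mathcal{V}^\dagger(v_1\otimes\id\otimes v_2\otimes\id\otimes v_3\otimes\cdots)]$ and applying the boundary reduction repeatedly: any tensor exposing three adjacent uncontracted boundary indices collapses, together with its conjugate, to identities, revealing a new boundary to be processed. Iterating this cascade inward, a tensor survives only if it can never present three adjacent free indices, i.e. it keeps at least two non-adjacent indices linking it either to another surviving tensor or to one of the distinguished boundary indices $x_1,x_2,x_3$. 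Hence the skeleton of the surviving set is a union of paths in the sense of the definition preceding Lemma~\ref{Lem:unique_path}.

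Next I would identify the topology of these survivors. For three marked points the only configuration that does not fully reduce is a tripod: a path $t_{12}$ joining two points, say $x_1$ and $x_2$, together with a branch issued from $x_3$ that terminates on a tensor of $t_{12}$, the junction. The key observation is that a path cannot end in the interior; its loose end must reach a distinguished boundary index or attach to another surviving tensor. Were the $x_3$ branch unable to meet $t_{12}$, it would reduce completely and the correlator would factorize as $\la\phi(x_3)\ra\,\la\phi(x_1)\phi(x_2)\ra$, which vanishes since $\Tr[v_3]=0$ by \eqref{eq:cor1}. The junction tensor carries three non-adjacent links, does not reduce, and is exactly where the bulk operator leaves its imprint; the residual partial reductions of the junction and of the path tensors are identical to those already performed in Theorem~\ref{2_point_correlation}. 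This establishes the first two claims.

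For the final statement I would argue by contradiction. Suppose genuine paths $t_{12},t_{23},t_{13}$ connect all three pairs. Since each pair shares a boundary endpoint, $t_{12}$ and $t_{13}$ both pass through the tensor carrying $x_1$ and run together up to a last common node $j_1$, after which they diverge; analogously define $j_2$ from $(t_{12},t_{23})$ and $j_3$ from $(t_{13},t_{23})$. The three interior segments $j_1j_2\subset t_{12}$, $j_2j_3\subset t_{23}$ and $j_1j_3\subset t_{13}$ then bound a closed region of the tiling. I would exclude this loop exactly as in Lemma~\ref{Lem:unique_path}: regarding the direct side $j_1j_2$ and the detour $j_1j_3\cup j_3j_2$ as two curves with common endpoints $j_1$ and $j_2$, attach to each a geodesic through the edges at which it leaves $j_1$; because the curves rejoin at $j_2$ and a path can only be tangent to, never cross, such a geodesic, the two geodesics would be forced to meet a second time, contradicting the single-intersection property of geodesics on the Poincar\'e disk.

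I expect the genuine difficulty to lie entirely in this last claim. The naive continuum picture is unhelpful, since the three bulk geodesics ending at $x_1,x_2,x_3$ form a perfectly admissible ideal triangle, so the contradiction must be extracted at the discrete level from the tangency relation of Lemma~\ref{Lem:unique_path}. The delicate point is the junction $j_3$, where the detour curve turns from one segment to the other: I must verify that this turn still respects the \emph{tangent, never transversal} relation to its associated geodesic, so that the doubly-intersecting geodesic pair really arises even though the shared endpoints $j_1,j_2$ are interior rather than boundary nodes. Checking this junction behaviour, and confirming that the three junctions are genuinely distinct so that the enclosed region is nondegenerate, is where I would concentrate the effort.
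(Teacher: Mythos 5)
Your handling of the first two claims follows the paper's route (iterated boundary reduction; survivors must retain two non-adjacent live links; the $x_3$ branch must terminate on $t_{12}$ or the correlator factorizes into a vanishing one-point term), but there is a genuine gap in your proof of the final claim, and it sits exactly where you predicted. You try to exclude a triangle of paths by concatenating $j_1j_3\cup j_3j_2$ into a single ``detour'' curve and rerunning the double-intersection argument of Lemma~\ref{Lem:unique_path}. That argument needs the detour to never cross the geodesic attached to it at $j_1$; but that geodesic is tangent only to the segment $j_1j_3\subset t_{13}$, and after the turn at $j_3$ the curve follows $t_{23}$, which bears no tangency relation to it and may cross it freely. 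So the two geodesics are not forced to meet a second time and no contradiction is obtained. The paper's proof uses a different and essential ingredient that your proposal never invokes: in the $\{5,4\}$ tiling four pentagons meet at each vertex, so the three geodesics $l_{12},l_{13},l_{23}$ tangent to the three path segments intersect pairwise at angles of $360^{\circ}/4=90^{\circ}$, producing a hyperbolic triangle with angle sum $270^{\circ}>180^{\circ}$. This angle-sum (Gauss--Bonnet) contradiction is the actual mechanism; your own remark that the continuum ideal triangle is ``perfectly admissible'' shows why the mere existence of a closed geodesic triangle cannot be the contradiction --- it is the quantitative $90^{\circ}$ corners that kill it.

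There is also a structural ordering problem. The paper proves the ``furthermore'' claim \emph{first} and then uses it to conclude that no node can be surrounded by path segments without belonging to one, which is what licenses the statement that everything off the tripod reduces. You assert the tripod topology before establishing the no-triangle claim, so you have not ruled out a closed region of non-reducible tensors enclosed by path segments --- a configuration that genuinely occurs for four-point functions (the $2\times k$ rectangles of Fig.~\ref{fig:4point_corr}) and is excluded for three points only by the triangle impossibility. You should prove the last claim first, by the angle-sum argument, and then derive the reduction statement from it.
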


We note that the requirements for a non-vanishing three-point correlation function are, once again, a side effect of hyperbolic space discretization. In AdS space any two points can be connected via geodesic, thus the three-point correlation function would not disappear for any points' configuration on the boundary.

\begin{proof}

Let us start by showing that there cannot exist $3$ paths connecting each pair of distinct indices. Lets consider three such paths $t_{1,2}, t_{2,3}, t_{1,3}$, and mark their last  common nodes  $n_1, n_2, n_3$. 

Similarly as in the proof of the Lemma \ref{Lem:unique_path}
one can replace paths $t_{1,3}$, $t_{2,3}$ by two geodesics $l_{1,3}, l_{2,3}$, that contains the edges of $n_1$ and $n_2$, "follows" $t_{2,3}$ and $t_{1,3}$, and are between those paths. Because paths $t_{1,3}$ and $t_{2,3}$ cannot cross geodesics $l_{1,3}$ and $l_{2,3}$, those geodesic must intersecting in the vertex of some node $n_3'$ on the same side as $n_3$ but not further from the path $l_{1,2}$. 
However, in a similar manner, we can also replace the path $t_{1,2}$ with a geodesic $l_{1,2}$, that intersects with $l_{1,3}$ and $l_{2,3}$.

In the used tiling of the AdS space four pentagons meet in each vertex, so the angles between intersecting geodesic are $360^{\text{o}}/4 = 90^{\text{o}}$. Thus using geodesics $l_{1,2}$, $l_{2,3}$ and $l_{1,3}$ we constructed a triangle on the hyperbolic plane with a sum of inner angles larger than $180$ degrees, which gives a contradiction.

The important conclusion from the above property is that there exist no nodes surrounded by paths, and do not belong to any of those.
Therefore using the same procedure, as discussed in the proof of Theorem \ref{2_point_correlation} one can reduce all tensors except ones on the path connecting two distinct indices, and the the path adjoining the last index.
\end{proof}

\begin{figure}[h]
\centering
\begin{subfigure}[h!]{0.5\textwidth}
\centering
\subcaptionbox{}{
            \includegraphics[width=3in]{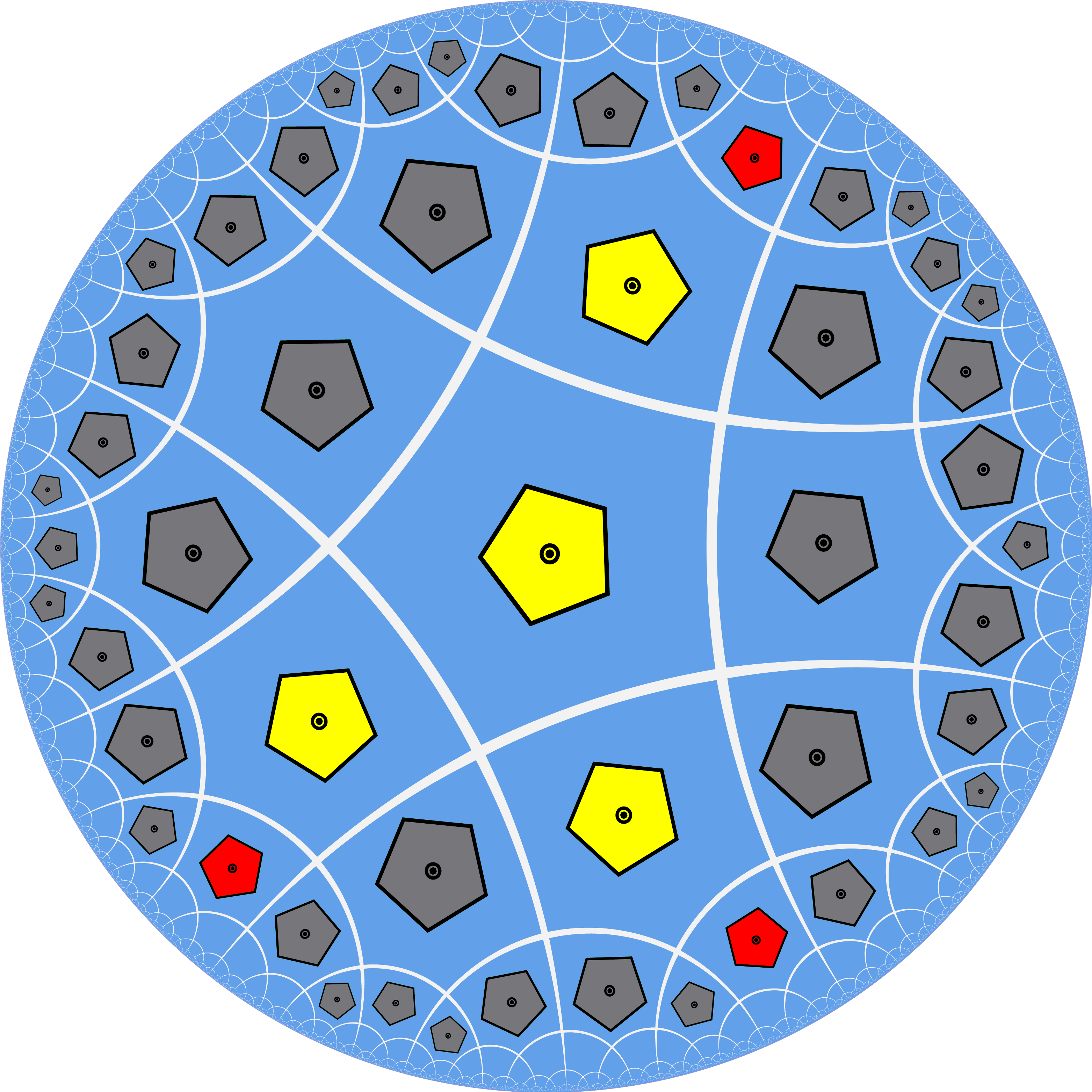}}

\end{subfigure}%
~
\begin{subfigure}[h!]{0.5\textwidth}
\centering
\subcaptionbox{}{
            \includegraphics[width=3in]{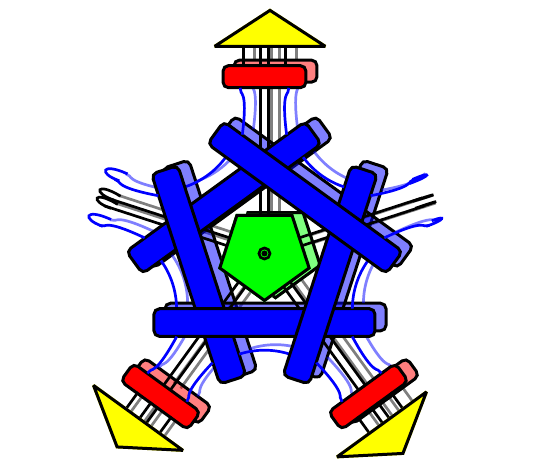}}
\end{subfigure} 
\\
\caption{\textbf{(a)} Exemplary reduction of tensor network for three-point correlation function. Red tensors possess distinguished indices, while gray tensors can be reduced in the same manner as for 2-point correlation.\newline
\textbf{(b)} The central node at the intersections of the paths after all possible reductions. The conjugations of each remaining matrix are denoted as their pale copies. The yellow triangles correspond to projections onto eigenvectors corresponding to the next-to-leading eigenvalues and the bulk indices are yet uncontracted with the bulk operator.}
\label{fig:3-point_corr}
\end{figure}

Note that all bulk indices are traced out during the reduction except the one on the node connecting two paths, in which the perfect tensor is not reduced. Therefore even a local operator, if placed correctly, may affect the correlation function in a non-trivial way.
%
%
\begin{cor}
\label{cor_3point}
With the same assumptions as in Corollary \ref{cor_2point}, the obtained three-point correlation function has the desired form \eqref{3point_corr_behaviour} with the same scaling dimension as 2-point correlation functions \,\,\, $\Delta = -\log_{\mu}\lambda_2$. The proportionality coefficient $C_{123}$ depends
on the reduction of the bulk operator to the node in which paths intersect.
\end{cor}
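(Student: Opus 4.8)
The plan is to combine the topological reduction of Theorem \ref{3_point_correlation} with the spectral analysis underlying Corollary \ref{cor_2point}. By Theorem \ref{3_point_correlation}, after carrying out every admissible contraction the only surviving tensors form three geodesic arms -- of $n_1$, $n_2$ and $n_3$ nodes -- meeting at a single central node, the unique node in which the perfect tensor is \emph{not} reduced and into whose bulk index the operator $\mathcal{O}_B$ is contracted (cf. Fig. \ref{fig:3-point_corr}(b)). First I would read each arm as a one-dimensional transfer line: every simplified node along arm $i$ is the single-node matrix of Lemma \ref{Lem:largest_eig}, whose spectral radius equals the normalisation factor $\lambda_1 = d^5 = 32$ and whose next eigenvalue is $\lambda_2$. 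Raising this matrix to the arm length then produces, in the large-network limit, a clean separation of scales governed by the two leading eigenvalues.

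Next I would invoke the tracelessness of the probing observables $v_i$. As noted after Lemma \ref{Lem:largest_eig}, the leading eigenvector of the node matrix corresponds to the trace of the boundary observable, so contracting the free boundary end of arm $i$ with a traceless $v_i$ annihilates the leading-eigenvalue channel. The dominant surviving contribution of arm $i$ is therefore the next-to-leading one, scaling as $\lambda_2^{\,n_i}$ and projecting the arm onto the next-to-leading eigenvector (the yellow triangles of Fig. \ref{fig:3-point_corr}(b)). All lower eigenvalues, together with the Jordan-block corrections present when the node matrix is non-normal, decay strictly faster and are discarded exactly as in the two-point case; the honest bookkeeping is deferred to the final Appendix E. I would then treat the central node as a fixed trilinear form: feeding the three next-to-leading eigenvectors into its three path legs and contracting the surviving perfect tensor against $\mathcal{O}_B$ collapses the diagram to a single scalar, which is precisely the field-dependent coefficient $C_{123}$. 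By construction this scalar is insensitive to the arm lengths and depends only on the reduction of the bulk operator to the intersection node, which is the content of the last sentence of the statement; the length dependence thus factors out as $\lambda_2^{\,n_1+n_2+n_3}$.

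It remains to convert this into the geometric form. Using the scaling factor $\mu = 2+\sqrt{3}$ exactly as in \eqref{Delta_dev}, the geodesic joining $x_i$ to $x_j$ runs along arm $i$ and arm $j$ through the central node and hence contains $n_i+n_j$ nodes, so that $\ell_{ij}\sim\mu^{(n_i+n_j)/2}$, consistently with the two-point relation read off from \eqref{Delta_dev}. Multiplying the three boundary distances gives $\ell_{12}\,\ell_{23}\,\ell_{13}\sim\mu^{\,n_1+n_2+n_3}$, while the arm contribution is $\lambda_2^{\,n_1+n_2+n_3}=\mu^{-\Delta(n_1+n_2+n_3)}$ with $\Delta=-\log_\mu\lambda_2$. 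Combining these,
\begin{equation*}
\la \phi(x_1)\phi(x_2)\phi(x_3)\ra \;\sim\; C_{123}\,\mu^{-\Delta(n_1+n_2+n_3)} \;=\; \frac{C_{123}}{(\ell_{12}\,\ell_{23}\,\ell_{13})^{\Delta}},
\end{equation*}
which is precisely \eqref{3point_corr_behaviour} with the claimed scaling dimension.

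The hard part is the geometric step. Theorem \ref{3_point_correlation} supplies only a $Y$-shaped configuration (one path between two indices plus a path from the third to it), not a triangle, yet the three boundary conformal distances $\ell_{ij}$ must each be expressed through a geodesic passing through the central node. One therefore has to justify, in the $\{5,4\}$ tiling, that the pairwise boundary geodesics decompose \emph{additively} into the arm-node counts $n_i+n_j$ for a genuinely trivalent, non-collinear configuration in which $n_1,n_2,n_3$ vary independently -- a subtlety absent from the single-geodesic two-point case. Establishing this additivity, and checking that the three arms feed the central node without cross-contamination between their next-to-leading channels, is where the argument is most delicate and where the explicit treatment of Appendix E is required.
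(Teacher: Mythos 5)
Your proposal follows essentially the same route as the paper: reduce to the $Y$-shaped leftover of Theorem \ref{3_point_correlation}, treat each arm as the transfer matrix of Lemma \ref{Lem:largest_eig} so that traceless probes kill the leading channel and project onto the next-to-leading eigenvectors, and obtain $C_{123}$ by contracting those eigenvectors and the traced-down bulk operator at the intersection node. The geometric additivity $\ell_{ij}\sim\mu^{(n_i+n_j)/2}$ that you flag as the delicate step is in fact glossed over in the paper as well (it only invokes the inflation-step scaling "by the same arguments as in Corollary \ref{cor_2point}"), so your treatment is, if anything, more explicit than the original.
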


Consider a leftover tensor network while calculating three-point correlation functions, Fig. \ref{fig:3-point_corr}, as three geodesic paths going from the intersection tensor to distinct indices on the boundary, and the intersection tensors itself. Because each of those paths behaves exactly the same as in the case of two-point correlation functions, each bulk index of the path's tensors is traced out, thus the only remaining bulk index stays in the intersection of paths.

Moreover, if we assume that each geodesic path is long, by the same arguments as in Corollary \ref{cor_2point} the correlations must be described by \eqref{3point_corr_behaviour} with the same scaling dimension. The only new part is the proportionality coefficient $C_{123}$.
One can calculate it by tracing the bulk operator over all except one bulk index and placing it onto the intersection node. Moreover, the connections of the intersection node to the paths may be replaced by the projections to eigenvectors corresponding to next-to-leading eigenvalues of path nodes, as presented in Figure \ref{fig:3-point_corr} (b).

\subsection{Higher correlation functions}

Using the same methods one can extend analysis into other correlation functions. However, the higher the order of the correlation function, the more complex analysis becomes.
New phenomenon that appears at the level of the four-point correlation function, is a set of nodes that can not be reduced,
even though they do not belong to the paths connecting distinct boundary indices. Such an effect does not occur for three-point correlation functions,
since it would lead to the construction of a triangle with three inner angles equal to $\pi/2$ on the hyperbolic plane -- consult 
proof of Theorem \ref{3_point_correlation}.

However, while studying the four-point correlation function one can consider a "rectangle" of nodes of shape $2\times k$, as a segment of two parallel neighboring paths, tangent to a geodesic between them. Such a set of nodes cannot be reduced, for example, if two shorter sides of this "rectangle" are contained in paths that connect distinct boundary indices, see Fig \ref{fig:4point_corr}. 
\begin{figure}[h!]
    \centering
    \includegraphics[width=0.4\linewidth]{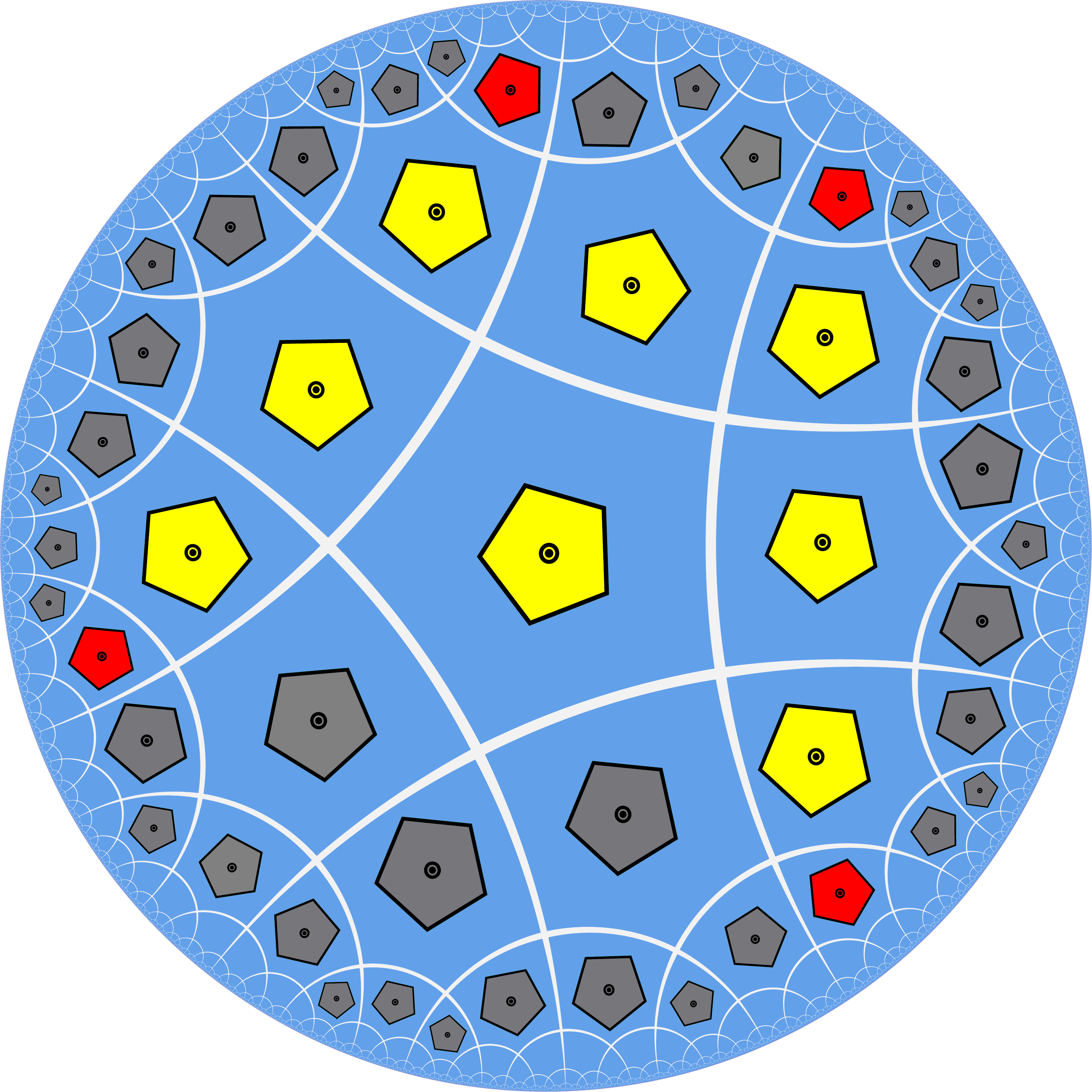}
    \caption{Exemplary reduction of tensor network for correlation function between four boundary indices with corresponding tensors marked red: scenario in which some non-reduced tensors do not belong to paths between distinct boundary indices.}
    \label{fig:4point_corr}
\end{figure}

For higher point correlation functions
such cliques of non-reduced nodes  become larger and more complicated. Similar phenomena were observed in the reductions of hyper-invariant codes \cite{Steinberg2023}. 

Despite those complications, the argument regarding the decay of correlation functions still holds. Adding one more layer to the tensor network will extend each path to a distinct boundary index by one node, resulting, in a large network limit, in the appropriate scaling, analogously as in Corollary \ref{cor_3point}. Furthermore, convoluted formulas for higher point correlation functions, which cannot be reduced into lower order correlations, display a highly interactive nature of speculated field theory simulated at the boundary.

\section{Numerical examples}\label{Sec:numerics}

In this Section, we present a numerical discussion of available scaling dimensions originating from two-point correlation functions between two boundary indices connected via a geodesic path.

As we demonstrated in Corollary \ref{cor_2point} the important information about correlation decay is encompassed in the next to the leading eigenvalue of a reduced node - the building block of the path.
First, we show an exemplary distribution of this eigenvalue and corresponding scaling dimension for two simple 2-parameter families of tensors and then we try to explore the entire range of possible values using random building blocks of the tensor studied.

Simplest, yet nontrivial families of hyper-invariant tensors were obtained by replacing arbitrary $2$ qubit dual (blue) unitary with a one-parameter family
\begin{equation}
\label{blue_u_example}
{U^{(2)}}_{ij,lk} =
\frac{1}{2}
\left(
\begin{array}{cccc}
 1 & 0 & 0 & 0 \\
 0 & 0 & 2 & 0 \\
 0 & 1+e^{i \pi  a} & 0 & 1-e^{i \pi  a } \\
 0 &  1-e^{i \pi  a } & 0 & 1+e^{i \pi  a } \\
\end{array}
\right) = CNOT^a \textit{SWAP},
\end{equation}

with exponent $a\in [0,1]$,
interpolating between the gates \textit{SWAP} and $DCNOT=CNOT\; $\textit{SWAP} -- see Appendix \myref{App:2systems}{C}.

\begin{figure}[h!]
    \centering
    \begin{subfigure}[h!]{0.48\textwidth}
        \centering
 \subcaptionbox{}{
            \includegraphics[width=2.5in]{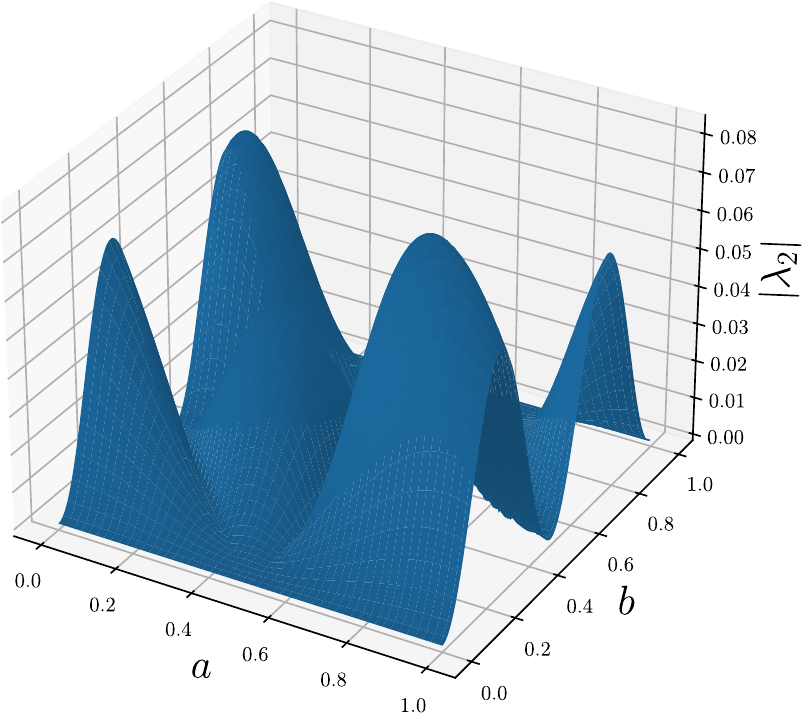}
    }
    \end{subfigure}%
    ~
    \begin{subfigure}[h!]{0.48\textwidth}
        \centering
\subcaptionbox{}{
            \includegraphics[width=2.5in]{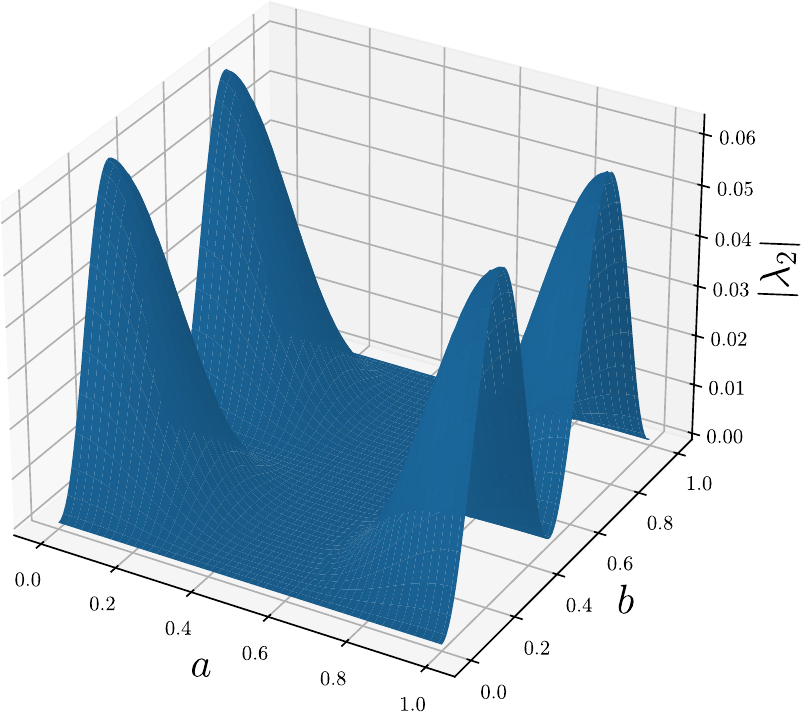}}
        
    \end{subfigure} \\
    \vspace{0.5cm}
    \begin{subfigure}[h!]{0.48\textwidth}
        \centering
        \subcaptionbox{}{
            \includegraphics[width=2.5in]{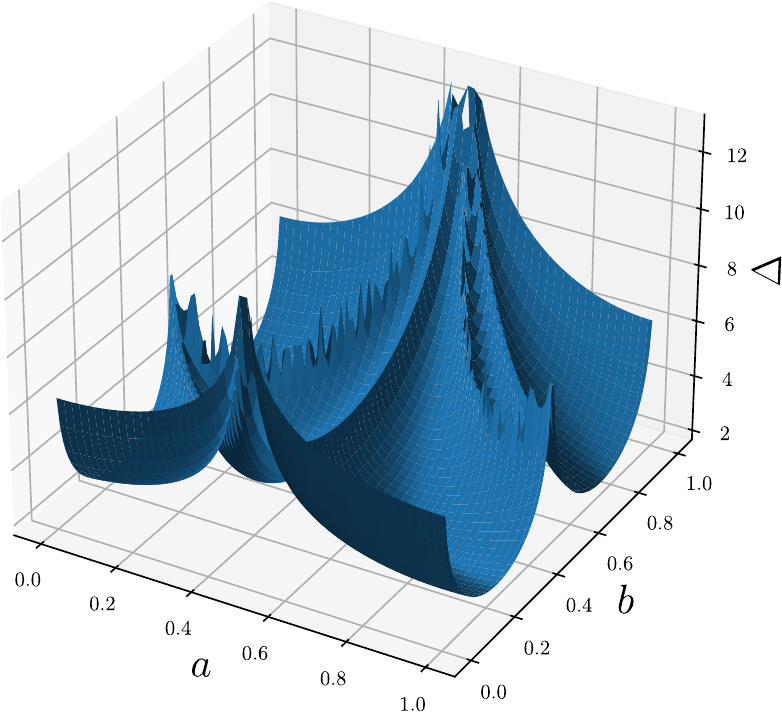}}
    \end{subfigure}%
    ~ 
    \begin{subfigure}[h!]{0.48\textwidth}
        \centering
        \subcaptionbox{}{
                \includegraphics[width=2.5in]{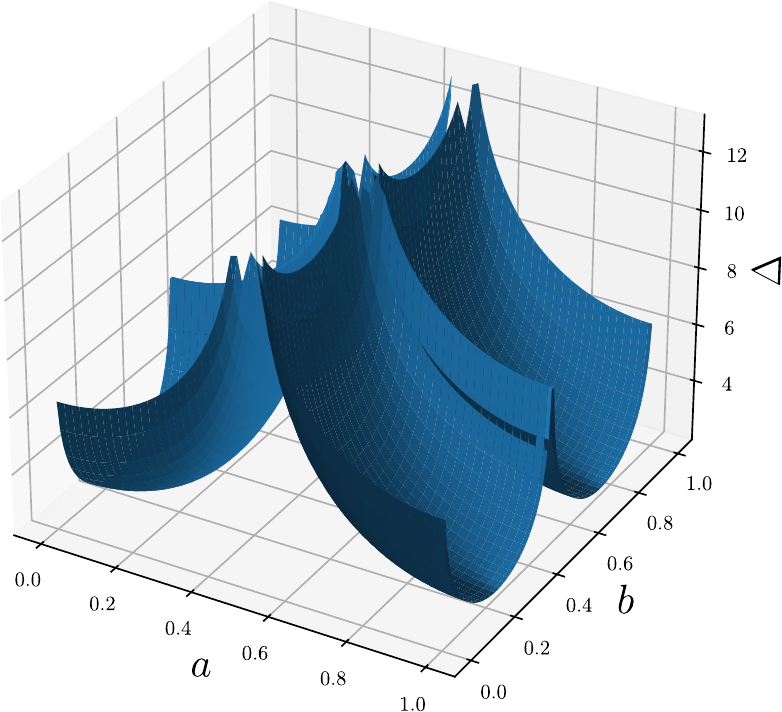}}
    \end{subfigure}%
    
    \caption{Plot of the next-to-leading eigenvalues $\lambda_2$ \textbf{(a,b)} of a reduced node within the geodesic path and corresponding scaling dimensions $\Delta$ \textbf{(c,d)} for nodes constructed according to two simple families discussed in the text \eqref{blue_u_example}, \eqref{red_u_example1} and \eqref{blue_u_example}, \eqref{red_u_example2} correspond to \textbf{(a,c)} and \textbf{(b,d)}. 
    }
    \label{fig:distribution_within_families}
\end{figure}
The freedom to choose example $2$ ququart unitaries is much larger. However, for the sake of simplicity, we restricted ourselves to two cases. First was a product of two $CNOT^{b}$ gates, with power $b \in [0,1]$, interpolating between identity and $CNOT$ (see App. \myref{App:2systems}{C}), each acting on one frame's qubit and one perfect tensor's qubit, with the frame qubit being the control one.
The second family of $2$ ququart unitaries was similar but this time we first on $CNOT$ onto a perfect tensor's qubits and with $CNOT^{b}$ in the opposite direction, and power $b \in [0,1]$, effectively interpolating between $CNOT$ and $DCNOT$ (see App.\myref{App:2systems}{C}).  Thus, those two cases may be summarized as
\begin{equation}
\label{red_u_example1}
U = \left(CNOT_{F\to T}^b \right)^{\otimes 2},  
\end{equation}
\begin{equation}
\label{red_u_example2}
U =  \left(CNOT_{F \to T}CNOT_{T\to F}^b\right)^{\otimes 2} ,
\end{equation}
where the subscript $F\to T$ or $T\to F$ indicates which qubits, the one coming from perfect tensor or hyper-invariant frame, was the control one.
The obtained values of the next-to-leading eigenvalue and corresponding scaling dimensions are presented in Figure \ref{fig:distribution_within_families}.  
For other simple families of similar structures, we found alike behaviour.

To probe the entire range of possible eigenvalues for reduced tensor nodes we considered random dual unitary and $2$ ququart unitary. Both unitaries were sampled with the Haar measure. However, to obtain the required properties of dual unitarity we used a Sinkhorn algorithm, presented example in \cite{36_officers_of_Karol}, using a random unitary sample as a starting point.
The distribution of obtained eigenvalues and corresponding scaling dimensions are presented in Figure \ref{fig:distribution_delta_random}.

\begin{figure}[h!]
    \centering
    \begin{subfigure}[h!]{0.48\textwidth}
        \centering
 \subcaptionbox{}{
            \includegraphics[height=2.2in]{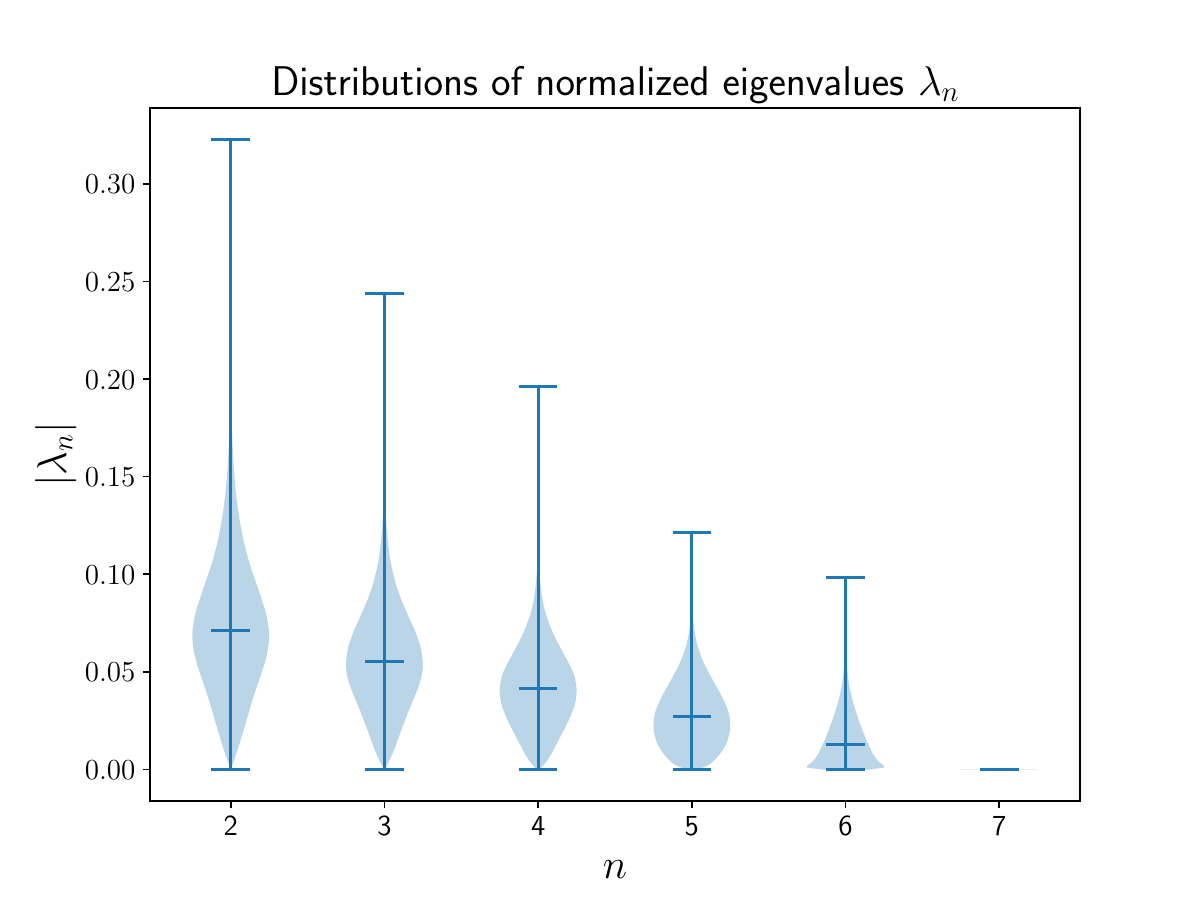}
    }
    \end{subfigure}%
    ~
    \begin{subfigure}[h!]{0.48\textwidth}
        \centering
\subcaptionbox{}{
            \includegraphics[height=2.2in]{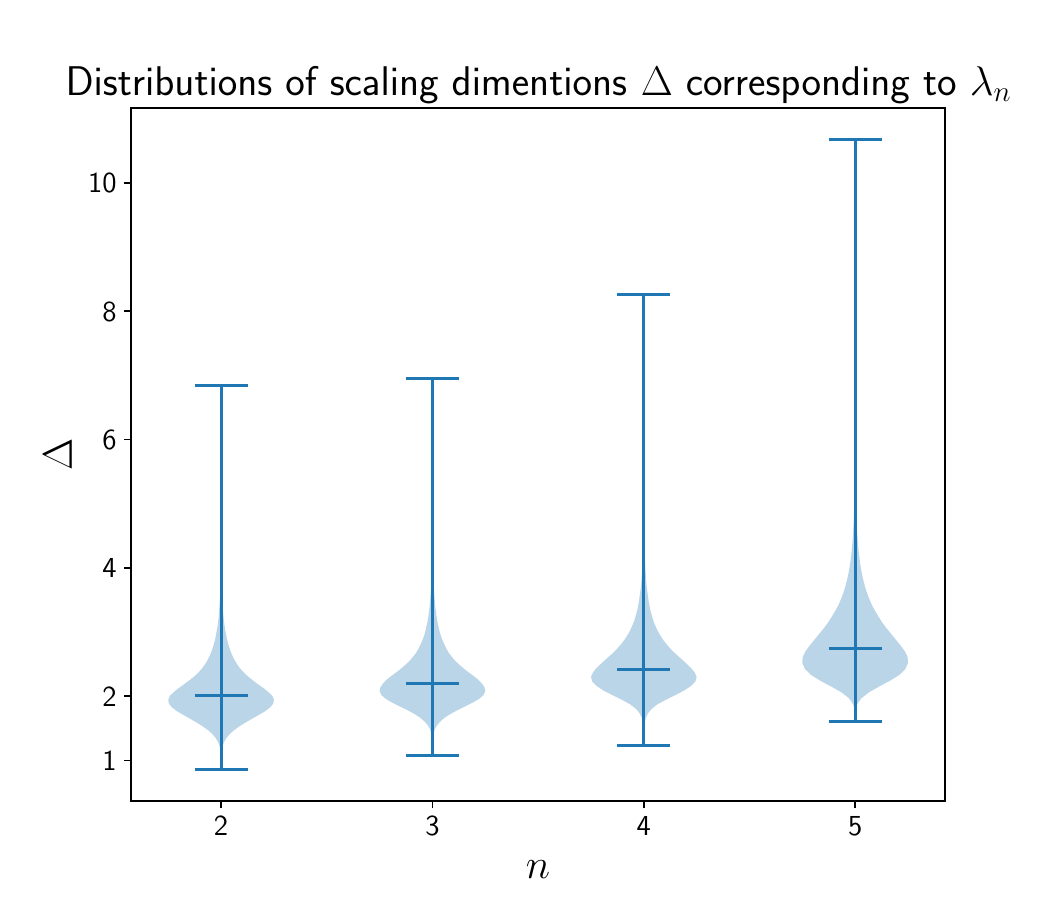}}
    \end{subfigure} 
    \caption{Violin plot of the next-to-leading eigenvalues $\lambda_2,\cdots,\lambda_6$ \textbf{(a)} of a reduced node within the geodesic path and corresponding scaling dimensions $\Delta$ \textbf{(b)} for nodes constructed form random unitary matrices. The number of random samples was equal to $10^6$. The maximal minimal and median values are highlighted.
    }
    \label{fig:distribution_delta_random}
\end{figure}

As we mentioned in the introduction, the HaPPY tensor network does not reproduce the expected form of the two-point correlation functions, unless we interpret them as decaying infinitely fast \eqref{HaPPy_correlation}.
However, if we break some of the HaPPY's symmetry, as we intentionally did in our construction, the scaling dimensions may obtain different nontrivial values. In our numerical study, we find the minimum and maximal values of the scaling dimension to be in the range $\Delta \in (0.858852,6.841722)$. Since the minimal value of the scaling dimension is greater than zero $\Delta > 0$, that guarantees simulated conformal (scalar) field theory to be unitary \cite{Poland:2018epd}. Moreover, since $\Delta = 1$ is within the range on our model, we can simulate the primary fields \(\phi_\varepsilon\) with conformal weights \((\frac{1}{2}, \frac{1}{2})\), which corresponds, for instance,  to the energy \(\varepsilon\) in the Ising model discussed in \cite{Pfiefer2009}. This indicates that novel constructions based on the foundations of the HaPPY model could indeed be tuned to simulate wider ranges of properties for AdS/CFT correspondence.

Similar results were earlier obtained by other approaches -- see  \cite{Steinberg2022}. However, up to our knowledge, none of the derivations concerns the networks simulating bulk-boundary correspondence, led to the expected form of the two-point correlation functions.

\section{Concluding Remarks}

In this work, we constructed and investigated tensor networks inspired by the AdS/CFT correspondence. By combining perfect tensors \cite{HaPPy_code} with hyper-invariant ones \cite{Evenbly2017}, as illustrated in Figure \ref{fig:Fig3}(b), we were able to benefit from the desired properties streaming from both of them. Using perfect tensors we reestablished a simple, yet elegant mapping from the bulk to the boundary Hilbert space, and thanks to the hyper-invariant frame we obtained nontrivial correlations at the boundary.
Our construction is designed to resemble conformal field theory (CFT) on the boundary of a $\{5,4\}$  hyperbolic tessellation of the Poincar{\'e} disk.

Inspired by the field-theoretic perspective, instead of discussing quantum states, we focused on mapping bulk operators to boundary ones by "sandwiching" them between the constructed tensor network and its conjugation. Such an approach 
allows us to achieve 
the desired behavior of two- and three-point correlation functions at the boundary, while considering the image of the bulk operators. 
This indicates that, together with hyperbolic geometry, we have established a bulk-boundary relation for the presented tensor network. The obtained model presents a new perspective for studying hyper-invariant tensor networks through explicit holographic-like mappings, which could be interesting from the viewpoint of quantum error correction,
as recently proposed for
Evenbly codes  \cite{Steinberg:2024ack}.

Due to  flexibility in our construction, we managed to explore a wide range of scaling dimensions $\Delta$ 
within simple examples. This highlights the ability of the model to be tailored for desired behavior, corresponding to certain conformal fields of interest. For instance, $\Delta = 1$ corresponds to energy $\epsilon$ in the Ising model at the critical point \cite{Pfiefer2009}.

The natural direction of further study would be to generalize the presented tensor network for different tessellations of Poincar\'e disc. Our methodology advocated here for perfect tensors with $6$ indices, can be applied to any other perfect tensor, with order greater or equal $6$ and "rotational symmetry".
This approach may directly connect the curvature of AdS space, represented by the different tilings, with the properties of conformal field theory, expressed in correlation functions.
The question, 
whether the proposed approach enables one to obtain a discrete counterpart of the Virasoro algebra remains open.
Such a task, already achieved for quantum CFT \cite{Jahn2022},
allows one to recover the energy-momentum tensor of boundary conformal field theory. Addressing this problem can
provide insights into the gravitational aspects of a compatible bulk theory, thereby deepening the connection between holography and tensor networks. This direction of research could substantially advance comprehension of the interplay between quantum gravity and quantum information theory.

\medskip
\noindent
{\it Acknowlegments}

It is a pleasure to thank Mario Flory, Felix Huber, Romuald Janik, Gerard Munn{\'e}, Zahra Raissi and Marcin Rudzi{\'n}ski for their fruitful remarks and suggestions.
This work was supported by the National Science Centre, Poland, under the contract number 2021/03/Y/ST2/00193
within the DQuant QuantERA II project
that has received funding from the European Union’s Horizon 2020 research and innovation programme under Grant Agreement No 101017733.

\bigskip

\appendix

\section{Appendix A. Building blocks of the construction proposed}\label{App:Tensor_node}

In this Appendix, we present a complete and explicit formula for the construction of the proposed hyper-invariant tensor, following the steps presented in the main part of the work.

The first ingredient of our recipe is a perfect tensor  $T$ of rank $6$ with cyclic symmetry on all except first index $T_{i,jklnm} = T_{i,njklm}$. Although we didn't manage to obtain such a form for the smallest possible local dimension $2$, the construction of a clear example can be provided already for the perfect tensor of the local dimension $D = 4$.
As perfect tensor we choose a construction of three orthogonal Latin cubes \cite{Goyeneche:2015fda} presented in Figure \ref{hyper_orto_fig}. Latin cube is an example of a combinatorial design, which can be represented as a cube $L_{a,b,c}$ with discrete values such that each hyper-row of the cube is filled with numbers $1,\cdots, D$ without repetition. Orthogonality of Latin cubes can be expressed as an orthogonality between their consecutive layers treated as Latin squares \cite{bistron2023genuinely}.

Using three orthogonal Latin cubes $L_{a,b,c}^{(1)}, L_{a,b,c}^{(2)}, L_{a,b,c}^{(3)},$ one can construct perfect tensor of minimal support  in the following way

\begin{equation}
\label{multi_u_Latin}
T_{ijklmn} = \left\{
\begin{matrix}
&1& \;\text{ if }\; i = L_{lmn}^{(1)},\;  j = L_{lmn}^{(2)}\;,  k = L_{lmn}^{(3)} \\
&0& \text{ otherwise } \\
\end{matrix}
\right. ~,
\end{equation}
with each index in this example going from $1$ to $4$.

\begin{figure}[ht]
    \begin{center}
        \includegraphics[width=4in]{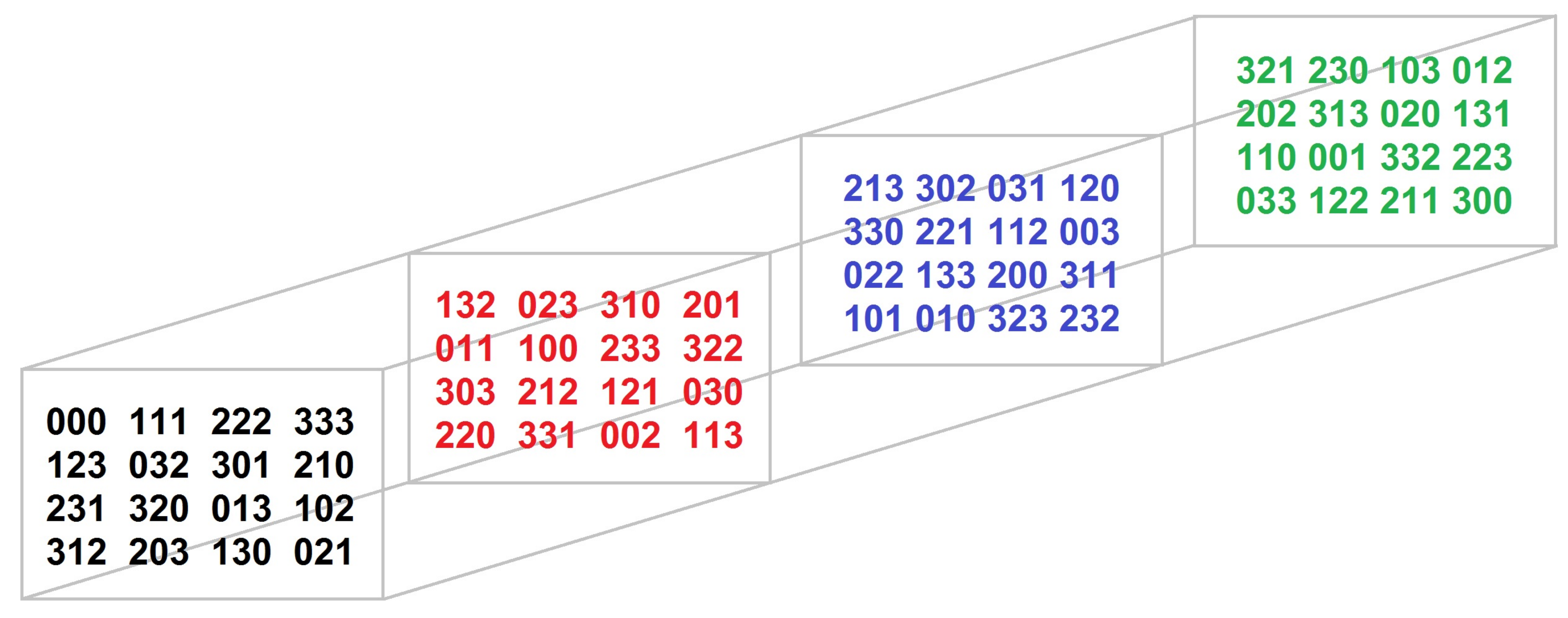}
    \end{center}
    \caption{\label{hyper_orto_fig}
\small{Three orthogonal Latin cubes of dimension $D = 4$ corresponding to 
an AME state of six ququarts and
the perfect tensor with six indices running from one to four
\eqref{multi_u_Latin}. 
Figure borrowed from~\cite{Goyeneche:2015fda} }}.
\end{figure}

The perfect tensor constructed from three Latin cubes presented in Figure \ref{hyper_orto_fig} doesn't have the desired symmetry yet. However, it can be obtained if we suitable permute the values of its four last indices $\tilde{T}_{ijklmn} = T_{ij \pi(k)\rho(l)\sigma(m)\tau(n)}$, with the permutations
\begin{equation*}
\pi = \rho = \tau =  (1, 3, 4, 2) ~,~~~ \sigma = \id = (1, 2, 3,4)~,
\end{equation*}
which can be also interpreted as appropriate permutations of layers in Latin cubes and indices the third one of them.

The next step of our recipe is the construction of a hyper-invariant frame $F$. Similarity as \cite{Evenbly2017}, we build then with bipartite double unitary matrices, which satisfy not one, but two orthogonality relations
\begin{equation}
\label{double_u_explicit}
{U^{(2)}}_{ij,kl} {\overline{U}^{(2)}}_{ab,kl} =\delta_{ia}\delta_{jb} ~,~ {U^{(2)}}_{ij,kl} {\overline{U}^{(2)}}_{aj,bl} =\delta_{ia}\delta_{kb},
\end{equation}
where each index corresponds to one subsystem of ququart -- qubit, thus the unitary has two "incoming" and two "outgoing" indices.
For further discussion and examples of such kinds of matrices see Appendix \myref{App:2systems}{C}.

Using such kind of matrices we can create a hyper-invariant frame $F$ as presented in Figure \ref{fig:Fig1}(b). The explicit formula for the frame is given by
\begin{equation}
\label{frame_construction}
F_{abcdefghij} = \sum_{klmno} {U^{(2)}}_{ibnk} {U^{(2)}}_{adol} {U^{(2)}}_{cfkm} {U^{(2)}}_{ehln} {U^{(2)}}_{gjmo}~.
\end{equation}

Finally, to construct the entire node $N$ we connect bulk indices of perfect tensor and hyper-invariant frame by unitary matrices. To do so, we group frame indices in pairs (ex.$ab\to v$)  $F_{abcdefghij}\to F_{vwxyz}$ and then act with the same unitary $U$ on each pair of frame indices and index of perfect tensor as presented in the Figure \ref{fig:Fig3}(b).
Note that in the discussed example local dimension of a perfect tensor $T$ is a square of the local dimension of a dual unitary gate: $D = d^2$.
Thus the proposed hyper-invariant tensor node $N$ has a form
\begin{equation*}
N_{i,jvkwlxnymz} = \sum_{j'v'k'w'l'x'n'y'm'z'} U_{jv,j'v'} U_{kw,k'w'} U_{lx,l'x'} U_{ny,n'y'} U_{mz,m'z'} \tilde{T}_{i'j'k'l'm'n'} F_{v'w'x'y'z'}~.
\end{equation*}
In the further consideration we group boundary indexes outgoing in each direction into one (ex.$jv\to\alpha$),  $N_{i,jvkwlxnymz} \to N_{i,\alpha\beta\gamma\delta\epsilon}$.

\section{Appendix B. On hyper-invariant tensor networks}\label{App:Hyperinv}

For the convenience of the readers in this Appendix we present
the definition of hyperinvariant tensor network and the Evenbly code, consistent with the notation from the rest of the work, following \cite{Evenbly2017, Steinberg:2024ack}.

\begin{defn}
Consider a $\{p,q\}$ tessellation of Poincar\'e disc, with $p$-gon tiling, such that $q$ of them meet in each vertex, with $p >3$.  A \textbf{hyper-invariant} tensor network is constructed from two types of tensors, $A_{i_1 \cdots i_p}$ placed in the tiles and $B_{j_1 j_2}$ connecting them on the edges, subject to the following criteria:

\begin{enumerate}
    \item $A$ is symmetric under cyclic index permutations, $A_{i_1 i_2 \cdots i_p} = A_{i_p i_1 \cdots i_{p-1}}$,
    \item $B$ is a symmetric unitary matrix, $B_{j_1 j_2} = B_{j_2 j_1}$ and $B B^\dagger = \id$,
    \item $A$ and $B$ satisfy following the isometry constraints:
\begin{equation*}
\begin{aligned}
& \sum_{i_2\cdots i_p}A_{i_1i_2\cdots i_p} B_{i_2 j_2}\cdots B_{i_p j_p} = V_{i_1; \;j_2 \cdots j_p}~, \\
& \sum_{\substack{i_2 \cdots i_p \\ i_2' \cdots i_p'}}  A_{i_1 i_2 \cdots i_p} B_{i_p i_2'} A_{i_1' i_2' \cdots i_p'} B_{i_2 j_2} \cdots B_{i_{p-1} j_{p-1}} B_{i_3' j_3'} \cdots B_{i_{p}' j_p'} = W_{i_1 i_1'; \; j_2\cdots j_{p-1} j_3' \cdots j_{p}'}~,
\end{aligned}
\end{equation*}
    with $V$, $W$ being some isometries from the product of spaces labelled with $i$ indices into the product of spaces labelled with $j$ indices.
\end{enumerate}
\end{defn}

An Evenbly code, called also hyper-invariant code \cite{Steinberg2023}, is a modification of the hyper-invariant tensor network, in which tensors $A$ have an additional index $i_0$ used to encode information from the bulk. The following definition is a slightly relaxed version of the one form \cite{Steinberg:2024ack}. 

\begin{defn}
Consider a $\{p,q\}$ tessellation of Poincar\'e disc, with $p$-gon tiling, such that $q$ of them meet in each vertex, with $p >3$. An \textbf{Evenbly code} is a tensor network constructed from two types of tensors, $A_{i_0, i_1\cdots i_p}$ placed in the tiles and $B_{j_1 j_2}$ connecting them on the edges, subject to the following criteria:

\begin{enumerate}
    \item[1.'] $A$ is symmetric under cyclic index permutations of boundary indices, $A_{i_0,i_1 i_2 \cdots i_p} = A_{i_0,i_p i_1 \cdots i_{p-1}}$,
    \item[2.'] $B$ is a symmetric unitary matrix, $B_{j_1 j_2} = B_{j_2 j_1}$ and $B B^\dagger = \id$,
    \item[3.'] $A$ and $B$ satisfy following isometry constrains:
\begin{equation*}
\begin{aligned}
& \sum_{i_2\cdots i_p}A_{i_0,i_1i_2\cdots i_p} B_{i_2 j_2}\cdots B_{i_p j_p} = V_{i_0,i_1; \;j_2 \cdots j_p}'~, \\
& \sum_{\substack{i_2 \cdots i_p \\ i_2' \cdots i_p'}}  A_{i_0,i_1 i_2 \cdots i_p} B_{i_p i_2'} A_{i_0,i_1' i_2' \cdots i_p'} B_{i_2 j_2} \cdots B_{i_{p-1} j_{p-1}} B_{i_3' j_3'} \cdots B_{i_{p}' j_p'} = W_{i_0 i_0',i_1 i_1'; \; j_2\cdots j_{p-1} j_3' \cdots j_{p}'}'~,
\end{aligned}
\end{equation*}
    with $V'$, $W'$ being some isometries from the product of spaces labelled with $i$ indices into the product of spaces labelled with $j$ indices,
    \item[4.] $A_{i_0,i_1 i_2 \cdots i_p}$ defines isometry from the space labeled by $i_0$ index into the product of spaces labelled with $i$ indices.
\end{enumerate}
\end{defn}

\section{Appendix C. On bipartite unitary matrices}\label{App:2systems}

Dual unitary matrices \cite{BKP19,BRRL24}
are a building stone of the proposed construction \eqref{frame_construction}.
In this Appendix, we discuss the bipartite unitary matrices of order $d^2$,
demonstrating particular features of
the subset of dual unitary matrices.
Most of the discussion concerns two-qubit gates, $d^2 = 4$,
however, some of the below-mentioned properties hold also for arbitrary local dimension $d$.

Any two-qubit unitary matrix $U_{AB}\in U(4)$
can be conveniently represented in
its Cartan form \cite{2_qubit_gate,HVC01},
\begin{equation}
U_{AB} = (u_A \otimes u_B) U_{int} (v_A \otimes v_B)~,
\end{equation}
where $u_A \otimes u_B$
and $v_A \otimes v_B$ represent local unitaries, while the matrix $U_{int}$ describes the interaction between both subsystems,
\begin{equation*}
U_{int} = \exp{\left\{\sum_i \alpha_i \sigma_i \otimes\sigma_i\right\}}~.
\end{equation*}
Here $\sigma_1, \sigma_2$ and $\sigma_3$,  
represent Pauli matrices,
while the phases  $\alpha_i \in [0, 2\pi]$ characterizing interaction strength.
The choice of one-qubit gates does not affect the orthogonality relations \eqref{double_u_explicit}, thus we may focus directly on the interaction part. 
Moreover, to avoid over-parametrization, one can restrict the values of parameters $\pi/4 \geq \alpha_1\geq \alpha_2\geq \alpha_3\geq 0$, effectively creating a simplex, which forms half of the
Weyl chamber \cite{ZVSW03,Weyl_chamber}.

Its vertices are:
\begin{equation}
Id =
\left(
\begin{array}{cccc}
 1 & 0 & 0 & 0 \\
 0 & 1 & 0 & 0 \\
 0 & 0 & 1 & 0 \\
 0 & 0 & 0 & 1 \\
\end{array}
\right), 
CNOT =
\left(
\begin{array}{cccc}
 1 & 0 & 0 & 0 \\
 0 & 1 & 0 & 0 \\
 0 & 0 & 0 & 1 \\
 0 & 0 & 1 & 0 \\
\end{array}
\right),
DCNOT =
\left(
\begin{array}{cccc}
 1 & 0 & 0 & 0 \\
 0 & 0 & 0 & 1 \\
 0 & 1 & 0 & 0 \\
 0 & 0 & 1 & 0 \\
\end{array}
\right), 
\textit{SWAP} =
\left(
\begin{array}{cccc}
 1 & 0 & 0 & 0 \\
 0 & 0 & 1 & 0 \\
 0 & 1 & 0 & 0 \\
 0 & 0 & 0 & 1 \\
\end{array}
\right), 
\nonumber
\end{equation}
corresponding to the following cases
\begin{equation*}
\begin{aligned}
Id &\cong (\alpha_1 = \alpha_2 = \alpha_3 = 0), \\
CNOT &\cong (\alpha_1  = \pi/4, \alpha_2 = \alpha_3 = 0), \\
DCNOT &\cong (\alpha_1  = \alpha_2 = \pi/4, \alpha_3 = 0), \\
\textit{SWAP} &\cong (\alpha_1 = \alpha_2 = \alpha_3 = \pi/4).
\end{aligned}
\end{equation*}

It turns out \cite{jonnadula2017impact} that the entire family of  qubit dual unitary matrices \eqref{double_u_explicit} can be characterized by a single parameter $\alpha_3 \in [0,\pi/4]$, with $\alpha_1 = \alpha_2 = \pi/4$. Thus they lie on the edge interpolating between $DCNOT$ and \textit{SWAP} gates. An alternative parametrization of this family, up to local unitaries, is presented in \eqref{blue_u_example}.

The Weyl chamber is more than just a nice way to visualize the set of bipartite unitary gates, it captures its geometry as well. 
For instance, the work \cite{Dual_u_max_dist} proves that the dual unitary gates in any dimension maximize the distance from the product of local gates, represented as $Id$ vertex.

Another worth-mentioning property of dual unitary matrices is their operator Schmidt decomposition into products of matrices acting on both subsystems separately \cite{PhysRevA.67.052301}. Unitarity condition for the matrix with reshuffled indices (\ref{double_u_explicit}, right) implies \cite{Dual_u_max_dist} that all the Schmidt coefficients are equal. This fact assures the maximal entanglement entropy in the space of all unitary matrices in $U(d^2)$, 
which can be interpreted that the dual unitary form the set of 'maximally non-local' bipartite gates.

\section{Appendix D. Decay of two-point correlation function}\label{App:Jordan}

In this Appendix, we generalize the discussion of Corollary  \ref{cor_2point} to non-normal matrices.
To discuss the high powers of such matrices it is convenient to represent them in the Jordan form $M = \beta  J \beta^{-1}$, where $\beta$ is an invertible matrix, and $J$ is a block--diagonal matrix, with each block having repeated eigenvalue of $M$ on the diagonal, and $1$ above the diagonal.
Similarly, as in the case of normal matrices, we focus only on the next to leading eigenvalue, since the terms involving powers of other eigenvalues decay exponentially faster.

Raising the matrix to a power $M^n = \beta  J^n \beta^{-1}$ result in raising each block of $J$ independently. Thus we may focus on one block:  
\begin{equation*}
\begin{bmatrix}
 \lambda_2 & 1 & 0 & \cdots & 0 \\
 0 & \lambda_2 & 1 & \cdots & 0 \\
 0 & 0 & \lambda_2 & \cdots & 0 \\ 
 \vdots & \vdots & \vdots & \ddots & \vdots \\
 0 & 0 & 0 & \cdots & \lambda_2
\end{bmatrix}^n
=\begin{bmatrix}
 \lambda_2^n & \tbinom{n}{1}\lambda_2^{n-1} & \tbinom{n}{2}\lambda_2^{n-2} & \cdots   & \tbinom{n}{k-1}\lambda_2^{n-(k-1)} \\
 0  & \lambda_2^n & \tbinom{n}{1}\lambda_2^{n-1} & \cdots   & \tbinom{n}{k-2}\lambda_2^{n-(k-2)} \\
 0  & 0  & \lambda_2^n & \cdots   & \tbinom{n}{k-3}\lambda_2^{n-(k-3)} \\ 
 \vdots  & \vdots  & \vdots  & \ddots  & \vdots \\
 0  & 0  & 0  & 0   & \lambda_2^n
\end{bmatrix}
\end{equation*}
where $k < d^8 = 256$ is the size of the block.

Notice, that in the limit of large $n$, the dominant element in the block is the upper right corner $\tbinom{n}{k-1}\lambda_2^{n-(k-1)}$.  Indeed, calculating the ratio between it and any other element $\tbinom{n}{l-1}\lambda_2^{n-(l-1)}$, we obtain
\begin{equation}
\label{app:in_block_ratio}
\lim_{n\to \infty}   \frac{\tbinom{n}{l-1}\lambda_2^{n-(l-1)}}{\tbinom{n}{k-1}\lambda_2^{n-(k-1)}} = \lim_{n\to\infty} \frac{(n-(k-1))!}{(n-(l-1))!}\frac{(k-1)!}{(l-1)!} \lambda_2^{k-l} = 0,
\end{equation}
where in the last step we used the fact that $l < k$. Thus in the large $n$ limit, it is sufficient to consider only this matrix $J$ element.

Each inflation step corresponds, for the geodesic path, to enlarging the power of the matrix $M$ by $2$, thus the element of interest is enlarged approximately
\begin{equation}
\label{app:coef_ratio}
\lim_{n\to \infty}   \frac{\tbinom{n+2}{k-1}\lambda_2^{n+2-(k-1)}}{\tbinom{n}{k-1}\lambda_2^{n-(k-1)}} = \lim_{n\to\infty} \frac{(n+2)(n+1)}{(n+3-k)(n+1-k)} \lambda_2^{2} = \lambda_2^{2}.
\end{equation}
The above approximations, valid in the large $n$ limit, correspond to approximating the powers of matrix $M$ by
\begin{equation}
\label{large_n_M}
M^n \approx \frac{1}{D}(\sum_{\mathbf{i}}|\mathbf{i}\mathbf{i}\ra)(\sum_{\mathbf{i}}\la\mathbf{i}\mathbf{i}|) + \tbinom{n}{k-1}\lambda_2^{n-(k-1)} |\mathbf{v_R}\ra\la\mathbf{v_L}|
,\end{equation}
where $\la\mathbf{v_L}|$ and $|\mathbf{v_R}\ra$ are appropriate row and column of $\beta^{-1}$ and $\beta$ respectively, and can be interpreted as generalized eigenvectors of $M$ to the eigenvalue $\lambda_2$. Therefore, by the property \eqref{app:coef_ratio}, we effectively recreated the situation discussed in Corollary \ref{cor_2point}.

In the case of multiple eigenvectors, or Jordan blocks corresponding to the eigenvalue $\lambda_2$, the result also holds because the decay of the terms corresponding to $\lambda_2$ is alike.

\section{Appendix E. Numerical calculation of three-point correlation functions}

The numerical calculation of multiplicative constant $C_{123}$, in the large network limit, was somewhat more complicated than the computation of scaling dimensions.
The scale of the correlation function is dependent on the overlap between the arbitrary probing operators and the tensor network. This introduces an artificial unknown in the normalization of three-point correlation functions, Nevertheless, it is valuable to realize how the bulk affects the correlation function. 
Thus, in the following, for clarity, we set the above-mentioned term to unity.

The main difficulty in  calculations of 
$C_{123}$ lies in the necessity to obtain the appropriate generalized eigenvector corresponding to the subleading $\lambda_2$. In the generic scenario, there is no guarantee, that matrix $M$, corresponding to the simplified path's node (see Fig. \ref{fig:twopoint} (b)), is normal, or has only one generalized eigenvector to $\lambda_2$.
Furthermore, the Jordan decomposition is famously numerically unstable by its very nature, thus we had to perform some estimations to obtain an approximated form of the desired vector.

For each choice of unitary matrices building hyper-invariant tensor, we started by calculating the matrix $M$, corresponding to the simplified path's node, and subtracting the trivial term corresponding to the largest eigenvalue, discussed in the Lemma \ref{Lem:largest_eig}. Next, we divided the leftover by the next to the leading eigenvalue, to mitigate numerical errors, and raised it to a large power. According to the discussion in the previous section \eqref{large_n_M}, the resulting matrix should have a form
\begin{equation}
\label{app:monster}
N := \frac{\left(M - \frac{1}{D}(\sum_{\mathbf{i}}|\mathbf{i}\mathbf{i}\ra)(\sum_{\mathbf{i}}\la\mathbf{i}\mathbf{i}|)\right)^n}{\lambda_2^n} \approx \tbinom{n}{k-1} |\mathbf{v_R}\ra\la\mathbf{v_L}| + O\left(n^{-1}\right)~,
\end{equation}
where the reminder is the worst-case correction from other elements in the same Jordan block, in accordance with \eqref{app:in_block_ratio}.

With this form, we could perform singular value decomposition (SVD) of \eqref{app:monster},\newline $N = U \Sigma V$, with the row of $V$ and column of $U$ corresponding to the largest singular value approximating $\la\mathbf{v_L}|$ and $|\mathbf{v_R}\ra$.
In practice, we chose $n = 20$ and dismissed the cases in which the ratio between the leading and next to leading singular value of $N$ were smaller than $100$ to ensure the accuracy of the approximation. 

\begin{figure}[h!]
\centering

\subcaptionbox{\,\,\,\,\, \hspace{5.5cm} (b)}{
            \includegraphics[height=2.53in]{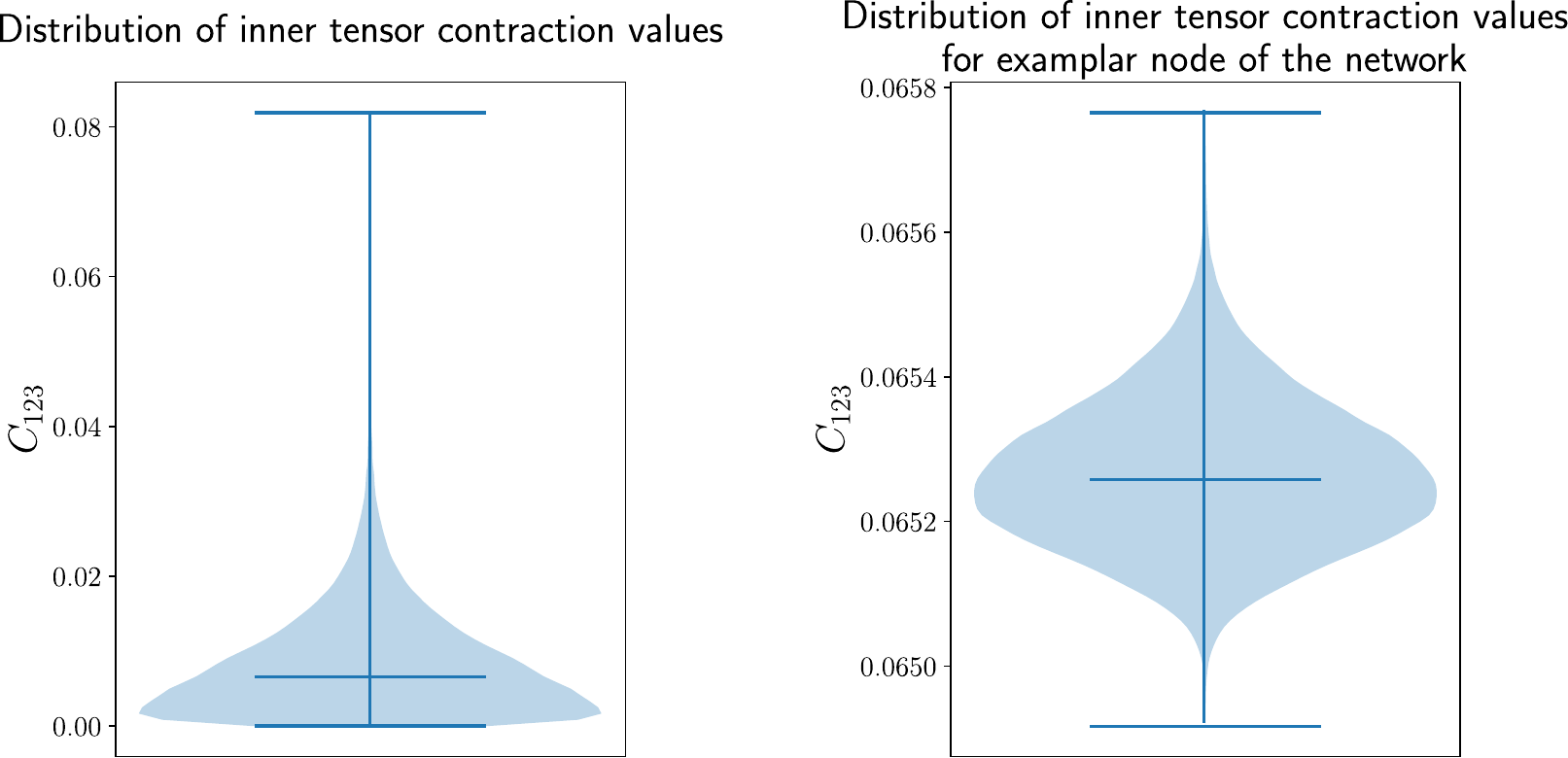}}
    \caption{Violin plots presenting values of contractions of inner tensors with the bulk operator as in Figure \ref{fig:3-point_corr}(b). \textbf{(a)} The distribution obtained from $10^5$ random bulk operators and random tensor building blocks,  drawn as in Figure \ref{fig:distribution_delta_random}. 
    \textbf{(b)} The distribution obtained from $10^5$ samples with tensor constructed form unitaries \eqref{blue_u_example}, \eqref{red_u_example1} with $a = 0.302$, $b = 0.817$ corresponding to minimal $\Delta$ in Figure \ref{fig:distribution_within_families}(c).
    In each case, we draw the bulk operators from Hermitian, positive semi-definite matrices with trace $1$ and considered triples of appropriate left normalized generalized eigenvectors $\la\mathbf{v_L}|$, as discussed above. The maximal, minimal and median values are highlighted.}
    \label{fig:distribution_C}
\end{figure}
We note that the approximations of both $|\mathbf{v_R}\ra$ and $\la\mathbf{v_L}|$ by SVD are normalized, which is not the case in Jordan decomposition. Thus we lost the proportionality factor which cannot be re-obtained without a knowledge of the size $k$ of the largest Jordan block corresponding to $\lambda_2$. However, since there can be more than one block to this eigenvalue, one cannot determine $k$ by simply calculating the multiplicity of generalized eigenvectors of $M$. Thus we return to the problem of performing Jordan decomposition numerically.

To overcome this difficulty one could explicitly contract all the path's nodes and the inner node and then by performing appropriate numerical fits obtain the values of all parameters of interest. However, we find this approach unsatisfactory, since it either doesn't nicely describe a large $n$ limit or is sensitive to numerical errors of multiple contractions, if one considers large values of $n$. 

In Figure \ref{fig:distribution_C} we present numerical explorations of possible values for inner tensor contraction values. To bound the range of obtained values, we restricted the bulk operator to \\ be Hermitian, positive semi-definite with trace $1$. In each scenario, we considered triples of appropriate left normalized generalized eigenvectors $\la\mathbf{v_L}|$. We note that resulting values are real up to numerical precision, which is in accordance with the fact that the original quantity, expectation value for a mapped hermitian operator, must be real as well.

\bibliography{Main}

\begin{thebibliography}{43}%
\makeatletter
\providecommand \@ifxundefined [1]{%
 \@ifx{#1\undefined}
}%
\providecommand \@ifnum [1]{%
 \ifnum #1\expandafter \@firstoftwo
 \else \expandafter \@secondoftwo
 \fi
}%
\providecommand \@ifx [1]{%
 \ifx #1\expandafter \@firstoftwo
 \else \expandafter \@secondoftwo
 \fi
}%
\providecommand \natexlab [1]{#1}%
\providecommand \enquote  [1]{``#1''}%
\providecommand \bibnamefont  [1]{#1}%
\providecommand \bibfnamefont [1]{#1}%
\providecommand \citenamefont [1]{#1}%
\providecommand \href@noop [0]{\@secondoftwo}%
\providecommand \href [0]{\begingroup \@sanitize@url \@href}%
\providecommand \@href[1]{\@@startlink{#1}\@@href}%
\providecommand \@@href[1]{\endgroup#1\@@endlink}%
\providecommand \@sanitize@url [0]{\catcode `\\12\catcode `\$12\catcode `\&12\catcode `\#12\catcode `\^12\catcode `\_12\catcode `\%12\relax}%
\providecommand \@@startlink[1]{}%
\providecommand \@@endlink[0]{}%
\providecommand \url  [0]{\begingroup\@sanitize@url \@url }%
\providecommand \@url [1]{\endgroup\@href {#1}{\urlprefix }}%
\providecommand \urlprefix  [0]{URL }%
\providecommand \Eprint [0]{\href }%
\providecommand \doibase [0]{https://doi.org/}%
\providecommand \selectlanguage [0]{\@gobble}%
\providecommand \bibinfo  [0]{\@secondoftwo}%
\providecommand \bibfield  [0]{\@secondoftwo}%
\providecommand \translation [1]{[#1]}%
\providecommand \BibitemOpen [0]{}%
\providecommand \bibitemStop [0]{}%
\providecommand \bibitemNoStop [0]{.\EOS\space}%
\providecommand \EOS [0]{\spacefactor3000\relax}%
\providecommand \BibitemShut  [1]{\csname bibitem#1\endcsname}%
\let\auto@bib@innerbib\@empty
\bibitem [{\citenamefont {Maldacena}(1998)}]{Maldacena1998}%
  \BibitemOpen
  \bibfield  {author} {\bibinfo {author} {\bibfnamefont {J.~M.}\ \bibnamefont {Maldacena}},\ }\bibfield  {title} {\bibinfo {title} {The large-n limit of superconformal field theories and supergravity},\ }\href {https://doi.org/10.4310/ATMP.1998.v2.n2.a1} {\bibfield  {journal} {\bibinfo  {journal} {ATMP}\ }\textbf {\bibinfo {volume} {2}},\ \bibinfo {pages} {231} (\bibinfo {year} {1998})}\BibitemShut {NoStop}%
\bibitem [{\citenamefont {Witten}(1998)}]{Witten1998}%
  \BibitemOpen
  \bibfield  {author} {\bibinfo {author} {\bibfnamefont {E.}~\bibnamefont {Witten}},\ }\bibfield  {title} {\bibinfo {title} {Anti de {S}itter space and holography},\ }\href {https://doi.org/10.4310/ATMP.1998.v2.n2.a2} {\bibfield  {journal} {\bibinfo  {journal} {ATMP}\ }\textbf {\bibinfo {volume} {2}},\ \bibinfo {pages} {253} (\bibinfo {year} {1998})}\BibitemShut {NoStop}%
\bibitem [{\citenamefont {Jahn}\ and\ \citenamefont {Eisert}(2021)}]{Jahn2021}%
  \BibitemOpen
  \bibfield  {author} {\bibinfo {author} {\bibfnamefont {A.}~\bibnamefont {Jahn}}\ and\ \bibinfo {author} {\bibfnamefont {J.}~\bibnamefont {Eisert}},\ }\bibfield  {title} {\bibinfo {title} {Holographic tensor network models and quantum error correction: a topical review},\ }\href {https://doi.org/10.1088/2058-9565/ac0293} {\bibfield  {journal} {\bibinfo  {journal} {QST}\ }\textbf {\bibinfo {volume} {6}},\ \bibinfo {pages} {033002} (\bibinfo {year} {2021})}\BibitemShut {NoStop}%
\bibitem [{\citenamefont {Pastawski}\ \emph {et~al.}(2015)\citenamefont {Pastawski}, \citenamefont {Yoshida}, \citenamefont {Harlow},\ and\ \citenamefont {Preskill}}]{HaPPy_code}%
  \BibitemOpen
  \bibfield  {author} {\bibinfo {author} {\bibfnamefont {F.}~\bibnamefont {Pastawski}}, \bibinfo {author} {\bibfnamefont {B.}~\bibnamefont {Yoshida}}, \bibinfo {author} {\bibfnamefont {D.}~\bibnamefont {Harlow}},\ and\ \bibinfo {author} {\bibfnamefont {J.}~\bibnamefont {Preskill}},\ }\bibfield  {title} {\bibinfo {title} {Holographic quantum error-correcting codes: toy models for the bulk/boundary correspondence},\ }\href {https://doi.org/10.1007/jhep06(2015)149} {\bibfield  {journal} {\bibinfo  {journal} {JHEP}\ }\textbf {\bibinfo {volume} {06}}\bibinfo  {number} { (06)},\ \bibinfo {pages} {149}}\BibitemShut {NoStop}%
\bibitem [{\citenamefont {Hein}\ \emph {et~al.}(2006)\citenamefont {Hein}, \citenamefont {Dür}, \citenamefont {Eisert}, \citenamefont {Raussendorf}, \citenamefont {Nest},\ and\ \citenamefont {Briegel}}]{perfect_tensors}%
  \BibitemOpen
\bibfield  {number} {  }\bibfield  {author} {\bibinfo {author} {\bibfnamefont {M.}~\bibnamefont {Hein}}, \bibinfo {author} {\bibfnamefont {W.}~\bibnamefont {Dür}}, \bibinfo {author} {\bibfnamefont {J.}~\bibnamefont {Eisert}}, \bibinfo {author} {\bibfnamefont {R.}~\bibnamefont {Raussendorf}}, \bibinfo {author} {\bibfnamefont {M.}~\bibnamefont {Nest}},\ and\ \bibinfo {author} {\bibfnamefont {H.}~\bibnamefont {Briegel}},\ }\bibfield  {title} {\bibinfo {title} {Entanglement in graph states and its applications},\ }\href {https://doi.org/10.3254/1-58603-660-2-115} {\bibfield  {journal} {\bibinfo  {journal} {Proceedings of the International School of Physics "Enrico Fermi"}\ }\textbf {\bibinfo {volume} {162}},\ \bibinfo {pages} {115} (\bibinfo {year} {2006})}\BibitemShut {NoStop}%
\bibitem [{\citenamefont {Goyeneche}\ \emph {et~al.}(2015)\citenamefont {Goyeneche}, \citenamefont {Alsina}, \citenamefont {Latorre}, \citenamefont {Riera},\ and\ \citenamefont {\.Zyczkowski}}]{Goyeneche:2015fda}%
  \BibitemOpen
  \bibfield  {author} {\bibinfo {author} {\bibfnamefont {D.}~\bibnamefont {Goyeneche}}, \bibinfo {author} {\bibfnamefont {D.}~\bibnamefont {Alsina}}, \bibinfo {author} {\bibfnamefont {J.~I.}\ \bibnamefont {Latorre}}, \bibinfo {author} {\bibfnamefont {A.}~\bibnamefont {Riera}},\ and\ \bibinfo {author} {\bibfnamefont {K.}~\bibnamefont {\.Zyczkowski}},\ }\bibfield  {title} {\bibinfo {title} {{Absolutely maximally entangled states, combinatorial designs, and multiunitary matrices}},\ }\href {https://doi.org/10.1103/PhysRevA.92.032316} {\bibfield  {journal} {\bibinfo  {journal} {Phys. Rev. A}\ }\textbf {\bibinfo {volume} {92}},\ \bibinfo {pages} {032316} (\bibinfo {year} {2015})}\BibitemShut {NoStop}%
\bibitem [{\citenamefont {Goyeneche}\ \emph {et~al.}(2018)\citenamefont {Goyeneche}, \citenamefont {Raissi}, \citenamefont {Di~Martino},\ and\ \citenamefont {\ifmmode~\dot{Z}\else \.{Z}\fi{}yczkowski}}]{quantum_perfect_tensors}%
  \BibitemOpen
  \bibfield  {author} {\bibinfo {author} {\bibfnamefont {D.}~\bibnamefont {Goyeneche}}, \bibinfo {author} {\bibfnamefont {Z.}~\bibnamefont {Raissi}}, \bibinfo {author} {\bibfnamefont {S.}~\bibnamefont {Di~Martino}},\ and\ \bibinfo {author} {\bibfnamefont {K.}~\bibnamefont {\ifmmode~\dot{Z}\else \.{Z}\fi{}yczkowski}},\ }\bibfield  {title} {\bibinfo {title} {Entanglement and quantum combinatorial designs},\ }\href {https://doi.org/10.1103/PhysRevA.97.062326} {\bibfield  {journal} {\bibinfo  {journal} {Phys. Rev. A}\ }\textbf {\bibinfo {volume} {97}},\ \bibinfo {pages} {062326} (\bibinfo {year} {2018})}\BibitemShut {NoStop}%
\bibitem [{\citenamefont {Eisert}\ \emph {et~al.}(2010)\citenamefont {Eisert}, \citenamefont {Cramer},\ and\ \citenamefont {Plenio}}]{ECP10}%
  \BibitemOpen
  \bibfield  {author} {\bibinfo {author} {\bibfnamefont {J.}~\bibnamefont {Eisert}}, \bibinfo {author} {\bibfnamefont {M.}~\bibnamefont {Cramer}},\ and\ \bibinfo {author} {\bibfnamefont {M.~B.}\ \bibnamefont {Plenio}},\ }\bibfield  {title} {\bibinfo {title} {Colloquium: Area laws for the entanglement entropy},\ }\href {https://doi.org/10.1103/RevModPhys.82.277} {\bibfield  {journal} {\bibinfo  {journal} {Rev. Mod. Phys.}\ }\textbf {\bibinfo {volume} {82}},\ \bibinfo {pages} {277} (\bibinfo {year} {2010})}\BibitemShut {NoStop}%
\bibitem [{\citenamefont {Ryu}\ and\ \citenamefont {Takayanagi}(2006)}]{Ryu2006}%
  \BibitemOpen
  \bibfield  {author} {\bibinfo {author} {\bibfnamefont {S.}~\bibnamefont {Ryu}}\ and\ \bibinfo {author} {\bibfnamefont {T.}~\bibnamefont {Takayanagi}},\ }\bibfield  {title} {\bibinfo {title} {Holographic derivation of entanglement entropy from the anti--de {S}itter space/conformal field theory correspondence},\ }\href {https://doi.org/10.1103/PhysRevLett.96.181602} {\bibfield  {journal} {\bibinfo  {journal} {Phys. Rev. Lett.}\ }\textbf {\bibinfo {volume} {96}},\ \bibinfo {pages} {181602} (\bibinfo {year} {2006})}\BibitemShut {NoStop}%
\bibitem [{\citenamefont {Cao}\ \emph {et~al.}(2022)\citenamefont {Cao}, \citenamefont {Pollack},\ and\ \citenamefont {Wang}}]{ChunJun2022}%
  \BibitemOpen
  \bibfield  {author} {\bibinfo {author} {\bibfnamefont {C.}~\bibnamefont {Cao}}, \bibinfo {author} {\bibfnamefont {J.}~\bibnamefont {Pollack}},\ and\ \bibinfo {author} {\bibfnamefont {Y.}~\bibnamefont {Wang}},\ }\bibfield  {title} {\bibinfo {title} {Hyperinvariant multiscale entanglement renormalization ansatz: Approximate holographic error correction codes with power-law correlations},\ }\href {https://doi.org/10.1103/physrevd.105.026018} {\bibfield  {journal} {\bibinfo  {journal} {Phys. Rev. D}\ }\textbf {\bibinfo {volume} {105}},\ \bibinfo {pages} {026018} (\bibinfo {year} {2022})}\BibitemShut {NoStop}%
\bibitem [{\citenamefont {Bhattacharyya}\ \emph {et~al.}(2016)\citenamefont {Bhattacharyya}, \citenamefont {Gao}, \citenamefont {Hung},\ and\ \citenamefont {Liu}}]{bhattacharyya2016exploring}%
  \BibitemOpen
  \bibfield  {author} {\bibinfo {author} {\bibfnamefont {A.}~\bibnamefont {Bhattacharyya}}, \bibinfo {author} {\bibfnamefont {Z.-S.}\ \bibnamefont {Gao}}, \bibinfo {author} {\bibfnamefont {L.-Y.}\ \bibnamefont {Hung}},\ and\ \bibinfo {author} {\bibfnamefont {S.-N.}\ \bibnamefont {Liu}},\ }\bibfield  {title} {\bibinfo {title} {Exploring the tensor networks/ads correspondence},\ }\href@noop {} {\bibfield  {journal} {\bibinfo  {journal} {Journal of High Energy Physics}\ }\textbf {\bibinfo {volume} {2016}},\ \bibinfo {pages} {1} (\bibinfo {year} {2016})}\BibitemShut {NoStop}%
\bibitem [{\citenamefont {Bhattacharyya}\ \emph {et~al.}(2018)\citenamefont {Bhattacharyya}, \citenamefont {Hung}, \citenamefont {Lei},\ and\ \citenamefont {Li}}]{bhattacharyya2018tensor}%
  \BibitemOpen
  \bibfield  {author} {\bibinfo {author} {\bibfnamefont {A.}~\bibnamefont {Bhattacharyya}}, \bibinfo {author} {\bibfnamefont {L.-Y.}\ \bibnamefont {Hung}}, \bibinfo {author} {\bibfnamefont {Y.}~\bibnamefont {Lei}},\ and\ \bibinfo {author} {\bibfnamefont {W.}~\bibnamefont {Li}},\ }\bibfield  {title} {\bibinfo {title} {Tensor network and (p-adic) ads/cft},\ }\href@noop {} {\bibfield  {journal} {\bibinfo  {journal} {Journal of High Energy Physics}\ }\textbf {\bibinfo {volume} {2018}} (\bibinfo {year} {2018})}\BibitemShut {NoStop}%
\bibitem [{\citenamefont {Pfeifer}\ \emph {et~al.}(2009)\citenamefont {Pfeifer}, \citenamefont {Evenbly},\ and\ \citenamefont {Vidal}}]{Pfiefer2009}%
  \BibitemOpen
  \bibfield  {author} {\bibinfo {author} {\bibfnamefont {R.~N.~C.}\ \bibnamefont {Pfeifer}}, \bibinfo {author} {\bibfnamefont {G.}~\bibnamefont {Evenbly}},\ and\ \bibinfo {author} {\bibfnamefont {G.}~\bibnamefont {Vidal}},\ }\bibfield  {title} {\bibinfo {title} {Entanglement renormalization, scale invariance, and quantum criticality},\ }\href {https://doi.org/10.1103/PhysRevA.79.040301} {\bibfield  {journal} {\bibinfo  {journal} {Phys. Rev. A}\ }\textbf {\bibinfo {volume} {79}},\ \bibinfo {pages} {040301} (\bibinfo {year} {2009})}\BibitemShut {NoStop}%
\bibitem [{\citenamefont {Holzhey}\ \emph {et~al.}(1994)\citenamefont {Holzhey}, \citenamefont {Larsen},\ and\ \citenamefont {Wilczek}}]{Holzhey1994}%
  \BibitemOpen
  \bibfield  {author} {\bibinfo {author} {\bibfnamefont {C.}~\bibnamefont {Holzhey}}, \bibinfo {author} {\bibfnamefont {F.}~\bibnamefont {Larsen}},\ and\ \bibinfo {author} {\bibfnamefont {F.}~\bibnamefont {Wilczek}},\ }\bibfield  {title} {\bibinfo {title} {Geometric and renormalized entropy in conformal field theory},\ }\href {https://doi.org/10.1016/0550-3213(94)90402-2} {\bibfield  {journal} {\bibinfo  {journal} {Nucl. Phys. B}\ }\textbf {\bibinfo {volume} {424}},\ \bibinfo {pages} {443} (\bibinfo {year} {1994})}\BibitemShut {NoStop}%
\bibitem [{\citenamefont {Evenbly}(2017)}]{Evenbly2017}%
  \BibitemOpen
  \bibfield  {author} {\bibinfo {author} {\bibfnamefont {G.}~\bibnamefont {Evenbly}},\ }\bibfield  {title} {\bibinfo {title} {Hyperinvariant tensor networks and holography},\ }\href {https://doi.org/10.1103/physrevlett.119.141602} {\bibfield  {journal} {\bibinfo  {journal} {Phys. Rev. Lett.}\ }\textbf {\bibinfo {volume} {119}},\ \bibinfo {pages} {141602} (\bibinfo {year} {2017})}\BibitemShut {NoStop}%
\bibitem [{\citenamefont {Jahn}\ \emph {et~al.}(2022{\natexlab{a}})\citenamefont {Jahn}, \citenamefont {Zimborás},\ and\ \citenamefont {Eisert}}]{Jahn2022}%
  \BibitemOpen
  \bibfield  {author} {\bibinfo {author} {\bibfnamefont {A.}~\bibnamefont {Jahn}}, \bibinfo {author} {\bibfnamefont {Z.}~\bibnamefont {Zimborás}},\ and\ \bibinfo {author} {\bibfnamefont {J.}~\bibnamefont {Eisert}},\ }\bibfield  {title} {\bibinfo {title} {Tensor network models of {AdS/qCFT}},\ }\href {https://doi.org/10.22331/q-2022-02-03-643} {\bibfield  {journal} {\bibinfo  {journal} {Quantum}\ }\textbf {\bibinfo {volume} {6}},\ \bibinfo {pages} {643} (\bibinfo {year} {2022}{\natexlab{a}})}\BibitemShut {NoStop}%
\bibitem [{\citenamefont {Steinberg}\ and\ \citenamefont {Prior}(2022)}]{Steinberg2022}%
  \BibitemOpen
  \bibfield  {author} {\bibinfo {author} {\bibfnamefont {M.}~\bibnamefont {Steinberg}}\ and\ \bibinfo {author} {\bibfnamefont {J.}~\bibnamefont {Prior}},\ }\bibfield  {title} {\bibinfo {title} {Conformal properties of hyperinvariant tensor networks},\ }\href {https://doi.org/10.1038/s41598-021-04375-5} {\bibfield  {journal} {\bibinfo  {journal} {Sci. Rep.}\ }\textbf {\bibinfo {volume} {12}},\ \bibinfo {pages} {532} (\bibinfo {year} {2022})}\BibitemShut {NoStop}%
\bibitem [{\citenamefont {Akila}\ \emph {et~al.}(2016)\citenamefont {Akila}, \citenamefont {Waltner}, \citenamefont {Gutkin},\ and\ \citenamefont {Guhr}}]{AWGG16}%
  \BibitemOpen
  \bibfield  {author} {\bibinfo {author} {\bibfnamefont {M.}~\bibnamefont {Akila}}, \bibinfo {author} {\bibfnamefont {D.}~\bibnamefont {Waltner}}, \bibinfo {author} {\bibfnamefont {B.}~\bibnamefont {Gutkin}},\ and\ \bibinfo {author} {\bibfnamefont {T.}~\bibnamefont {Guhr}},\ }\bibfield  {title} {\bibinfo {title} {Particle-time duality in the kicked {I}sing spin chain},\ }\href {https://doi.org/10.1088/1751-8113/49/37/375101} {\bibfield  {journal} {\bibinfo  {journal} {J. Phys. A:}\ }\textbf {\bibinfo {volume} {49}},\ \bibinfo {pages} {375101} (\bibinfo {year} {2016})}\BibitemShut {NoStop}%
\bibitem [{\citenamefont {Bertini}\ \emph {et~al.}(2019)\citenamefont {Bertini}, \citenamefont {Kos},\ and\ \citenamefont {Prosen}}]{BKP19}%
  \BibitemOpen
  \bibfield  {author} {\bibinfo {author} {\bibfnamefont {B.}~\bibnamefont {Bertini}}, \bibinfo {author} {\bibfnamefont {P.}~\bibnamefont {Kos}},\ and\ \bibinfo {author} {\bibfnamefont {T.}~\bibnamefont {Prosen}},\ }\bibfield  {title} {\bibinfo {title} {Exact correlation functions for dual-unitary lattice models in $1+1$ dimensions},\ }\href {https://doi.org/10.1103/PhysRevLett.123.210601} {\bibfield  {journal} {\bibinfo  {journal} {Phys. Rev. Lett.}\ }\textbf {\bibinfo {volume} {123}},\ \bibinfo {pages} {210601} (\bibinfo {year} {2019})}\BibitemShut {NoStop}%
\bibitem [{\citenamefont {Brahmachari}\ \emph {et~al.}(2024{\natexlab{a}})\citenamefont {Brahmachari}, \citenamefont {Rajmohan}, \citenamefont {Rather},\ and\ \citenamefont {Lakshminarayan}}]{BRRL24}%
  \BibitemOpen
  \bibfield  {author} {\bibinfo {author} {\bibfnamefont {S.}~\bibnamefont {Brahmachari}}, \bibinfo {author} {\bibfnamefont {R.~N.}\ \bibnamefont {Rajmohan}}, \bibinfo {author} {\bibfnamefont {S.~A.}\ \bibnamefont {Rather}},\ and\ \bibinfo {author} {\bibfnamefont {A.}~\bibnamefont {Lakshminarayan}},\ }\bibfield  {title} {\bibinfo {title} {Dual unitaries as maximizers of the distance to local product gates},\ }\href {https://doi.org/10.1103/PhysRevA.109.022610} {\bibfield  {journal} {\bibinfo  {journal} {Phys. Rev. A}\ }\textbf {\bibinfo {volume} {109}},\ \bibinfo {pages} {022610} (\bibinfo {year} {2024}{\natexlab{a}})}\BibitemShut {NoStop}%
\bibitem [{\citenamefont {Harris}\ \emph {et~al.}(2018)\citenamefont {Harris}, \citenamefont {McMahon}, \citenamefont {Brennen},\ and\ \citenamefont {Stace}}]{harris2018calderbanksteaneshor}%
  \BibitemOpen
  \bibfield  {author} {\bibinfo {author} {\bibfnamefont {R.~J.}\ \bibnamefont {Harris}}, \bibinfo {author} {\bibfnamefont {N.~A.}\ \bibnamefont {McMahon}}, \bibinfo {author} {\bibfnamefont {G.~K.}\ \bibnamefont {Brennen}},\ and\ \bibinfo {author} {\bibfnamefont {T.~M.}\ \bibnamefont {Stace}},\ }\bibfield  {title} {\bibinfo {title} {Calderbank-steane-shor holographic quantum error correcting codes},\ }\href@noop {} {\bibfield  {journal} {\bibinfo  {journal} {arXiv preprint arXiv: 1806.06472}\ } (\bibinfo {year} {2018})}\BibitemShut {NoStop}%
\bibitem [{\citenamefont {Wang}(2021)}]{planar_k_uniform_states}%
  \BibitemOpen
  \bibfield  {author} {\bibinfo {author} {\bibfnamefont {Y.}~\bibnamefont {Wang}},\ }\bibfield  {title} {\bibinfo {title} {Planar k-uniform states: a generalization of planar maximally entangled states},\ }\href {https://doi.org/10.1007/S11128-021-03204-Y} {\bibfield  {journal} {\bibinfo  {journal} {Quantum Inf. Process.}\ }\textbf {\bibinfo {volume} {20}},\ \bibinfo {pages} {1} (\bibinfo {year} {2021})}\BibitemShut {NoStop}%
\bibitem [{\citenamefont {Steinberg}\ \emph {et~al.}(2023)\citenamefont {Steinberg}, \citenamefont {Feld},\ and\ \citenamefont {Jahn}}]{Steinberg2023}%
  \BibitemOpen
  \bibfield  {author} {\bibinfo {author} {\bibfnamefont {M.}~\bibnamefont {Steinberg}}, \bibinfo {author} {\bibfnamefont {S.}~\bibnamefont {Feld}},\ and\ \bibinfo {author} {\bibfnamefont {A.}~\bibnamefont {Jahn}},\ }\bibfield  {title} {\bibinfo {title} {Holographic codes from hyperinvariant tensor networks},\ }\bibfield  {journal} {\bibinfo  {journal} {Nat. Commun.}\ }\textbf {\bibinfo {volume} {14}},\ \href {https://doi.org/10.1038/s41467-023-42743-z} {10.1038/s41467-023-42743-z} (\bibinfo {year} {2023})\BibitemShut {NoStop}%
\bibitem [{\citenamefont {Huber}\ \emph {et~al.}(2018)\citenamefont {Huber}, \citenamefont {Eltschka}, \citenamefont {Siewert},\ and\ \citenamefont {G\"uhne}}]{HESGW18}%
  \BibitemOpen
  \bibfield  {author} {\bibinfo {author} {\bibfnamefont {F.}~\bibnamefont {Huber}}, \bibinfo {author} {\bibfnamefont {C.}~\bibnamefont {Eltschka}}, \bibinfo {author} {\bibfnamefont {J.}~\bibnamefont {Siewert}},\ and\ \bibinfo {author} {\bibfnamefont {O.}~\bibnamefont {G\"uhne}},\ }\bibfield  {title} {\bibinfo {title} {{Bounds on absolutely maximally entangled states from shadow inequalities, and the quantum MacWilliams identity}},\ }\href {https://doi.org/10.1088/1751-8121/aaade5} {\bibfield  {journal} {\bibinfo  {journal} {J. Phys. A}\ }\textbf {\bibinfo {volume} {51}},\ \bibinfo {pages} {175301} (\bibinfo {year} {2018})}\BibitemShut {NoStop}%
\bibitem [{\citenamefont {Huber}\ and\ \citenamefont {Wyderka}(2019)}]{HW19b}%
  \BibitemOpen
  \bibfield  {author} {\bibinfo {author} {\bibfnamefont {F.}~\bibnamefont {Huber}}\ and\ \bibinfo {author} {\bibfnamefont {N.}~\bibnamefont {Wyderka}},\ }\bibfield  {title} {\bibinfo {title} {Table of {AME} states \& perfect tensors},\ }\href@noop {} {\bibfield  {journal} {\bibinfo  {journal} {Available online at:}\ } (\bibinfo {year} {2019})},\ \Eprint {https://arxiv.org/abs/https://tp.nt.uni-siegen.de/ame/ame.html} {https://tp.nt.uni-siegen.de/ame/ame.html} \BibitemShut {NoStop}%
\bibitem [{\citenamefont {Steinberg}\ \emph {et~al.}(2024)\citenamefont {Steinberg}, \citenamefont {Fan}, \citenamefont {Harris}, \citenamefont {Elkouss}, \citenamefont {Feld},\ and\ \citenamefont {Jahn}}]{Steinberg:2024ack}%
  \BibitemOpen
  \bibfield  {author} {\bibinfo {author} {\bibfnamefont {M.}~\bibnamefont {Steinberg}}, \bibinfo {author} {\bibfnamefont {J.}~\bibnamefont {Fan}}, \bibinfo {author} {\bibfnamefont {R.~J.}\ \bibnamefont {Harris}}, \bibinfo {author} {\bibfnamefont {D.}~\bibnamefont {Elkouss}}, \bibinfo {author} {\bibfnamefont {S.}~\bibnamefont {Feld}},\ and\ \bibinfo {author} {\bibfnamefont {A.}~\bibnamefont {Jahn}},\ }\bibfield  {title} {\bibinfo {title} {Far from perfect: Quantum error correction with (hyperinvariant) {E}venbly codes},\ }\bibfield  {journal} {\bibinfo  {journal} {arXiv preprint: 2407.11926}\ }\href {https://doi.org/https://doi.org/10.48550/arXiv.2407.11926} {https://doi.org/10.48550/arXiv.2407.11926} (\bibinfo {year} {2024})\BibitemShut {NoStop}%
\bibitem [{\citenamefont {Jahn}\ \emph {et~al.}(2022{\natexlab{b}})\citenamefont {Jahn}, \citenamefont {Gluza}, \citenamefont {Verhoeven}, \citenamefont {Singhd},\ and\ \citenamefont {Eisert}}]{Jahn2022_1}%
  \BibitemOpen
  \bibfield  {author} {\bibinfo {author} {\bibfnamefont {A.}~\bibnamefont {Jahn}}, \bibinfo {author} {\bibfnamefont {M.}~\bibnamefont {Gluza}}, \bibinfo {author} {\bibfnamefont {C.}~\bibnamefont {Verhoeven}}, \bibinfo {author} {\bibfnamefont {S.}~\bibnamefont {Singhd}},\ and\ \bibinfo {author} {\bibfnamefont {J.}~\bibnamefont {Eisert}},\ }\bibfield  {title} {\bibinfo {title} {Boundary theories of critical matchgate tensor networks},\ }\href {https://doi.org/10.1007/JHEP04(2022)111} {\bibfield  {journal} {\bibinfo  {journal} {JHEP}\ }\textbf {\bibinfo {volume} {111}},\ \bibinfo {pages} {110}}\BibitemShut {NoStop}%
\bibitem [{\citenamefont {Evenbly}\ and\ \citenamefont {Vidal}(2014)}]{Evenbly2014}%
  \BibitemOpen
  \bibfield  {author} {\bibinfo {author} {\bibfnamefont {G.}~\bibnamefont {Evenbly}}\ and\ \bibinfo {author} {\bibfnamefont {G.}~\bibnamefont {Vidal}},\ }\bibfield  {title} {\bibinfo {title} {Algorithms for entanglement renormalization: Boundaries, impurities and interfaces},\ }\href {https://doi.org/10.1007/s10955-014-0983-1} {\bibfield  {journal} {\bibinfo  {journal} {J. Stat. Phys.}\ }\textbf {\bibinfo {volume} {157}},\ \bibinfo {pages} {931} (\bibinfo {year} {2014})}\BibitemShut {NoStop}%
\bibitem [{\citenamefont {Evenbly}\ and\ \citenamefont {Vidal}(2011)}]{Evenbly2011}%
  \BibitemOpen
  \bibfield  {author} {\bibinfo {author} {\bibfnamefont {G.}~\bibnamefont {Evenbly}}\ and\ \bibinfo {author} {\bibfnamefont {G.}~\bibnamefont {Vidal}},\ }\bibfield  {title} {\bibinfo {title} {Tensor network states and geometry},\ }\href {https://doi.org/10.1007/s10955-011-0237-4} {\bibfield  {journal} {\bibinfo  {journal} {J. Stat. Phys.}\ }\textbf {\bibinfo {volume} {145}},\ \bibinfo {pages} {891–918} (\bibinfo {year} {2011})}\BibitemShut {NoStop}%
\bibitem [{\citenamefont {Jahn}\ \emph {et~al.}(2019)\citenamefont {Jahn}, \citenamefont {Gluza}, \citenamefont {Pastawski},\ and\ \citenamefont {Eisert}}]{Jahn:2019nmz}%
  \BibitemOpen
  \bibfield  {author} {\bibinfo {author} {\bibfnamefont {A.}~\bibnamefont {Jahn}}, \bibinfo {author} {\bibfnamefont {M.}~\bibnamefont {Gluza}}, \bibinfo {author} {\bibfnamefont {F.}~\bibnamefont {Pastawski}},\ and\ \bibinfo {author} {\bibfnamefont {J.}~\bibnamefont {Eisert}},\ }\bibfield  {title} {\bibinfo {title} {{Majorana dimers and holographic quantum error-correcting codes}},\ }\href {https://doi.org/10.1103/PhysRevResearch.1.033079} {\bibfield  {journal} {\bibinfo  {journal} {Phys. Rev. Res.}\ }\textbf {\bibinfo {volume} {1}},\ \bibinfo {pages} {033079} (\bibinfo {year} {2019})}\BibitemShut {NoStop}%
\bibitem [{\citenamefont {Jahn}\ \emph {et~al.}(2020)\citenamefont {Jahn}, \citenamefont {Zimbor\'as},\ and\ \citenamefont {Eisert}}]{Central_charges_and_scaling}%
  \BibitemOpen
  \bibfield  {author} {\bibinfo {author} {\bibfnamefont {A.}~\bibnamefont {Jahn}}, \bibinfo {author} {\bibfnamefont {Z.}~\bibnamefont {Zimbor\'as}},\ and\ \bibinfo {author} {\bibfnamefont {J.}~\bibnamefont {Eisert}},\ }\bibfield  {title} {\bibinfo {title} {{Central charges of aperiodic holographic tensor network models}},\ }\href {https://doi.org/10.1103/PhysRevA.102.042407} {\bibfield  {journal} {\bibinfo  {journal} {Phys. Rev. A}\ }\textbf {\bibinfo {volume} {102}},\ \bibinfo {pages} {042407} (\bibinfo {year} {2020})}\BibitemShut {NoStop}%
\bibitem [{\citenamefont {Ginsparg}(1988)}]{Ginsparg:1988ui}%
  \BibitemOpen
  \bibfield  {author} {\bibinfo {author} {\bibfnamefont {P.~H.}\ \bibnamefont {Ginsparg}},\ }\bibfield  {title} {\bibinfo {title} {Applied conformal field theory},\ }in\ \href@noop {} {\emph {\bibinfo {booktitle} {{Les Houches Summer School in Theoretical Physics: Fields, Strings, Critical Phenomena}}}}\ (\bibinfo {year} {1988})\ \Eprint {https://arxiv.org/abs/hep-th/9108028} {arXiv:hep-th/9108028} \BibitemShut {NoStop}%
\bibitem [{\citenamefont {Conway}(2007)}]{Conway2007}%
  \BibitemOpen
  \bibfield  {author} {\bibinfo {author} {\bibfnamefont {J.~B.}\ \bibnamefont {Conway}},\ }\href {https://doi.org/10.1007/978-1-4757-4383-8} {\emph {\bibinfo {title} {A Course in Functional Analysis}}}\ (\bibinfo  {publisher} {Springer, New York},\ \bibinfo {year} {2007})\BibitemShut {NoStop}%
\bibitem [{\citenamefont {Rather}\ \emph {et~al.}(2022)\citenamefont {Rather}, \citenamefont {Burchardt}, \citenamefont {Bruzda}, \citenamefont {Rajchel-Mieldzio\ifmmode~\acute{c}\else \'{c}\fi{}}, \citenamefont {Lakshminarayan},\ and\ \citenamefont {\ifmmode~\dot{Z}\else \.{Z}\fi{}yczkowski}}]{36_officers_of_Karol}%
  \BibitemOpen
  \bibfield  {author} {\bibinfo {author} {\bibfnamefont {S.~A.}\ \bibnamefont {Rather}}, \bibinfo {author} {\bibfnamefont {A.}~\bibnamefont {Burchardt}}, \bibinfo {author} {\bibfnamefont {W.}~\bibnamefont {Bruzda}}, \bibinfo {author} {\bibfnamefont {G.}~\bibnamefont {Rajchel-Mieldzio\ifmmode~\acute{c}\else \'{c}\fi{}}}, \bibinfo {author} {\bibfnamefont {A.}~\bibnamefont {Lakshminarayan}},\ and\ \bibinfo {author} {\bibfnamefont {K.}~\bibnamefont {\ifmmode~\dot{Z}\else \.{Z}\fi{}yczkowski}},\ }\bibfield  {title} {\bibinfo {title} {Thirty-six entangled officers of {E}uler: Quantum solution to a classically impossible problem},\ }\href {https://doi.org/10.1103/PhysRevLett.128.080507} {\bibfield  {journal} {\bibinfo  {journal} {Phys. Rev. Lett.}\ }\textbf {\bibinfo {volume} {128}},\ \bibinfo {pages} {080507} (\bibinfo {year} {2022})}\BibitemShut {NoStop}%
\bibitem [{\citenamefont {Poland}\ \emph {et~al.}(2019)\citenamefont {Poland}, \citenamefont {Rychkov},\ and\ \citenamefont {Vichi}}]{Poland:2018epd}%
  \BibitemOpen
  \bibfield  {author} {\bibinfo {author} {\bibfnamefont {D.}~\bibnamefont {Poland}}, \bibinfo {author} {\bibfnamefont {S.}~\bibnamefont {Rychkov}},\ and\ \bibinfo {author} {\bibfnamefont {A.}~\bibnamefont {Vichi}},\ }\bibfield  {title} {\bibinfo {title} {{The Conformal Bootstrap: Theory, numerical techniques, and applications}},\ }\href {https://doi.org/10.1103/RevModPhys.91.015002} {\bibfield  {journal} {\bibinfo  {journal} {Rev. Mod. Phys.}\ }\textbf {\bibinfo {volume} {91}},\ \bibinfo {pages} {015002} (\bibinfo {year} {2019})}\BibitemShut {NoStop}%
\bibitem [{\citenamefont {Bistroń}\ \emph {et~al.}(2023)\citenamefont {Bistroń}, \citenamefont {Czartowski},\ and\ \citenamefont {Życzkowski}}]{bistron2023genuinely}%
  \BibitemOpen
  \bibfield  {author} {\bibinfo {author} {\bibfnamefont {R.}~\bibnamefont {Bistroń}}, \bibinfo {author} {\bibfnamefont {J.}~\bibnamefont {Czartowski}},\ and\ \bibinfo {author} {\bibfnamefont {K.}~\bibnamefont {Życzkowski}},\ }\bibfield  {title} {\bibinfo {title} {Genuinely quantum families of 2-unitary matrices},\ }\bibfield  {journal} {\bibinfo  {journal} {arXiv preprint: 2312.17719}\ }\href {https://doi.org/https://doi.org/10.48550/arXiv.2312.17719} {https://doi.org/10.48550/arXiv.2312.17719} (\bibinfo {year} {2023})\BibitemShut {NoStop}%
\bibitem [{\citenamefont {Kraus}\ and\ \citenamefont {Cirac}(2001)}]{2_qubit_gate}%
  \BibitemOpen
  \bibfield  {author} {\bibinfo {author} {\bibfnamefont {B.}~\bibnamefont {Kraus}}\ and\ \bibinfo {author} {\bibfnamefont {J.~I.}\ \bibnamefont {Cirac}},\ }\bibfield  {title} {\bibinfo {title} {Optimal creation of entanglement using a two-qubit gate},\ }\href {https://doi.org/10.1103/PhysRevA.63.062309} {\bibfield  {journal} {\bibinfo  {journal} {Phys. Rev. A}\ }\textbf {\bibinfo {volume} {63}},\ \bibinfo {pages} {062309} (\bibinfo {year} {2001})}\BibitemShut {NoStop}%
\bibitem [{\citenamefont {Hammerer}\ \emph {et~al.}(2002)\citenamefont {Hammerer}, \citenamefont {Vidal},\ and\ \citenamefont {Cirac}}]{HVC01}%
  \BibitemOpen
  \bibfield  {author} {\bibinfo {author} {\bibfnamefont {K.}~\bibnamefont {Hammerer}}, \bibinfo {author} {\bibfnamefont {G.}~\bibnamefont {Vidal}},\ and\ \bibinfo {author} {\bibfnamefont {J.~I.}\ \bibnamefont {Cirac}},\ }\bibfield  {title} {\bibinfo {title} {Characterization of nonlocal gates},\ }\href {https://doi.org/10.1103/PhysRevA.66.062321} {\bibfield  {journal} {\bibinfo  {journal} {Phys. Rev. A}\ }\textbf {\bibinfo {volume} {66}},\ \bibinfo {pages} {062321} (\bibinfo {year} {2002})}\BibitemShut {NoStop}%
\bibitem [{\citenamefont {Zhang}\ \emph {et~al.}(2003)\citenamefont {Zhang}, \citenamefont {Vala}, \citenamefont {Whaley},\ and\ \citenamefont {Sastry}}]{ZVSW03}%
  \BibitemOpen
  \bibfield  {author} {\bibinfo {author} {\bibfnamefont {J.}~\bibnamefont {Zhang}}, \bibinfo {author} {\bibfnamefont {J.}~\bibnamefont {Vala}}, \bibinfo {author} {\bibfnamefont {K.~B.}\ \bibnamefont {Whaley}},\ and\ \bibinfo {author} {\bibfnamefont {S.}~\bibnamefont {Sastry}},\ }\bibfield  {title} {\bibinfo {title} {{Geometric theory of nonlocal two-qubit operations}},\ }\href {https://doi.org/10.1103/PhysRevA.67.042313} {\bibfield  {journal} {\bibinfo  {journal} {Phys. Rev. A}\ }\textbf {\bibinfo {volume} {67}},\ \bibinfo {pages} {042313} (\bibinfo {year} {2003})}\BibitemShut {NoStop}%
\bibitem [{\citenamefont {Mandarino}\ \emph {et~al.}(2018)\citenamefont {Mandarino}, \citenamefont {Linowski},\ and\ \citenamefont {\ifmmode~\dot{Z}\else \.{Z}\fi{}yczkowski}}]{Weyl_chamber}%
  \BibitemOpen
  \bibfield  {author} {\bibinfo {author} {\bibfnamefont {A.}~\bibnamefont {Mandarino}}, \bibinfo {author} {\bibfnamefont {T.}~\bibnamefont {Linowski}},\ and\ \bibinfo {author} {\bibfnamefont {K.}~\bibnamefont {\ifmmode~\dot{Z}\else \.{Z}\fi{}yczkowski}},\ }\bibfield  {title} {\bibinfo {title} {Bipartite unitary gates and billiard dynamics in the {W}eyl chamber},\ }\href {https://doi.org/10.1103/PhysRevA.98.012335} {\bibfield  {journal} {\bibinfo  {journal} {Phys. Rev. A}\ }\textbf {\bibinfo {volume} {98}},\ \bibinfo {pages} {012335} (\bibinfo {year} {2018})}\BibitemShut {NoStop}%
\bibitem [{\citenamefont {Jonnadula}\ \emph {et~al.}(2017)\citenamefont {Jonnadula}, \citenamefont {Mandayam}, \citenamefont {\.Zyczkowski},\ and\ \citenamefont {Lakshminarayan}}]{jonnadula2017impact}%
  \BibitemOpen
  \bibfield  {author} {\bibinfo {author} {\bibfnamefont {B.}~\bibnamefont {Jonnadula}}, \bibinfo {author} {\bibfnamefont {P.}~\bibnamefont {Mandayam}}, \bibinfo {author} {\bibfnamefont {K.}~\bibnamefont {\.Zyczkowski}},\ and\ \bibinfo {author} {\bibfnamefont {A.}~\bibnamefont {Lakshminarayan}},\ }\bibfield  {title} {\bibinfo {title} {Impact of local dynamics on entangling power},\ }\href {https://doi.org/10.1103/PhysRevA.95.040302} {\bibfield  {journal} {\bibinfo  {journal} {Phys. Rev. A}\ }\textbf {\bibinfo {volume} {95}},\ \bibinfo {pages} {040302} (\bibinfo {year} {2017})}\BibitemShut {NoStop}%
\bibitem [{\citenamefont {Brahmachari}\ \emph {et~al.}(2024{\natexlab{b}})\citenamefont {Brahmachari}, \citenamefont {Rajmohan}, \citenamefont {Rather},\ and\ \citenamefont {Lakshminarayan}}]{Dual_u_max_dist}%
  \BibitemOpen
  \bibfield  {author} {\bibinfo {author} {\bibfnamefont {S.}~\bibnamefont {Brahmachari}}, \bibinfo {author} {\bibfnamefont {R.~N.}\ \bibnamefont {Rajmohan}}, \bibinfo {author} {\bibfnamefont {S.~A.}\ \bibnamefont {Rather}},\ and\ \bibinfo {author} {\bibfnamefont {A.}~\bibnamefont {Lakshminarayan}},\ }\bibfield  {title} {\bibinfo {title} {Dual unitaries as maximizers of the distance to local product gates},\ }\href {https://doi.org/10.1103/PhysRevA.109.022610} {\bibfield  {journal} {\bibinfo  {journal} {Phys. Rev. A}\ }\textbf {\bibinfo {volume} {109}},\ \bibinfo {pages} {022610} (\bibinfo {year} {2024}{\natexlab{b}})}\BibitemShut {NoStop}%
\bibitem [{\citenamefont {Nielsen}\ \emph {et~al.}(2003)\citenamefont {Nielsen}, \citenamefont {Dawson}, \citenamefont {Dodd}, \citenamefont {Gilchrist}, \citenamefont {Mortimer}, \citenamefont {Osborne}, \citenamefont {Bremner}, \citenamefont {Harrow},\ and\ \citenamefont {Hines}}]{PhysRevA.67.052301}%
  \BibitemOpen
  \bibfield  {author} {\bibinfo {author} {\bibfnamefont {M.~A.}\ \bibnamefont {Nielsen}}, \bibinfo {author} {\bibfnamefont {C.~M.}\ \bibnamefont {Dawson}}, \bibinfo {author} {\bibfnamefont {J.~L.}\ \bibnamefont {Dodd}}, \bibinfo {author} {\bibfnamefont {A.}~\bibnamefont {Gilchrist}}, \bibinfo {author} {\bibfnamefont {D.}~\bibnamefont {Mortimer}}, \bibinfo {author} {\bibfnamefont {T.~J.}\ \bibnamefont {Osborne}}, \bibinfo {author} {\bibfnamefont {M.~J.}\ \bibnamefont {Bremner}}, \bibinfo {author} {\bibfnamefont {A.~W.}\ \bibnamefont {Harrow}},\ and\ \bibinfo {author} {\bibfnamefont {A.}~\bibnamefont {Hines}},\ }\bibfield  {title} {\bibinfo {title} {Quantum dynamics as a physical resource},\ }\href {https://doi.org/10.1103/PhysRevA.67.052301} {\bibfield  {journal} {\bibinfo  {journal} {Phys. Rev. A}\ }\textbf {\bibinfo {volume} {67}},\ \bibinfo {pages} {052301} (\bibinfo {year} {2003})}\BibitemShut {NoStop}%
\end{thebibliography}%

\end{document}